\newcommand{\Z}{{\cal{Z}}}
\newcommand{\ExtendedVersion}[1]{\ifthenelse{\boolean{Extended}}{}{#1}}
\newcommand{\NotForExtendedVersion}[1]{\ifthenelse{\boolean{Extended}}{#1}{}}
\newenvironment{proof}      
{\par\noindent\textbf{Proof:}}{\eop\smallskip\vskip 3 pt}  
\begin{document}

\ititle{\bf Dynamical Properties of a Two-gene Network with Hysteresis}
\iauthor{
  Qin Shu \\
  {\normalsize shuq@email.arizona.edu} \\
    Ricardo G. Sanfelice \\
  {\normalsize sricardo@u.arizona.edu}}
\idate{\today{}} 
\iyear{2014}
\irefnr{001. {\bf Status}: NOT PUBLISHED. {\bf Readers of this material have the responsibility to inform all of the authors promptly if they wish to reuse, modify, correct, publish, or distribute any portion of this report.}}
\makeititle

\title{\IfConf{
Modeling and Analysis of a Genetic Network\\ with Binary Hysteresis using Hybrid System Tools
}
{
\bf Dynamical Properties of a Two-gene Network with Hysteresis
}
}

\author{Qin Shu and Ricardo G. Sanfelice\thanks{Q. Shu and R. G. Sanfelice are with
the Department of Aerospace and Mechanical Engineering, University of Arizona
1130 N. Mountain Ave, AZ 85721.
      Email: {\tt\small shuq@email.arizona.edu, sricardo@u.arizona.edu.}
      This research has been partially supported by the National Science Foundation under CAREER Grant no. ECS-1150306 and by the Air Force Office of Scientific Research under Grant no. FA9550-12-1-0366.
}}

\maketitle

\begin{abstract}
A mathematical model for a two-gene regulatory network is 
derived and several of their properties analyzed. 
Due to the presence of mixed continuous/discrete dynamics and hysteresis,
we employ a hybrid systems model to capture the dynamics of the system. 
The proposed model incorporates binary hysteresis with different thresholds capturing the interaction between the genes. 
We analyze properties of the solutions and asymptotic stability of equilibria in the system as a function of its parameters. 
Our analysis reveals the presence of limit cycles for a certain range of parameters, behavior that is associated with hysteresis. The set of points defining the limit cycle is characterized
and its asymptotic stability properties are studied. 
Furthermore, the stability property of the limit cycle is robust to 
small perturbations.
Numerical simulations are presented to illustrate the results.
\end{abstract}

\NotForConf{
\newpage

\tableofcontents

\newpage
}

\section{Introduction}

\IfConf{
Several mathematical models have been proposed in the literature for the study of genetic regulatory networks; see \cite{de2002modeling} for a survey.
In particular, boolean models are typically used to capture the dynamics of discrete switches in such networks.  
As introduced by Glass and Kauffman in \cite{98}, Boolean regulation functions, typically modeled as sigmoidal or step functions, can be combined with linear system models to enforce certain logic rules. The properties of such class of piecewise linear models have been studied in the mathematical biology literature, e.g.,  \cite{97,96,99,95}. Snoussi presented a discrete mapping approach in \cite{97} to study the qualitative properties of the dynamics of genetic regulatory networks. In his work, the properties of the discrete mapping were studied to determine stable isolated steady states as well as limit cycles. In  \cite{96}, Gouz$\acute{e}$ and Sari employ the concept of Filippov solution to study piecewise linear models of genetic regulatory networks with discontinuities occurring on hyperplanes defined by thresholds on the variables. 
 Chaves and coauthors \cite{99} studied the robustness of Boolean models of gene control networks. 
In \cite{95}, de Jong and coauthors presented a method for qualitative simulation of genetic regulatory networks based on the piecewise linear model of \cite{98}. Genetic regulatory networks with continuous dynamics coupled with switching can be written as a hybrid system. In \cite{94} and \cite{93}, the authors apply hybrid systems tools to model a variety of cell biology problems. 

Although it is an important phenomenon present
in genetic regulatory networks, 
hysteresis behavior is not usually included in models of such networks.
Hysteresis is characterized by behavior in which, for instance,
once a gene has been inhibited due to the concentration of cellular protein reaching a particularly low value, a higher value of cellular protein concentration is required to express it. 
In his survey paper on the impact of genetic modeling on tumorigenesis and drug discovery \cite{92}, Huang stated that {\em ``hysteresis is a feature that a synthetic model has to capture.''} Through experiments, Das and coauthors \cite{91} demonstrated the existence of hysteresis in lymphoid cells and the interaction of continuous evolution of some cellular proteins.  Hysteresis was also found to be present in mammalian genetic regulatory networks; see, e.g., \cite{90,89}. More importantly, hysteresis is a key mechanism contributing to oscillatory behavior in biological models \cite{88}, \cite{87}. 

In this paper, we propose a hybrid system model that captures both continuous and discrete dynamics of a genetic regulatory network with binary hysteresis. We combine the methodology of piecewise linear modeling of genetic regulatory networks with the framework of hybrid dynamical systems in \cite{84}, and construct a hybrid system model for a genetic network with two genes; see Section~\ref{sec:2}. 
Unlike piecewise linear models, our model incorporates hysteresis explicitly. 
We prove existence of solutions to the genetic network, a property that is typically overlooked or difficult to prove due to the discontinuity in the dynamics introduced by boolean variables.
In Section \ref{sec:3}, we compute the equilibria of the system in terms of its parameters. We analyze the asymptotic stability of the isolated equilibrium points and determine conditions under which a limit cycle exists.  It is found that, for a particular set of parameters, hysteresis is the key mechanism leading to oscillations, as without hysteresis, the limit cycle converges
to an isolated equilibrium point (cf. \cite{97}).  The stability of the limit cycle is established using a novel approach consisting of measuring the distance between solutions
of hybrid systems (rather than the distance to the limit cycle as in classical continuous-time systems).  The asymptotic stability of the limit cycle is found to be robust to small perturbations.
In Section \ref{sec:4}, simulations validating some of our results are presented.
}
{
\subsection{Mathematical modeling of genetic regulatory networks}

In recent years, the development of advanced experimental techniques in molecular biology has led to a growing interest in mathematical modeling methods for the study of genetic regulatory networks; see \cite{de2002modeling} for a literature review. A number of gene regulatory network models have been proposed to capture their main properties \cite{99}, \cite{98}, \cite{97}, \cite{96}, \cite{95}, \cite{94}, \cite{93}. Boolean models capture the dynamics of the discrete switch in genetic networks.  
As introduced by Glass and Kauffman in \cite{98}, Boolean regulation functions, typically modeled as sigmoidal or step functions, can be combined with linear system models to enforce certain logic rules. The properties of such a class of piecewise linear models have been studied in the mathematical biology literature, e.g.,  \cite{97,96,99,95}. Snoussi presented a discrete mapping approach in \cite{97} to study the qualitative properties of the dynamics of genetic regulatory networks. In this work, the properties of the discrete mapping were studied to determine stable isolated steady states as well as limit cycles. In  \cite{96}, Gouz$\acute{e}$ and Sari employ the concept of Filippov solution to study piecewise linear models of genetic regulatory networks with discontinuities occurring on hyperplanes defined by thresholds on the variables. 
 Chaves and coauthors \cite{99} studied the robustness of Boolean models of gene control networks. 
de Jong and coauthors \cite{95} presented a method for qualitative simulation of genetic regulatory networks based on the piecewise linear model of \cite{98}. Genetic regulatory networks with continuous dynamics coupled with switching can be written as a hybrid system. In \cite{94} and \cite{93}, the authors apply hybrid systems tools to model a variety of cell biology problems. More recently, hybrid models have been used in \cite{NoelVincent.2012} for the study of molecular interactions.
It is important to note that
hysteresis behavior, which is typically present in genetic regulatory networks, has not been considered
in the models mentioned above.
   
\subsection{The role of hysteresis in genetic regulatory networks}

Hysteresis is an important phenomenon in genetic regulatory networks. It is characterized by behavior in which, for instance,
once a gene has been inhibited due to the concentration of cellular protein reaching a particularly low value,
a higher value of cellular protein concentration is required to express it. In his survey paper on the impact of genetic modeling on tumorigenesis and drug discovery \cite{92}, Huang stated that {\em ``hysteresis is a feature that a synthetic model has to capture.''} Through experiments, Das and coauthors \cite{91} demonstrated the existence of hysteresis in lymphoid cells and the interaction of continuous evolution of some cellular proteins.  Hysteresis was also found to be present in mammalian genetic regulatory networks; see, e.g., \cite{90,89}. More importantly, it has been observed that hysteresis is a key mechanism contributing to oscillatory behavior in computational biological models \cite{88}, \cite{87}. On the other hand, it is well known that hysteresis is one of the key factors that makes a system robust to noise and parametric uncertainties \cite{86}, \cite{85}.

\subsection{Contributions and organization of the paper}

Our work is motivated by the following facts:
\begin{enumerate}
\item {\em Piecewise linear models do not incorporate hysteresis, although
it plays a key role in the dynamics of genetic regulatory networks.  In fact, as we establish in this paper, hysteresis leads to oscillatory, robust behavior in two-gene networks}.
\item {\em The discontinuities introduced by the Boolean regulation functions yield a non-smooth dynamical system, for which classical analysis tools cannot be applied to study existence of solutions, stability, robustness, etc.}
\end{enumerate}
Motivated by these two limitations, we propose a hybrid system model that captures both continuous and discrete dynamics of genetic regulatory networks with hysteresis behavior. We combine the methodology of piecewise linear modeling of genetic regulatory networks with the framework of hybrid dynamical systems in \cite{84}, and construct a hybrid system model for a genetic network with two genes. Our model incorporates hysteresis explicitly, which we found leads to limit cycles. We prove existence of solutions and compute the equilibrium points in terms of parameters for the system. We analyze the stability of the isolated equilibrium points and determined conditions under 
which a limit cycle exists.  It is found that hysteresis is the key mechanism leading to hysteresis, as without hysteresis, the limit cycle converges
to an isolated equilibrium point (cf. \cite{97}).  The stability of the limit cycle is established using a novel approach consisting of measuring distance between solutions
of hybrid systems (rather than the distance to the limit cycle as in classical continuous-time systems).  Moreover, we show that the asymptotic stability of the limit cycle is robust to small perturbations.

The remainder of this paper is organized as follows. In Section~\ref{sec:2}, a mathematical framework of hybrid dynamical system is introduced 
and then applied to model a two-gene network. The analysis of existence of solutions, stability, and robustness are presented in Section \ref{sec:3}. 
Section \ref{sec:4} presents simulations validating our results. 
%
}
\section[A Hybrid Model for Genetic Networks w/Hysteresis]{A Hybrid Systems Model for Genetic Regulatory Networks with Hysteresis}
\label{sec:2}

Models of genetic regulatory networks given by piecewise-linear differential equations have been proposed in \cite{93}, \cite{82}.  Such models take the form 
\footnote{The notation $x \geq0$ is equivalent to $x_i \geq 0$ for each $i$.}
\begin{equation}\label{eqn:1}
\dot{x}=f(x)-\gamma x, \qquad x\geq 0,
\end{equation}
where $x=[x_1, x_2, \ldots, x_n]^\top$ and $x_i$ represents the concentration of the protein in the $i$-$th$ cell,  $f=[f_1, f_2, \ldots, f_n]^\top$ is a function, $\gamma=[\gamma_1,\gamma_2, \ldots, \gamma_n]^\top$ is a vector of constants, and $1\leq i \leq n$. For each $i$, $f_i$ is a function representing the rate of synthesis, while $\gamma_i$ represents the degradation rate constant of the protein. The function $f_i $ is typically defined as the linear combination
$f_i(x)=\sum_{\ell\in L}k_{i\ell}b_{i\ell}(x)$
where $k_{i\ell}$ is the nonzero and nonnegative growth rate constants, $b_{i\ell}$ is a Boolean regulation function that describes the gene regulation logic, and $L=\{1,2 \dots, n\}$ is the set of indices of regulation functions.

The modeling strategy for the Boolean regulation functions $b_{il}$ is a key element that captures the behavior of a particular genetic regulatory network. A major feature of a genetic regulatory network is the presence of threshold-like relationships between the system variables, i.e., if a variable $x_i$ is above (or below) a certain level, it could cause little or no effect on another variable $x_j,$ whereas if $x_i$ is below (or above) this certain value, the effect on $x_j$ would become more significant (for example, it may increase the value of $x_j$ or inhibit the growth of the value of $x_j$). Boolean regulation functions can be modeled by sigmoidal or step functions, an approach that was first proposed by Glass and Kauffmann \cite{98}. When modeling as a step function, the functions $b_{i\ell}$ are given by the combination (linear or nonlinear) of
\begin{equation}
\label{eqn:4}
s^+(x_i,\theta)=\left\{\begin{array}{ll}
1&\textrm{if $x_i \geq\theta$}\\
0&\textrm{if $x_i< \theta$}
\end{array}\right., \quad
s^{-}(x_i,\theta)=1-s^{+}(x_i, \theta),
\end{equation}
where $s^{+}(x_i, \theta)$ represents the logic for gene expression when the protein concentration exceeds a threshold $\theta$, while $s^{-}(x_i, \theta)$ represents the logic for gene inhibition.

To illustrate this modeling approach, let us consider the genetic regulatory network  shown in Figure \ref{fig:1}. Genes ${\emph a}$ and {\emph b} encode proteins {\emph A} and {\emph B}, respectively. When the concentration of protein {\emph A} is below certain threshold, it will inhibit gene {\emph b}. Similarly,  protein {\emph B} inhibits gene {\emph a} when the concentration of protein {\emph B} is above certain threshold. In this way, a set of piecewise-linear differential equations representing the behavior in Figure \ref{fig:1} is given by
\begin{equation}\label{eqn:PWsystem}
\begin{array}{lll}
\dot{x}_1=k_1s^-(x_2, \theta_2)-\gamma_1x_1, \IfConf{\ \ }{\qquad}
\dot{x}_2=k_2s^+(x_1, \theta_1)-\gamma_2x_2,
\end{array}
\end{equation}
where $x_1$ is representing the concentration of protein $\emph A$, while $x_2$ is the concentration of protein {\emph B}. The constants $\theta_1,$ $\theta_2$ are the thresholds associated with concentrations of protein $\emph A$ and $\emph B,$ respectively.

\begin{figure}[h]
\begin{center}
  \psfragfig*[width=0.7\columnwidth]{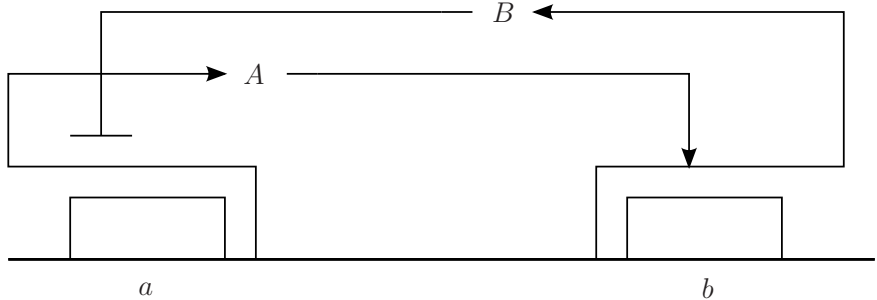}
   {
\psfrag{a}[][][0.9]{$a$}
\psfrag{b}[][][0.9]{$b$}
\psfrag{A}[][][0.9]{$A$}
\psfrag{B}[][][0.9]{$B$}
  }
\end{center}
\caption{\emph{A genetic regulatory network of two genes (a and b), each encoding for a protein (A and B). Lines ending in arrows represent genetic expression triggers, while lines ending in flatheads refer to genetic inhibition triggers.}}
\label{fig:1}
\end{figure}

In this model, gene $\emph a$ is expressed at a rate $k_1$ when $x_2$ is below the threshold $\theta_2$. Similarly, gene $\emph b$ is expressed at a rate $k_2$ when $x_1$ is above the threshold $\theta_1$. 
Degradations of both proteins are assumed to be proportional to their own concentrations, a mechanism that is captured by $-\gamma_1x_1$  and $-\gamma_2x_2,$ respectively. 

Note that the model in \eqref{eqn:PWsystem} capturing the interaction between gene {\em a} and gene {\em b} does not incorporate binary hysteresis. Furthermore, due to the discontinuities introduced by the Boolean regulation functions, it is not straightforward to argue that solutions to \eqref{eqn:PWsystem} exist from every initial value of $x$. 
In order to overcome such limitations, we propose a hybrid system with hysteresis for this two gene genetic regulatory network, to which hybrid systems tools for analysis of existence of solutions and asymptotic stability can be applied.

\makeatletter
\newcommand{\rmnum}[1]{\romannumeral #1}
\newcommand{\Rmnum}[1]{\expandafter\@slowromancap\romannumeral #1@}
\makeatother

\NotForConf{
\subsection{Introduction to Hybrid System Modeling}
\label{sec:2.1}

Following \cite{84} and \cite{83}, a hybrid system in this paper is defined by four objects:
\begin{itemize}
\item A set $C\subset\mathbb{R}^{n}$, called the {\em flow set}.
\item A set $D\subset\mathbb{R}^{n}$, called the {\em jump set}.
\item A single-valued mapping $F$: $\mathbb{R}^{n}\rightarrow\mathbb{R}^{n}$, called the {\em flow map}. 
\item A set-valued mapping $G$: $\mathbb{R}^{n}\rightrightarrows\mathbb{R}^{n}$, called the {\em jump map}.
\end{itemize}
The flow map $F$ defines the continuous dynamics on the flow set $C$, while the jump map $G$ defines the discrete dynamics or jumps on the jump set $D$. These objects are referred to as the data of the hybrid system $\cal{H}$. Then, defining $z\in\mathbb{R}^n$ to be the state of the system, $\cal{H}$ can be written in the compact form
$$
{\cal H}: \left\{\begin{array}{ll}
\dot{z}= F(z)&\textrm{$z\in C$}\\
z^+\in G(z) &\textrm{$z\in D$}\end{array}\right.
$$

Solutions to hybrid systems are given by hybrid arcs which are trajectories defined on hybrid time domains.
\begin{definition}[hybrid time domain]
\label{hybrid time domain definition} 
 A set $E$ is a {\em hybrid time domain} if for all $(T,J)\in E, E \cap ([0,T] \times \{0, 1, ... , J\} )$ is a compact hybrid time domain; that is, it can be written as $\cup_{j=0}^{j-1}([t_j, t_{j+1}], j)$ for some finite sequence of times $0\leq t_0 \leq t_1\leq \ldots \leq t_j$. 
\end{definition}

\begin{definition}[hybrid arc]
\label{def:hybrid arc}
A hybrid arc $\phi$ is a function that takes values from $\mathbb{R}^n$, is defined on a hybrid time domain $\dom \phi$, and is such that $t\mapsto\phi (t, j)$ is locally absolutely continuous for every $j$, $(t, j)\in \dom \phi$. 
\end{definition}

Hybrid time domains impose a specific structure on the domains of solutions to hybrid systems.
In simple words, solutions to $\HS$ are defined on intervals of flow $[t_j,t_{j+1}]$ 
indexed by the jump time $j$ when $t_{j+1} > t_j$.  Hybrid arcs specify the functions 
that define solutions to hybrid systems when the following conditions are satisfied.
We refer the reader to \cite{83,84} for more details on the definition of solutions to hybrid systems.
 
\begin{definition}[solution]
\label{def:solution}
A hybrid arc $\phi$ is a solution to the hybrid system ${\cal H}$ if $\phi (0, 0)\in\overline{C}\cup D$ and \\
 $(S1)$ For all $j\in\mathbb{N}:= \{0,1,2,\ldots\}$ and almost all $t$ such that $(t, j)\in  \dom \phi$,$$\phi (t, j)\in C, \quad\phi (t, j)= F(\phi (t, j))$$\\
$(S2)$ For all $(t, j)\in  \dom \phi$ such that $(t, j+1)\in  \dom \phi$, $$\phi (t, j)\in D,\quad\phi (t, j+1)\in G(\phi (t, j))$$
\end{definition}

Solutions to hybrid systems are classified as follows:
\begin{itemize}
\item A solution $\phi$ to ${\cal H}$ is said to be nontrivial if $\dom \phi$ contains at least two points.
\item A solution $\phi$ to ${\cal H}$ is said to be complete if $\dom \phi$ is unbounded.
\item A solution $\phi$ to ${\cal H}$ is said to be Zeno if it is complete and the projection of $ \dom \phi$ onto $\mathbb{R}^n_{\geq0}$ is bounded.
\item A solution $\phi$ to ${\cal H}$ is said to be maximal if there does not exist another solution $\varphi$ to ${\cal H}$ such that $\dom \varphi$ is a proper subset of $\dom \phi$, and $\varphi (t, j)=\phi (t, j)$ for all $(t, j)\in \dom \phi$.
\end{itemize}
The reader is referred to \cite{84} and \cite{83} for more details on this hybrid system framework.

\subsection{Modeling of a Two-Gene Network}
\label{sec:2.2}
}

\IfConf{
We model the genetic network in \eqref{eqn:PWsystem} as a hybrid system $\cal{H}$
within the formalism of \cite{83,84}, where hybrid systems are given in terms of 
a flow map $F$, a flow set $C$, a jump map $G$, and a jump set $D$, and solutions
are parameterized by flow time $t$ and jump time $j$. To this end,
}
{
To model the genetic network in \eqref{eqn:PWsystem} as a hybrid system $\cal{H}$, 
}
two discrete logic variables, $q_1$ and $q_2$, are introduced. The dynamics of these variables depend on the thresholds, $\theta_1$ and $\theta_2$, respectively. As one of our goals is to introduce binary hysteresis in the model in \eqref{eqn:PWsystem}, we define hysteresis level constants $h_1$ and $h_2$  associated with gene $\emph a$ and gene $\emph b$, respectively. In this way, $q_i$ is governed by dynamics such that the evolution in Figure~\ref{fig:update} holds.

\begin{figure}[htbp]
\begin{center}
\psfragfig*[width=0.55\columnwidth]{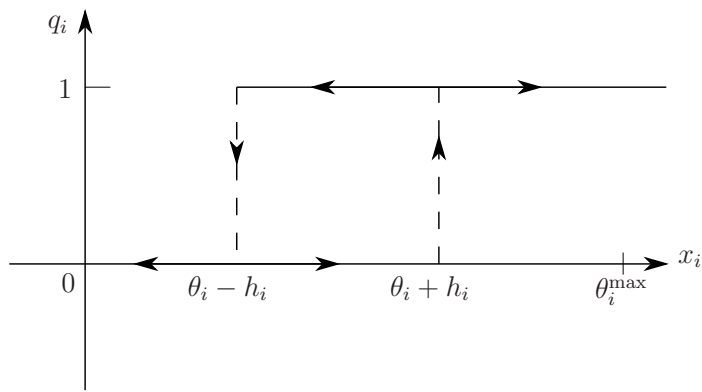}
  {
\psfrag{q}[][][0.9]{$q_i$}
\psfrag{0}[][][0.9]{$0$}
\psfrag{xi}[][][0.9]{$x_i$}
\psfrag{1}[][][0.9]{$1$}
\psfrag{t1}[][][0.9]{$\theta_i-h_i$}
\psfrag{t2}[][][0.9]{$\theta_i+h_i$}
\psfrag{t3}[][][0.9]{$\theta_i^{\max}$}
  }
\end{center}
\caption{\emph{The update mechanism of $q_i$ as a function of $x_i$ and previous values of $q_i.$}}
\label{fig:update}
\end{figure}

The state of the hybrid system is defined as
\begin{equation}\non
z=[x_1, x_2, q_1, q_2]^\top,
\end{equation}
where $z\in \Z:=\mathbb{R}^{2}_{\geq0}\times\{0, 1\}^2$;  $x_1$, $x_2$ are (nonnegative) continuous states representing protein concentrations; and $q_1$, $q_2$ are discrete variables. Here, $\mathbb{R}_{\geq0} := [0,+\infty)$.
We specify constants $\theta_1$ and $\theta_2$, usually inferred from biological data, satisfying $0<\theta_1<\theta_1^{\max}, 0<\theta_2<\theta_2^{\max}$, where $\theta^{\max}_1$ and $\theta_2^{\max}$ are the maximal value of the concentration of protein {\emph A} and of the protein {\emph B}, respectively.

To define the continuous dynamics of the hybrid system capturing the evolution of \eqref{eqn:PWsystem}, we rewrite the piecewise-linear differential equation \eqref{eqn:PWsystem} by replacing the $s^+$ term with the logic variables $q_i$, and the $s^-$ term with the complement of the logic variable $q_i$, i.e., $1-q_i$. Note that the discrete logic variables $q_i$ only change at jumps, i.e., they are constants during flows. Then, $\dot{q}_i=0.$ In this way, the continuous dynamics are governed by the differential equation
\begin{equation}\non
\begin{array}{lll}
\dot{x}_1=k_1(1-q_2)-\gamma_1x_1, \qquad
\dot{x}_2=k_2q_1-\gamma_2x_2, \qquad \IfConf{\\}{}
\dot{q}_1=\dot{q}_2=0,
\end{array}
\end{equation}
from where we obtain the flow map
\begin{equation}
F(z)=\matt{
k_1(1-q_2)-\gamma_1x_1\\
k_2q_1-\gamma_2x_2\\
0\\ 0}.
\end{equation}
Now, we describe the discrete update of the state vector $z$, i.e., we define $G$ and $D$. To illustrate this construction, we explain how to model  the mechanism in Figure~\ref{fig:update} for $q_1$. When
$$q_1=0\quad \mbox{ and }\quad  x_1 = \theta_1+h_1$$
the state $q_1$ is updated to 1. We write this update law as
$$q^+_1=1.$$
When
$$q_1=1\quad \mbox{ and }\quad x_1 = \theta_1-h_1,$$
then the state $q_1$ is updated to 0, i.e.,
$$q_1^+=0.$$
It follows that the mechanism of $q_1$ in Figure \ref{fig:update} can be captured by triggering jumps when the components of $z$ satisfy
$$ q_1=0, \ \  x_1 = \theta_1+h_1\qquad \mbox{ or } \qquad q_1=1,\ \  x_1 = \theta_1-h_1$$
Note that the update mechanism for $q_2$ is similar to that of $q_1$ just discussed.

We can define the flow and jump sets in a compact form by defining functions
$$\eta_1(x_1, q_1):=(2q_1-1)(-x_1+\theta_1+(1-2q_1)h_1)$$
$$\eta_2(x_2, q_2):=(2q_2-1)(-x_2+\theta_2+(1-2q_2)h_2).$$
In this way, 
the flow set is given by
\begin{equation}\label{eqn:C}
C:=\{z \in \Z:\eta_1(x_1, q_1)\leq 0, \eta_2(x_2, q_2)\leq0\}
\end{equation}
and 
the jump set is given by

\begin{eqnarray}\label{eqn:D}
D=\{z \in C: 
\eta_1(x_1, q_1) =  0\} \cup\{z \in C: 
\eta_2(x_2, q_2)  = 0\} 
\end{eqnarray}

To define the jump map, first note that at jumps, the continuous states $x_1$ and $x_2$ do not change. Then, we conveniently define 
$$g_1(z):=\matt{
x_1\\x_2\\1-q_1\\q_2
},\quad g_2(z):=\matt{
x_1\\x_2\\q_1\\1-q_2
},$$
so that the jump map $G$ is given by 
\begin{equation}\label{eqn:G}
\
G(z):=
\begin{cases}
g_1(z)&   \mbox{ if } 
\eta_1(x_1,q_1) =  0, \eta_2(x_2,q_2) < 0\\
g_2(z)&   \mbox{ if } 
\eta_1(x_1,q_1) <  0,
\eta_2(x_2,q_2) =  0 \\
\{g_1(z), g_{2}(z)\}& \mbox{ if } \eta_1(x_1, q_1) =  0,\eta_2(x_2, q_2) =  0.
\end{cases}
\end{equation}

The above definitions determine a hybrid system for \eqref{eqn:PWsystem},
which is given by
\begin{equation}\label{eqn:H2}
{\cal H}: z\in \Z \left\{
\begin{array}{ll}
\dot{z}= F(z) = \matt{
k_1(1-q_2)-\gamma_1x_1\\
k_2q_1-\gamma_2x_2\\
0\\
0}
& z\in C\\
z^+\in G(z)&  z\in D,
\end{array}\right.
\end{equation}
where $C$ is in \eqref{eqn:C}, $G$ is in \eqref{eqn:G}, and $D$ is in \eqref{eqn:D}.
Its parameters are given by the positive constants
$k_1$, $k_2$, $\gamma_1$, $\gamma_2$, 
$\theta_1$, $\theta_2$, 
$h_1$, $h_2$, 
which satisfy
$\theta_1 + h_1 < \theta_1^{\max}$, $\theta_2 +h_2  < \theta_2^{\max}$, 
$\theta_1 - h_1 > 0$, $\theta_2 - h_2 > 0$.
\NotForConf{
Figure~\ref{fig:Balloon} depicts a hybrid automaton representation of this system
when sequentially transitioning between $(q_1,q_2) = (0,0), (1,0), (1,1), (0,1)$.

\begin{figure}[h]
\begin{center}
\psfragfig*[width=0.7\columnwidth]{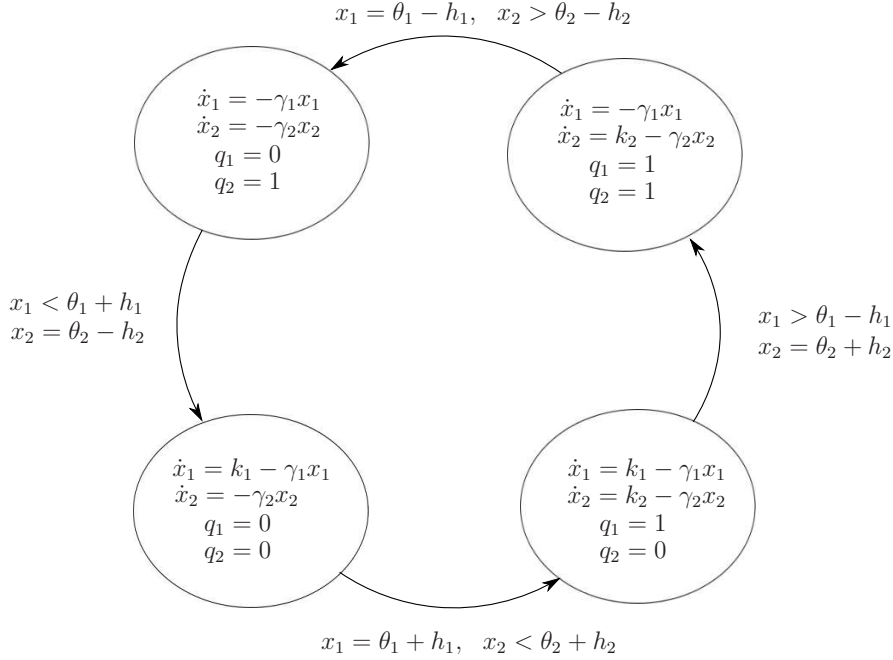}
{
\psfrag{x1dot=k1-r1x1}[][][0.8]{$\dot{x}_1=k_1-\gamma_1x_1$}
\psfrag{x2dot=k2-r2x2}[][][0.8]{$\dot{x}_2=k_2-\gamma_2x_2$}
\psfrag{x1dot=-r1x1}[][][0.8]{$\dot{x}_1=-\gamma_1x_1$}
\psfrag{x2dot=-r2x2}[][][0.8]{$\dot{x}_2=-\gamma_2x_2$}
\psfrag{q1=0}[][][0.8]{\qquad\!\!\!$q_1=0$}
\psfrag{q2=0}[][][0.8]{\qquad\!\!\!$q_2=0$}
\psfrag{q1=1}[][][0.8]{\qquad\!\!\!$q_1=1$}
\psfrag{q2=1}[][][0.8]{\qquad\!\!\!$q_2=1$}
\psfrag{x1leqt1+h1}[][][0.8]{\!\!\!\!$x_1<\theta_1+h_1$}
\psfrag{x1geqt1-h1}[][][0.8]{$x_1>\theta_1-h_1$}
\psfrag{x2leqt2+h2}[][][0.8]{$x_2=\theta_2+h_2$}
\psfrag{x2leqt2-h2}[][][0.8]{\!\!\!\!$x_2=\theta_2-h_2$}
\psfrag{x2geqt2-h2}[][][0.8]{$x_2=\theta_2-h_2$}
\psfrag{x2geqt2+h2}[][][0.8]{$x_2=\theta_2+h_2$}
\psfrag{x1leqt1-h1x2geqt2-h2}[][][0.8]{$x_1 = \theta_1 - h_1, \ \ x_2>\theta_2-h_2$}
\psfrag{x1geqt1+h1x2leqt2+h2}[][][0.8]{$x_1 = \theta_1 + h_1, \ \ x_2<\theta_2+h_2$}
}
\caption{\emph{A hybrid automaton representation of the two-gene genetic regulatory network for sequential transitions of $(q_1,q_2)$.}\label{fig:Balloon}}
\end{center}
\end{figure}
}

\NotForConf{
\begin{lemma}
\label{lemma:HBC}
The data $(C,F,D,G)$ satisfies the following conditions:
\begin{list}{}{}   
\item[(A1)] The sets $C$ and $D$ are closed.   
\item[(A2)] The map $z \mapsto F(z)$ is continuous on $C$.
\item[(A3)] The set-valued mapping $z\mapsto G(z)$ is outer semicontinuous\footnote{A set-valued mapping $G:S\rightrightarrows\reals^n$
with $S\subset \reals^n$ is outer semicontinuous relative to $S$  
if for any $z\in S$ and any sequence   
$\{z_i\}_{i=1}^\infty$ with $z_i\in S$, $\lim_{i\to\infty} z_i=z$,   
and any sequence $\{w_i\}_{i=1}^\infty$   
with $w_i\in G(z_i)$ and $\lim_{i\to\infty} w_i=w$ we have $w\in G(z)$.   
}
relative to $\reals^4$
and locally bounded, and, for all $z\in D$,  
$G(z)$ is nonempty.
\end{list}   
\end{lemma}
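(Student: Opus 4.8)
The plan is to verify (A1)--(A3) directly from the data, exploiting that $\eta_1$ and $\eta_2$, viewed as functions of $z$, are polynomial and hence continuous on all of $\reals^4$, and that the state space $\Z=\mathbb{R}^2_{\geq0}\times\{0,1\}^2$ is closed. For (A1), I would write
\[
C=\Z\cap\eta_1^{-1}((-\infty,0])\cap\eta_2^{-1}((-\infty,0]),
\]
which is a finite intersection of closed sets (each $\eta_i^{-1}((-\infty,0])$ is the preimage of a closed ray under a continuous map), hence closed. For $D$, each set $\{z\in C:\eta_i(x_i,q_i)=0\}=C\cap\eta_i^{-1}(\{0\})$ is closed, and $D$ is the union of the two, hence closed. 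Condition (A2) is immediate, since $F$ is affine in $z$ and therefore continuous on $\reals^4$, in particular on $C$.

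The content of the lemma is (A3), for which nonemptiness and local boundedness are routine. Since $z\in D\subset C$ forces $\eta_1(x_1,q_1)\le0$ and $\eta_2(x_2,q_2)\le0$ with at least one of the two equal to zero, the three guards in the definition of $G$ exhaust $D$, so $G(z)\neq\emptyset$ for every $z\in D$. Because $g_1$ and $g_2$ leave $x_1,x_2$ unchanged and return discrete entries in $\{0,1\}$, the image of any bounded set under $G$ is bounded, giving local boundedness.

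For outer semicontinuity relative to $\reals^4$, I would adopt the convention $G(z)=\emptyset$ for $z\notin D$ and prove the equivalent statement that the graph of $G$ is closed. The key observation is the decomposition
\begin{multline*}
\{(z,w):z\in D,\,w\in G(z)\}\\
=\{(z,g_1(z)):z\in C,\,\eta_1(x_1,q_1)=0\}\\
\cup\{(z,g_2(z)):z\in C,\,\eta_2(x_2,q_2)=0\},
\end{multline*}
which is valid because $G(z)\subseteq\{g_1(z),g_2(z)\}$, and $g_1(z)\in G(z)$ precisely when $\eta_1=0$ (allowing $\eta_2\le0$), symmetrically for $g_2$; the genuinely set-valued ``overlap'' $\{\eta_1=\eta_2=0\}$ contributes consistently to both pieces. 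Each domain $\{z\in C:\eta_i=0\}$ is closed by the argument used for $D$, and each $g_i$ is continuous, so each piece is the graph of a continuous map over a closed domain and is therefore closed; the union of the two is closed, which yields OSC.

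I expect the only delicate point to be the behavior of $G$ at the boundary between its single-valued selections and the set-valued region $\{\eta_1=\eta_2=0\}$: a direct sequential argument would have to follow sequences $z_i\to z$ entering this region from the strict sides ($\eta_2<0$ or $\eta_1<0$) and check that $\lim g_1(z_i)$ (resp.\ $\lim g_2(z_i)$) lands in $G(z)=\{g_1(z),g_2(z)\}$, together with a pigeonhole passage to subsequences on which the active guard is constant (and the remark that the discrete components of $z_i$ are eventually constant since they live in the discrete set $\{0,1\}^2$). The decomposition above avoids this case analysis entirely, letting the continuity of $g_1,g_2$ over closed domains do the work.
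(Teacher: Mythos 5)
Your proposal is correct and follows essentially the same route as the paper: the paper declares (A1) and (A2) obvious and establishes (A3) by asserting that the graph of $G$ is closed, which is exactly the closed-graph argument you make. The only difference is that you supply the justification the paper omits, namely the decomposition of the graph into the two graphs of the continuous maps $g_1$ and $g_2$ over the closed domains $\{z\in C:\eta_1(x_1,q_1)=0\}$ and $\{z\in C:\eta_2(x_2,q_2)=0\}$.
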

\begin{proof}
Properties (A1) and (A2) are obvious. 
Property (A3) holds since the graph of $G$, which is given by
$
\defset{(x,y)}{y \in G(z)},
$
is closed.
\end{proof}
}

\section{Dynamical Properties of the Two-Gene Hybrid System Model}
\label{sec:3}
\subsection{Existence of solutions}

\IfConf{
A solution $z$ to ${\cal H}$ is said to be nontrivial if $\dom z$ contains at least two points,
 complete if $\dom z$ is unbounded, Zeno if it is complete and the projection of $ \dom z$ onto $\mathbb{R}^n_{\geq0}$ is bounded, and maximal if there does not exist another solution $z'$ to ${\cal H}$ such that $\dom z'$ is a proper subset of $\dom z$, and $z' (t, j)=z (t, j)$ for all $(t, j)\in \dom z$.
}

\begin{proposition}
\label{eqn:Existence}
 From every point in $C\cup D,$ there exists a nontrivial solution for the hybrid system $\cal H$ in \eqref{eqn:H2}. Furthermore, every maximal solution is complete
and the projection of its hybrid time domain on $\realsgeq$ is unbounded, i.e., every solution is not Zeno.
 \end{proposition}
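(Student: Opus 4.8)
The plan is to verify the existence and completeness of solutions by appealing to the general existence/completeness results for hybrid systems from \cite{84} (e.g., the viability-type condition together with Lemma~\ref{lemma:HBC}), and to rule out the only two obstructions to completeness: finite escape during flows and jumps out of $C \cup D$. Since Lemma~\ref{lemma:HBC} already establishes that $(C,F,D,G)$ satisfies the hybrid basic conditions (A1)--(A3), the existence of a nontrivial solution from each $z \in C \cup D$ reduces to checking a local tangency/viability condition: for every point $z \in C \setminus D$, the flow map $F(z)$ must point into the tangent cone of $C$ so that flow is possible. I would check this directly from the structure of $C$ in \eqref{eqn:C}, noting that on the interior of $C$ the condition is trivial, and on the boundary portions where $\eta_i < 0$ for one index, the relevant constraint is inactive, so only the active-constraint boundary (which is exactly $D$) needs attention; points of $D$ admit a jump, so existence of a nontrivial solution is guaranteed from every point of $C \cup D$.

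Next I would establish completeness. The first thing to rule out is a finite escape time during flow. On each flow interval the discrete variables $q_1,q_2$ are frozen, so the continuous dynamics are the affine, exponentially stable linear system $\dot{x}_1 = k_1(1-q_2) - \gamma_1 x_1$, $\dot{x}_2 = k_2 q_1 - \gamma_2 x_2$ with $\gamma_1,\gamma_2 > 0$. Solutions of such a system are globally defined in forward time and in fact each $x_i$ converges monotonically toward its equilibrium $k_1(1-q_2)/\gamma_1$ (resp.\ $k_2 q_1/\gamma_2$); in particular no finite-time blow-up occurs and solutions stay in $\Z = \realsgeq^2 \times \{0,1\}^2$. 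The second thing to rule out is a maximal solution that reaches $D$ but whose image under $G$ lands outside $C \cup D$. Inspecting the jump map \eqref{eqn:G}, a jump only toggles $q_i$ and leaves $x_i$ fixed; toggling $q_i$ flips the sign factor $(2q_i - 1)$ in $\eta_i$, and a short computation shows that if $\eta_i(x_i,q_i) = 0$ (so $x_i = \theta_i + h_i$ when $q_i = 0$, or $x_i = \theta_i - h_i$ when $q_i = 1$), then after the toggle $\eta_i(x_i, 1-q_i) = -2h_i < 0$; using $\theta_i - h_i > 0$ and $\theta_i + h_i < \theta_i^{\max}$, the post-jump point lies strictly inside the corresponding slab and hence in $C$. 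Therefore $G(D) \subset C$, so every jump lands back in the flow set and solutions can always continue.

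Having excluded finite escape and jumps leaving the domain, the completeness theorem from \cite{84} yields that every maximal solution is complete. To conclude the non-Zeno claim, I would show that the flow time $t$ is unbounded along any complete solution, i.e., that there cannot be infinitely many jumps accumulating in finite ordinary time. The key observation is that after a jump toggling $q_i$ from $0$ to $1$, the state $x_i$ sits at $\theta_i + h_i$ and the continuous dynamics must drive it all the way down to $\theta_i - h_i$ before the next jump in that coordinate is possible (and symmetrically from $\theta_i - h_i$ up to $\theta_i + h_i$). Because $x_i$ evolves according to a fixed-rate affine ODE with bounded velocity on the relevant compact region, traversing the gap of width $2h_i > 0$ requires a \emph{uniform} positive amount of flow time $\tau_i^{\min} > 0$ between consecutive jumps of the same $q_i$; since there are only two discrete coordinates, the dwell time between any two consecutive jumps is bounded below by a positive constant, ruling out Zeno behavior and forcing the flow-time projection of $\dom z$ to be unbounded.

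The main obstacle I anticipate is the dwell-time/non-Zeno step, and specifically making the lower bound $\tau_i^{\min}$ rigorous. One must verify that the affine flow actually reaches the opposite threshold (i.e.\ that the equilibrium $k_1(1-q_2)/\gamma_1$ or $k_2 q_1/\gamma_2$ lies on the far side of the slab so the trajectory crosses it) rather than stalling inside the slab, which amounts to a case analysis over the four values of $(q_1,q_2)$ and a parameter condition relating $k_i/\gamma_i$ to $\theta_i \pm h_i$; where the equilibrium lies inside a slab, the corresponding $q_i$ simply stops jumping and that coordinate's flow persists indefinitely, which still gives unboundedness of $t$. Bounding the flow velocity on the compact set between thresholds to extract a uniform positive dwell time is then routine but is the crux that upgrades completeness to the stronger non-Zeno conclusion.
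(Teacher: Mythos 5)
Your proposal is correct and follows essentially the same route as the paper's proof: verification of the viability condition {\bf (VC)} from \cite{84} at boundary points of $C$, exclusion of finite escape via linearity of the flow, the observation that $G(D)\subset C\cup D$ (your explicit computation $\eta_i(x_i,1-q_i)=-2h_i<0$ after a toggle is a nice, cleaner way to get this), and a dwell-time argument based on the hysteresis gap $2h_i$. Two imprecisions are worth flagging. First, the boundary of $C$ is not exhausted by $D$: since $C\subset\Z=\mathbb{R}^2_{\geq0}\times\{0,1\}^2$, the faces $x_1=0$ and $x_2=0$ are also boundary points at which {\bf (VC)} must be checked; the paper's proof devotes most of its length to exactly these tangent-cone computations (the flow points inward at $x_1=0$ when $q_2=0$ and is tangent there when $q_2=1$, and is tangent or inward on $x_2=0$), and it is your separate remark that the affine flow leaves $\Z$ forward invariant that actually covers this case, so the gap is one of organization rather than substance. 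Second, your claim that ``the dwell time between any two consecutive jumps is bounded below by a positive constant'' is literally false for this system: from a corner such as $z(0,0)=[\theta_1-h_1,\ \theta_2-h_2,\ 1,\ 1]^\top$ two jumps occur consecutively with zero flow time in between, a case the paper explicitly acknowledges. Your per-coordinate bound $\tau_i^{\min}$ is the correct statement, and it still rules out Zeno behavior, since at most two jumps (one per discrete coordinate) can occur within any flow window shorter than $\min_i\tau_i^{\min}$; this matches the paper's phrasing that solutions must flow a distance of at least $2\min\{h_1,h_2\}$ after at most two consecutive jumps.
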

 
The proof of this result uses the conditions for the existence of solutions to $\cal H$ in \cite{84}
for general hybrid systems.
\NotForConf{ More precisely, consider the hybrid system $\cal{H}$ and let $z(0, 0)\in C\cup D.$ If $z(0, 0)\in D$ or\\

\begin{itemize}
\item[] {\bf (VC)} there exists a neighborhood $U$ of $z(0, 0)$ such that\footnote{$T_C(z)$ denotes the tangent cone of $C$ at $z$, i.e., 
it is the set      
of all $v$ for which there exists        
a sequence of real numbers $\alpha_i\searrow 0$ and a sequence        
$v_i\rightarrow v$ such that for every $i=1,2,...$, $x+\alpha_i v_i\in C$.} for every $z\in U\cap C$,
\IfConf{
$F(z)\cap T_C(z)\neq \emptyset,$
}
{
$$F(z)\cap T_C(z)\neq \emptyset,$$
}
\end{itemize}
then there exists a nontrivial solution to $\cal{H}$ from $z(0, 0)$. If {\bf (VC)} holds for every $z(0, 0)\in C\setminus D,$ then there exists a nontrivial solution to ${\cal H}$ from every initial point in $C\cup D,$ and every maximal solution $z$ satisfies exactly one of the following conditions:
\begin{enumerate}
\item $z$ is complete;
\item $\dom z$ is bounded and the last interval is of the form $[t_J,t_{J+1})$, where $J=\sup_{(t,j) \in \dom z} j$ has nonempty interior and $t\mapsto\phi(t, J)$ is a maximal solution to $\dot{z}=F(z),$ in fact $\lim_{t\to T}$$|z(t, J)|$=$\infty,$ where $T = \sup_{(t,j) \in \dom z} t$;
\item $z(T, J)\notin C\cup D,$ where $(T, J)= \sup \dom z$.
\end{enumerate}
Furthermore, if $G(D)\subset C\cup D$, then 3) above does not occur.
}

\NotForConf{
\ExtendedVersion{
The proof of Proposition~\ref{eqn:Existence} uses these conditions
and is given in \ref{app:ExistenceProof}.
}
}
\subsection{Characterization of equilibria}

We compute the set of isolated equilibrium points $z^*$ 
as well as (nonisolated, dense) sets of equilibria for the hybrid system ${\cal H}$ in \eqref{eqn:H2}.
For general hybrid systems, isolated equilibrium points are points 
that are an isolated equilibrium point of
$\dot{z} \in F(z), z \in C$ or of
$z^+ \in G(z), z \in D$.
%
On the other hand, an equilibrium set (not necessarily an isolated equilibrium point) for a hybrid system $\HS$ is defined 
as a set that is (strongly) forward invariant.

\begin{definition}[Equilibrium set]
A set $S \subset C\cup D$ 
is an equilibrium set of ${\cal H}$ 
if for every initial condition $z(0,0) \in S$, 
every solution $z$ to $\HS$ satisfies $z(t,j) \in S$ for all $(t,j) \in S$.
\end{definition}

The following results determine the equilibria of \eqref{eqn:H2} for a range of parameters of the system.

\begin{proposition}
\label{eqn:EP}
The equilibria of the hybrid system ${\cal H}$ in \eqref{eqn:H2} is given in Table~\ref{tab:EqPoints}
in terms of the positive constants $k_1$, $k_2$, $\gamma_1$, $\gamma_2$, 
$\theta_1$, $\theta_1^{\max}$, $\theta_2$, $\theta_2^{\max}$, 
$h_1$, and $h_2$ satisfying the conditions therein.
The set $S\subset C\cup D$ in case 5 is 
an equilibrium set and is given by 
\begin{equation}\label{eqn:S}
S=\bigcup^4_{i=1}S_i,
\end{equation}
where\footnote{$p_i(j)$ is the $j$-th component of $p_i$.}  
\begin{eqnarray*}
S_1 & := & \IfConf{\left\{x \in \reals^2: x = \matt{\frac{k_1}{\gamma_1} - \left(\frac{k_1}{\gamma_1} - p_0(1) \right) \exp(-\gamma_1 s) \\ p_0(2) \exp(-\gamma_2 s)}, \right. \\
& & \left.
s \in [0,t_1']\right\}\times \{(0,0)\}}
{\defset{x \in \reals^2}{x = \matt{\frac{k_1}{\gamma_1} - \left(\frac{k_1}{\gamma_1} - p_0(1) \right) \exp(-\gamma_1 s) \\ p_0(2) \exp(-\gamma_2 s)}, s \in [0,t_1']}\times \{(0,0)\}}
\\
S_2 & := & \IfConf{\left\{x \in \reals^2: x = \matt{\frac{k_1}{\gamma_1} - \left(\frac{k_1}{\gamma_1} - p_1(1) \right) \exp(-\gamma_1 s)\\
\frac{k_2}{\gamma_2} - \left(\frac{k_2}{\gamma_2} - p_1(2) \right) \exp(-\gamma_2 s)}, \right. \\
& & \left. s \in [0,t_2']\right\}\times \{(1,0)\}}
{\defset{x \in \reals^2}{x = \matt{\frac{k_1}{\gamma_1} - \left(\frac{k_1}{\gamma_1} - p_1(1) \right) \exp(-\gamma_1 s) \\ 
\frac{k_2}{\gamma_2} - \left(\frac{k_2}{\gamma_2} - p_1(2) \right) \exp(-\gamma_2 s)}, s \in [0,t_2']}\times \{(1,0)\}}
\\
S_3 & := & \IfConf{
\left\{x \in \reals^2: x = \matt{p_2(1) \exp(-\gamma_1 s) \\ \frac{k_2}{\gamma_2} - \left(\frac{k_2}{\gamma_2} - p_2(2) \right) \exp(-\gamma_2 s)}, \right. \\
& & \left. s \in [0,t_3']\right\}\times \{(1,1)\} 
}
{\defset{x \in \reals^2}{x = \matt{p_2(1) \exp(-\gamma_1 s) \\ \frac{k_2}{\gamma_2} - \left(\frac{k_2}{\gamma_2} - p_2(2) \right) \exp(-\gamma_2 s)}, s \in [0,t_3']}\times \{(1,1)\}}
\\
S_4 & := & \IfConf{\defset{x \in \reals^2}{x = \matt{p_3(1) \exp(-\gamma_1 s) \\ 
p_3(2) \exp(-\gamma_2 s)}, s \in [0,t_4']} \\ & & \times \{(0,1)\}}
{\defset{x \in \reals^2}{x = \matt{p_3(1) \exp(-\gamma_1 s) \\ 
p_3(2) \exp(-\gamma_2 s)}, s \in [0,t_4']}\times \{(0,1)\}
}
\end{eqnarray*}
and 
$p_0, p_1, p_2, p_3\in\mathbb{R}^2$ 
are the vertices of the set $S$\NotForConf{(see Figure~\ref{fig:LimitCycle1})},
where
\begin{eqnarray*}
& & t_1' \, = \, \ln\left[{\frac{\frac{k_1}{\gamma_1}-p_0(1)}{\frac{k_1}{\gamma_1}-(\theta_1+h_1)}}\right]^{\frac{1}{\gamma_1}}, 
t_2' \, = \, \ln\left[{\frac{\frac{k_2}{\gamma_2}-p_1(2)}{\frac{k_2}{\gamma_2}-(\theta_2+h_2)}}\right]^{\frac{1}{\gamma_2}},\\
&  & t_3' \, = \, \ln\left[{\frac{p_2(1)}{\theta_1-h_1}}\right]^{\frac{1}{\gamma_1}}, 
t_4' \, = \, \ln\left[{\frac{p_3(2)}{\theta_2-h_2}}\right]^{\frac{1}{\gamma_2}},
\end{eqnarray*}
and
\begin{eqnarray}
\label{eqn:p0}
p_0 &\!\!\!\! =\!\!\!\! & \left[\begin{array}{cc}(\theta_1-h_1)\left(\frac{\theta_2-h_2}{p_3(2)}\right)^{\frac{\gamma_1}{\gamma_2}}\\ \theta_2-h_2\end{array}\right], \ 
\end{eqnarray}
\begin{eqnarray}
\label{eqn:p1}
p_1 \! = \! \left[\begin{array}{cc}\theta_1+h_1\\
(\theta_2-h_2)\left(\frac{\frac{k_1}{\gamma_1}-(\theta_1+h_1)}{\frac{k_1}{\gamma_1}-p_0(1)}\right)^{\frac{\gamma_2}{\gamma_1}}\end{array}\right],
\end{eqnarray}
\begin{eqnarray}
\label{eqn:p2}
p_2 &\!\!\!\! = \!\!\!\! & \left[\begin{array}{cc}\frac{k_1}{\gamma_1}-\left(\frac{k_1}{\gamma_1}-(\theta_1+h_1)\right)\left(\frac{\frac{k_2}{\gamma_2}-(\theta_2+h_2)}{\frac{k_2}{\gamma_2}-p_1(2)}\right)^{\frac{\gamma_1}{\gamma_2}}\\\theta_2+h_2\end{array}\right],
\end{eqnarray}
\begin{eqnarray}
\label{eqn:p3}
p_3 &\!\!\!\! = \!\!\!\!& \left[\begin{array}{cc}\theta_1-h_1\\
\frac{k_2}{\gamma_2}-\left(\frac{k_2}{\gamma_2}-(\theta_2+h_2)\right)\left(\frac{\theta_1-h_1}{p_2(1)}\right)^{\frac{\gamma_2}{\gamma_1}}\end{array}\right].
\end{eqnarray}
Moreover, the period of the limit cycle is given by
\begin{equation}\label{eqn:Period}
T=t_1'+t_2'+t_3'+t_4'.
\end{equation}
\end{proposition}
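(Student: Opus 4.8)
The plan is to exploit the fact that, in each of the four discrete modes $(q_1,q_2)\in\{(0,0),(1,0),(1,1),(0,1)\}$, the logic variables are frozen ($\dot q_1=\dot q_2=0$) and the continuous dynamics in \eqref{eqn:H2} reduce to a decoupled pair of scalar affine equations with known closed-form solution. First I would dispose of the isolated equilibria: in a fixed mode, setting $F(z)=0$ yields the unique candidate $(x_1,x_2)=\big(\tfrac{k_1}{\gamma_1}(1-q_2),\tfrac{k_2}{\gamma_2}q_1\big)$, and this point is an equilibrium of $\cal H$ precisely when it lies in $C\setminus D$, i.e.\ when both $\eta_1<0$ and $\eta_2<0$ there, so that no jump is triggered and $q$ stays constant. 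Evaluating $\eta_1,\eta_2$ from \eqref{eqn:C} mode by mode turns these strict inequalities into the parameter conditions of Table~\ref{tab:EqPoints} (for instance, mode $(0,0)$ contributes an equilibrium iff $k_1/\gamma_1<\theta_1+h_1$), which settles the isolated-equilibrium cases.

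For case 5 I would first observe that the accompanying conditions ($k_1/\gamma_1>\theta_1+h_1$ and $k_2/\gamma_2>\theta_2+h_2$) place every modal candidate equilibrium outside the corresponding flow set, so no mode can trap a solution. Integrating the affine flow in mode $(0,0)$ from an entry point $p_0$ gives exactly the arc $S_1$, and the jump governed by $\eta_1$ fires when $x_1=\theta_1+h_1$; solving for the dwell time returns $t_1'$, and eliminating $s$ through $e^{-\gamma_1 s}=(e^{-\gamma_2 s})^{\gamma_1/\gamma_2}$ expresses the exit point $p_1$ as in \eqref{eqn:p1}. Repeating this integrate-then-intersect step through modes $(1,0)$, $(1,1)$, $(0,1)$ produces $S_2,S_3,S_4$, the dwell times $t_2',t_3',t_4'$, and the vertices \eqref{eqn:p2}, \eqref{eqn:p3}, and finally \eqref{eqn:p0}, in which the loop closes back onto $p_0$. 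Along the way I would check that at each vertex exactly one of $\eta_1,\eta_2$ vanishes while the other is strictly negative, so that $G$ in \eqref{eqn:G} is single-valued on $S$; this makes the cycle deterministic and rules out the simultaneous-jump branch.

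The main obstacle is the self-consistency of \eqref{eqn:p0}--\eqref{eqn:p3}: these four relations are circular, since $p_1$ depends on $p_0$, $p_2$ on $p_1$, $p_3$ on $p_2$, and $p_0$ again on $p_3$. I would recast the composition of the four corner maps as a scalar return (Poincar\'e) map $P$ on the section $\{x_2=\theta_2-h_2,\ (q_1,q_2)=(0,0)\}$, parameterized by the coordinate $a=p_0(1)$, and establish that $P$ admits a fixed point in the admissible interval $a\in(0,\theta_1-h_1)$. Each leg maps the transverse coordinate by an exponential inflow/outflow raised to a power $\gamma_i/\gamma_j$, and a direct sign count shows two legs are increasing and two decreasing, so their composition $P$ is monotone increasing and maps the interval into itself; since $P(a)-a>0$ near $0$ and $P(a)-a<0$ near $\theta_1-h_1$, the intermediate value theorem gives a fixed point $a^\ast$, hence self-consistent vertices $p_0,\dots,p_3$. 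A contraction estimate (each leg contracts transverse deviations toward its modal equilibrium) yields uniqueness if desired. This is the only step that is not a direct computation.

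Finally, granted the self-consistent vertices, forward invariance of $S$ is immediate: a solution started at any point of an arc $S_i$ flows within $S_i$ (where $z\in C\setminus D$) until it reaches the vertex, where the unique admissible jump sends it to the start of $S_{i+1}\subset S$ (indices modulo $4$); hence $z(t,j)\in S$ for all $(t,j)\in\dom z$, which is exactly the definition of an equilibrium set. The period is then the total time to traverse the loop once, namely the sum of the four dwell times read off from the explicit solutions, giving $T=t_1'+t_2'+t_3'+t_4'$ as in \eqref{eqn:Period}.
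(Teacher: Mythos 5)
Your proposal is correct and follows essentially the same route as the paper's proof: isolated equilibria are obtained mode by mode by setting $F(z)=0$ and testing whether the candidate $\left(\tfrac{k_1}{\gamma_1}(1-q_2),\tfrac{k_2}{\gamma_2}q_1\right)$ lies in the flow set under each parameter regime, and for case 5 the affine flow is integrated explicitly in each mode, giving the arcs $S_i$, the dwell times $t_i'$, the corner relations \eqref{eqn:p0}--\eqref{eqn:p3}, the period \eqref{eqn:Period}, and forward invariance of $S$ from uniqueness of solutions along the deterministic cycle. The one substantive difference is how the circularity of \eqref{eqn:p0}--\eqref{eqn:p3} is resolved: the paper also composes the four corner maps into a scalar return map, but then simply \emph{defines} $p_0(1)$ as a root of the fixed-point equation $\rho(r)=r$, offering no argument that such a root exists, whereas you prove existence by the intermediate value theorem, checking that the composed map is continuous, sends the admissible interval $(0,\theta_1-h_1)$ into itself, and is increasing (two legs increase, two decrease). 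Your sign bookkeeping is right: for $a<\theta_1-h_1$ the first leg lands below $\theta_2-h_2$, the second above $\theta_1+h_1$, the third above $\theta_2+h_2$, and the fourth back below $\theta_1-h_1$, so the interval maps into itself and a fixed point exists; this closes a step the paper leaves implicit, and it is the main value your write-up adds. Only the parenthetical contraction/uniqueness remark is loose (individual legs need not contract the section coordinate uniformly, and the map between sections is not the time-$t$ flow), but the proposition does not claim uniqueness of the vertices, so nothing essential rests on it.
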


\begin{table}[htbp]
\caption{Equilibria of the hybrid system \eqref{eqn:H2}.\label{tab:EqPoints}}
\begin{center}
\newcommand{\tabincell}[2]{\begin{tabular}{@{}#1@{}}#2\end{tabular}}
 \begin{tabular}{|c|c|c|}
 \hline
&Conditions on constants&\tabincell{c}{Equilibria}\\\hline
1&\tabincell{c}{$\theta_1+h_1<\frac{k_1}{\gamma_1}<\theta_1^{\max}$\\ $0<\frac{k_2}{\gamma_2}<\theta_2+h_2$} & $z^*_1:=\left[\begin{array}{cccc}\frac{k_1}{\gamma_1}& 
\frac{k_2}{\gamma_2}& 1& 0\end{array}\right]^\top$\\\hline
2&$0<\frac{k_1}{\gamma_1}<\theta_1-h_1$ & $z^*_2:=\left[\begin{array}{cccc}\frac{k_1}{\gamma_1}& 0& 0& 
0\end{array}\right]^\top$\\\hline
3&\tabincell{c}{$\theta_1-h_1<\frac{k_1}{\gamma_1}<\theta_1+h_1$\\$0<\frac{k_2}{\gamma_2}<\theta_2+h_2$}&$z^*_1$ or $z^*_2$\\\hline
4&\tabincell{c}{$\theta_1-h_1<\frac{k_1}{\gamma_1}<\theta_1+h_1$\\$\theta_2+h_2<\frac{k_2}{\gamma_2}<\theta_2^{max}$}&$z^*_2$\\\hline
5&\tabincell{c}{$\theta_1+h_1<\frac{k_1}{\gamma_1}<\theta_1^{\max}$\\$\theta_2+h_2<\frac{k_2}{\gamma_2}<\theta_2^{\max}$ }& \tabincell{c}{equilibrium set $S$  defined in \eqref{eqn:S}}\\
\hline
 \end{tabular}
\end{center}
\end{table}

\NotForConf{
The following result provides a more constructive characterization of $S$.

\begin{corollary}
\label{coro:LimitCycleLineCase}
Under the conditions of Proposition ~\ref{eqn:EP}, if furthermore, $\gamma_1=\gamma_2 = \gamma$, then
\begin{equation}
\label{eqn:p01}
p_0(1) =\frac{-d_6+d_8+d_7-d_5-d_4-d_3+d_2}{d_1},
\end{equation}
where
\begin{eqnarray*}
d_1 & = & 2h_1k_2^2\gamma+h_2k_1k_2\gamma+k_1k_2\gamma\theta_2-2h_1h_2k_2\gamma^2\\
& & -2h_1k_2\gamma^2\theta_2\\
\IfConf{d_2 & = & k_1k_2\gamma\theta_1\theta_2, \quad
d_3 \ = \ h_2k_1k_2\gamma\theta_1, \\
d_4 & = & h_1k_1k_2\gamma\theta_2,\quad
d_5 \ = \ h_1h_2k_1k_2\gamma, \\
d_7 & = & h_2k_1^2k_2, \quad
d_8 \ = \ h_1k_1k_2^2,}
{d_2 & = & k_1k_2\gamma\theta_1\theta_2, \quad
d_3 \ = \ h_2k_1k_2\gamma\theta_1, \quad
d_4 \ = \ h_1k_1k_2\gamma\theta_2,\\
d_5 & = & h_1h_2k_1k_2\gamma, \quad
d_7 \ = \ h_2k_1^2k_2, \quad
d_8 \ = \ h_1k_1k_2^2}
\end{eqnarray*}
\begin{eqnarray*}
d_6 & = & 
k_1^{\frac{1}{2}} k_2^{\frac{1}{2}} (h_1k_2+h_2k_1-2h_1h_2\gamma)^{\frac{1}{2}}\\
& & 
(2h_1^2h_2^2\gamma^3-2h_1^2h_2k_2\gamma^2+2h_1^2k_2\gamma^2\theta_2-2h_1^2\gamma^3\theta_2^2\\
& & -2h_1h_2^2k_1\gamma^2+d_8-2h_1k_1k_2\gamma\theta_2+2h_1k_1\gamma^2\theta_2^2\\
& & +2h_2^2k_1\gamma^2\theta_1-2h_2^2\gamma^3\theta_1^2+d_7-2h_2k_1k_2\gamma\theta_1\\
& & +2h_2k_2\gamma^2\theta_1^2+2k_1k_2\gamma\theta_1\theta_2-2k_1\gamma^2\theta_1\theta_2^2\\
& & -2k_2\gamma^2\theta_1^2\theta_2+2\gamma^3\theta_1^2\theta_2^2)^{\frac{1}{2}}
\end{eqnarray*}

\begin{figure}[h]
\begin{center}
\psfragfig*[width=0.7\columnwidth]{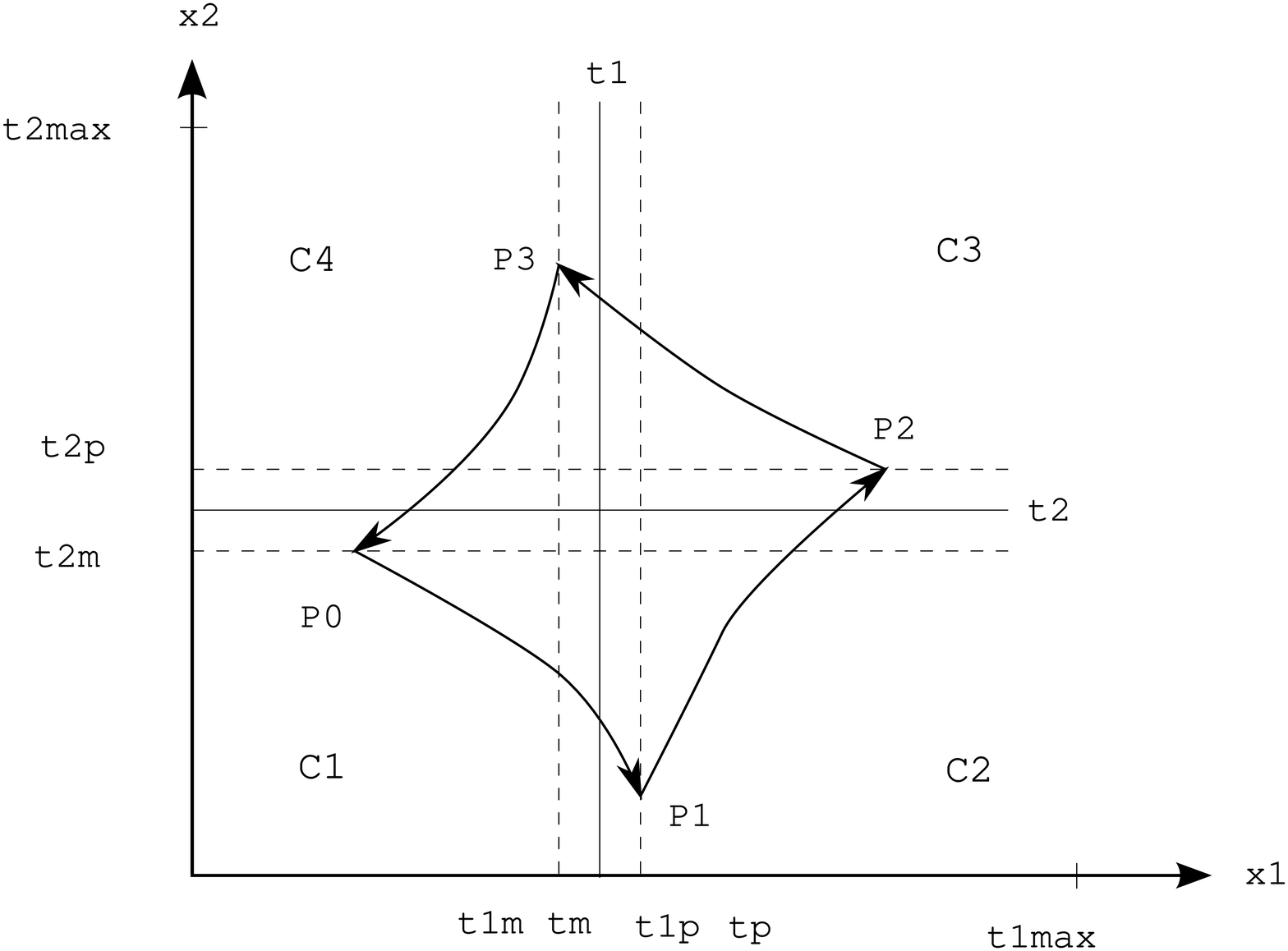}
{
\psfrag{C1}[][][0.8]{$C_1$}
\psfrag{x1}[][][0.8]{$x_1$}
\psfrag{x2}[][][0.8]{$x_2$}
\psfrag{C2}[][][0.8]{$C_2$}
\psfrag{C3}[][][0.8]{$C_3$}
\psfrag{C4}[][][0.8]{$C_4$}
\psfrag{P1}[][][0.8]{$p_1$}
\psfrag{P2}[][][0.8]{$p_2$}
\psfrag{P3}[][][0.8]{$\!\!\!p_3$}
\psfrag{P0}[][][0.8]{$p_0=p_4$}
\psfrag{t1}[][][0.8]{$\theta_1$}
\psfrag{t1m}[][][0.8]{$\theta_1\!\!-\!h_1$}
\psfrag{t1p}[][][0.8]{$\theta_1\!\!+\!h_1$}
\psfrag{t1max}[][][0.8]{$\theta_1^{\max}$}
\psfrag{t2}[][][0.8]{$\theta_2$}
\psfrag{t2m}[][][0.8]{$\theta_2-h_2$}
\psfrag{t2p}[][][0.8]{$\!\!\!\!\theta_2+h_2$}
\psfrag{t2max}[][][0.8]{$\theta_2^{\max}$}
\psfrag{tm}[][][0.8]{}
\psfrag{tp}[][][0.8]{}
}
\caption{\emph{Set $S$ and its vertices corresponding to case 5 of Table~\ref{tab:EqPoints}.}\label{fig:LimitCycle1}}
\end{center}
\end{figure}

Moreover, the sets $S_i$ are given by

\IfConf
{\begin{eqnarray*}
S_1 & =& \{x \in \reals^2: x_2=m_1x_1-m_1p_1(1)+p_1(2),\\
 & & p_0(1)\leq x_1 \leq \theta_1+h_1, p_1(2)\leq x_2 \leq \theta_2-h_2\}\\
& & \times \{(0,0)\},\\
S_2 & =& \{x \in \reals^2: x_2=m_2x_1-m_2p_1(1)+p_1(2), \\
& & \theta_1+h_1\leq x_1\leq p_2(1), p_1(2) \leq x_2\leq\theta_2+h_2\}\\
& & \times \{(1,0)\},\\
S_3 & = & \{x \in \reals^2: x_2=m_3x_1-m_3p_3(1)+p_3(2), \\
& & \theta_1-h_1 \leq x_1\leq p_2(1), \theta_2+h_2 \leq x_2\leq p_3(2)\}\\
& & \times \{(1,1)\},\\
S_4 & = & \{x \in \reals^2: x_2=m_4x_1-m_4p_3(1)+p_3(2), \\
& &  p_0(1) \leq x_1\leq\theta_1-h_1, \theta_2-h_2\leq x_2 \leq p_3(2)\}\\
& & \times \{(0,1)\},
\end{eqnarray*}}
{\begin{eqnarray*}
S_1 & =& \{x \in \reals^2: x_2=m_1x_1-m_1p_1(1)+p_1(2),\\
     & & \hspace{0.2in} p_0(1)\leq x_1<\theta_1+h_1, p_1(2)\leq x_2<\theta_2-h_2\}\times \{(0,0)\},\\
S_2 & =& \{x \in \reals^2: x_2=m_2x_1-m_2p_1(1)+p_1(2), \\
   & & \quad\theta_1+h_1\leq x_1< p_2(1), p_1(2)< x_2\leq\theta_2+h_2\}\times \{(1,0)\},\\
S_3 & = & \{x \in \reals^2: x_2=m_3x_1-m_3p_3(1)+p_3(2), \\
 & & \quad\theta_1-h_1< x_1\leq p_2(1), \theta_2+h_2< x_2\leq p_3(2)\}\times \{(1,1)\},\\
S_4 & = & \{x \in \reals^2: x_2=m_4x_1-m_4p_3(1)+p_3(2), \\
& & \quad p_0(1)< x_1\leq\theta_1-h_1, \theta_2-h_2\leq x_2< p_3(2)\}\times \{(0,1)\},
 \end{eqnarray*}}
where
\begin{equation}\label{eqn:m}
\begin{array}{l} \displaystyle
m_1 \ = \ \frac{p_0(2)-p_1(2)}{p_0(1)-p_1(1)},\ 
m_2 \ = \ \frac{p_2(2)-p_1(2)}{p_2(1)-p_1(1)},\\ \displaystyle
m_3 \ = \ \frac{p_2(2)-p_3(2)}{p_2(1)-p_3(1)},\
m_4 \ = \ \frac{p_0(2)-p_3(2)}{p_0(1)-p_3(1)}
\end{array}
\end{equation}
and the points 
$p_0, p_1, p_2, p_3\in\mathbb{R}^2$ 
are given in \eqref{eqn:p0}-\eqref{eqn:p3}.
\end{corollary}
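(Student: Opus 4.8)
The plan is to establish the two assertions of the corollary in turn: that each $S_i$ degenerates to a line segment once $\gamma_1=\gamma_2=\gamma$, and that the closure of the cycle forces the explicit value \eqref{eqn:p01} for $p_0(1)$. Throughout I would take the parametric description of $S=\bigcup_{i=1}^4 S_i$ and the vertex relations \eqref{eqn:p0}--\eqref{eqn:p3} from Proposition~\ref{eqn:EP} as given.

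First, for the linearity, I would observe that in each of the four modes $(q_1,q_2)$ the flow map reduces to two decoupled affine scalar equations $\dot x_1=b_1-\gamma_1 x_1$ and $\dot x_2=b_2-\gamma_2 x_2$, with $b_1,b_2\in\{0,k_1,k_2\}$ determined by the mode. Hence along $S_i$ each coordinate takes the form $\alpha+\beta e^{-\gamma_i s}$, where $\alpha$ is the equilibrium of that coordinate in the given mode and $\beta$ the corresponding initial offset. Imposing $\gamma_1=\gamma_2=\gamma$ makes both coordinates share the single factor $e^{-\gamma s}$, so eliminating $e^{-\gamma s}$ between the two coordinate expressions yields an affine relation between $x_1$ and $x_2$; that is, $S_i$ lies on a line. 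Because the flow interval $[0,t_i']$ carries the trajectory exactly from the vertex $p_{i-1}$ to $p_i$ (with $p_4:=p_0$), the slope of this line equals the chord slope through the two vertices, which is precisely $m_i$ in \eqref{eqn:m}; writing the line in point--slope form through the relevant vertex reproduces the stated equations for $S_1,\dots,S_4$, and the coordinate ranges are read off from the jump thresholds $x_1=\theta_1\pm h_1$, $x_2=\theta_2\pm h_2$ that terminate each flow.

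Second, for \eqref{eqn:p01}, the key simplification is that $\gamma_1=\gamma_2$ turns every exponent $\gamma_1/\gamma_2$ and $\gamma_2/\gamma_1$ in \eqref{eqn:p0}--\eqref{eqn:p3} into $1$. Each vertex relation then becomes a linear-fractional (M\"obius) map of one scalar: \eqref{eqn:p1} gives $p_1(2)$ as a M\"obius function of $p_0(1)$, \eqref{eqn:p2} gives $p_2(1)$ in terms of $p_1(2)$, \eqref{eqn:p3} gives $p_3(2)$ in terms of $p_2(1)$, and \eqref{eqn:p0} gives $p_0(1)$ in terms of $p_3(2)$. Composing these four maps around the cycle yields a single M\"obius map $M$, and the requirement that the orbit close, $p_0(1)=M(p_0(1))$, is exactly the condition for $S$ to be forward invariant. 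Since the fixed-point equation of a M\"obius map is quadratic, clearing denominators produces a quadratic in $p_0(1)$; solving it by the quadratic formula gives \eqref{eqn:p01}, with the radical $d_6$ being the square root of the discriminant and $d_1,\dots,d_8$ the collected coefficients. The sign in front of $d_6$ is fixed by selecting the geometrically admissible root, namely the one satisfying the case~5 constraints $0<p_0(1)<\theta_1-h_1$.

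I expect the only real difficulty to be the symbolic bookkeeping in this last step: expanding the four-fold composition, clearing the nested denominators, and collecting terms into the compact $d_i$ form---especially the high-degree discriminant $d_6$---is long and error-prone, and I would carry it out with a computer-algebra system. As an independent check I would substitute the resulting $p_0(1)$ back through \eqref{eqn:p1}--\eqref{eqn:p3} and verify that \eqref{eqn:p0} returns the same $p_0$, confirming closure. The M\"obius structure guarantees no conceptual obstruction beyond this computation, since it forces the closure equation to be quadratic and hence solvable in closed form.
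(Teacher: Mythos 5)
Your proposal is correct and follows essentially the same route as the paper: with $\gamma_1=\gamma_2=\gamma$ the exponents in \eqref{eqn:p0}--\eqref{eqn:p3} become $1$, the four vertex relations are composed around the cycle into a single scalar fixed-point equation for $p_0(1)$ whose quadratic solution gives \eqref{eqn:p01} (the paper's ``elementary but tedious manipulations''), and the linearity of each $S_i$ follows from the shared decay rate --- the paper phrases this via the error coordinates $e$ satisfying $\dot{e}=-\gamma e$, which is the same observation as your elimination of the common factor $\exp(-\gamma s)$. Your M\"obius-map framing simply makes explicit why the closure equation is quadratic, and your remark on selecting the root satisfying $0<p_0(1)<\theta_1-h_1$ is a sensible sharpening of a point the paper leaves implicit.
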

}
\IfConf{}{\begin{proof}
When $\gamma_1 = \gamma_2 = \gamma$, 
the definitions in \eqref{eqn:p0}-\eqref{eqn:p3}
lead to
\begin{eqnarray*}
p_0(1) & = & 
(\theta_1-h_1)\left(\frac{\theta_2-h_2}{p_3(2)}\right), \\
p_1(2) & = &
(\theta_2-h_2)\left(\frac{\frac{k_1}{\gamma}-(\theta_1+h_1)}{\frac{k_1}{\gamma}-p_0(1)}
\right)\\
p_2(1) & = & 
\frac{k_1}{\gamma}-\left(\frac{k_1}{\gamma}-(\theta_1+h_1)\right)\left(\frac{\frac{k_2}{\gamma}-(\theta_2+h_2)}{\frac{k_2}{\gamma}-p_1(2)}\right)\\
p_3(2) & = & 
\frac{k_2}{\gamma}-\left(\frac{k_2}{\gamma}-(\theta_2+h_2)\right)\left(\frac{\theta_1-h_1}{p_2(1)}\right)
\end{eqnarray*}
Letting $\lambda = p_0(1)$, we obtain
\begin{eqnarray*}
\frac{(\theta_1-h_1)(\theta_2-h_2)}{\lambda} &=& \frac{k_2}{\gamma}-\left(\frac{k_2}{\gamma}-(\theta_2+h_2)\right)\left(\frac{\theta_1-h_1}{p_2(1)}\right)
\\
p_2(1) & = & 
\frac{k_1}{\gamma}-\left(\frac{k_1}{\gamma}-(\theta_1+h_1)\right)\left(\frac{\frac{k_2}{\gamma}-(\theta_2+h_2)}{\frac{k_2}{\gamma}-(\theta_2-h_2)\left(\frac{\frac{k_1}{\gamma}-(\theta_1+h_1)}{\frac{k_1}{\gamma}-\lambda}
\right)
}\right)
\end{eqnarray*}
Replacing the second equation into the first one, after elementary but tedious manipulations,
we obtain that $\lambda = p_0(1)$ as in \eqref{eqn:p01}.\footnote{When $\gamma_1 = \gamma_2 = \gamma$, 
the sets $S_i$ in Proposition~\ref{eqn:EP} reduce to straight lines.
In fact, define the new coordinates
\begin{equation}\label{eqn:Ecoordinates}
e: = x - \matt{\frac{k_1 (1-q_2)}{\gamma_1} \\ \frac{k_2 q_1}{\gamma_2}}.
\end{equation}
The continuous dynamics of $e$ are given by
\begin{equation}\label{eqn:EcoordinatesDot}
\dot{e} = \dot{x} = \matt{k_1(1-q_2)-\gamma_1x_1\\
k_2q_1-\gamma_2x_2}  =  -\gamma e,
\end{equation}
which implies that the trajectories on the plane are straight lines.}
\end{proof}}

\subsection{Stability analysis}

For convenience in the following analysis, we rewrite the flow set $C$ as $C=\bigcup^4_{i=1}C_i$ \NotForConf{(see Figure \ref{fig:LimitCycle1})}, where
\IfConf{\begin{eqnarray*}
C_1 &:=& \{z\in \Z: q_1=0, q_2=0, x_1\leq\theta_1+h_1,\\
& & \hspace{1.7in} x_2\leq\theta_2+h_2\},\\
C_2 &:=& \{z\in \Z: q_1=1, q_2=0, x_1\geq\theta_1-h_1,\\
& & \hspace{1.7in}  x_2\leq\theta_2+h_2\},\\
C_3 &:=& \{z\in \Z: q_1=1, q_2=1, x_1\geq\theta_1-h_1,\\
& & \hspace{1.7in}  x_2\geq\theta_2-h_2\},\\
C_4 &:=& \{z\in \Z: q_1=0, q_2=1, x_1\leq\theta_1+h_1,\\
& & \hspace{1.7in}  x_2\geq\theta_2-h_2\}.
\end{eqnarray*}}
{\begin{eqnarray*}
C_1 &:=& \{z\in \Z: q_1=0, q_2=0, x_1\leq\theta_1+h_1, x_2\leq\theta_2+h_2\},\\
C_2 &:=& \{z\in \Z: q_1=1, q_2=0, x_1\geq\theta_1-h_1, x_2\leq\theta_2+h_2\},\\
C_3 &:=& \{z\in \Z: q_1=1, q_2=1, x_1\geq\theta_1-h_1, x_2\geq\theta_2-h_2\},\\
C_4 &:=& \{z\in \Z: q_1=0, q_2=1, x_1\leq\theta_1+h_1, x_2\geq\theta_2-h_2\}.
\end{eqnarray*}}

\subsubsection{Asymptotic stability of isolated equilibrium points}

The following propositions determine the stability properties of 
the isolated equilibrium points in Table~\ref{tab:EqPoints}.
\NotForConf{
\ExtendedVersion{Their proofs are in Appendix~\ref{app:Stable1Proof} and Appendix~\ref{app:Stable2Proof}.
}
}
\begin{proposition}
\label{eqn:Stable1}For cases 1, 2, and 4 in Table \ref{tab:EqPoints}, 
the corresponding equilibrium points to $\cal H$ in \eqref{eqn:H2} 
are globally asymptotically stable.
\end{proposition}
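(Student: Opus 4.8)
The plan is to prove, for each of the three cases, both Lyapunov stability and global attractivity of the corresponding equilibrium, exploiting the piecewise-affine structure of the flow over the four regions $C_1,\dots,C_4$. In each region $C_i$ the continuous dynamics are the linear, Hurwitz system $\dot{x}_1 = k_1(1-q_2) - \gamma_1 x_1$, $\dot{x}_2 = k_2 q_1 - \gamma_2 x_2$ with a unique, globally exponentially stable flow equilibrium $\bar{x}_i = (k_1(1-q_2)/\gamma_1,\, k_2 q_1/\gamma_2)$ determined by the constant value of $(q_1,q_2)$ on $C_i$. The first step is to identify, for each case, the ``terminal'' region containing the tabulated equilibrium: $C_2$ in case 1 (where $z_1^*$ sits with $(q_1,q_2)=(1,0)$) and $C_1$ in cases 2 and 4 (where $z_2^*$ sits with $(q_1,q_2)=(0,0)$). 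I would verify from the parameter conditions in Table~\ref{tab:EqPoints} that in the terminal region the flow equilibrium $\bar{x}_i$ equals the continuous part of the tabulated equilibrium and lies strictly on the flow side of every jump threshold $\eta_1=0$, $\eta_2=0$ bounding that region.

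For stability, I would use the local Lyapunov function $V(z) = \tfrac12\big((x_1 - \bar{x}_i(1))^2 + (x_2 - \bar{x}_i(2))^2\big)$ on the terminal region. A direct computation gives $\dot{V} = -\gamma_1 (x_1 - \bar{x}_i(1))^2 - \gamma_2 (x_2 - \bar{x}_i(2))^2 < 0$ away from the equilibrium, and since the equilibrium is strictly interior, sufficiently small sublevel sets of $V$ are forward invariant, contained in the terminal region, and trigger no jumps. Because $q$ is integer-valued, any initial condition close enough to the equilibrium necessarily carries the matching discrete component and hence lies in the terminal region; this yields Lyapunov stability (and local asymptotic stability) in the standard $\varepsilon$--$\delta$ sense.

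For global attractivity I would carry out a ``funnel'' argument on the directed transition graph induced by the jump map. The key observation is that the sign of each flow component is fixed on each region, so the flow either drives a state monotonically across a single threshold (producing a jump) or toward a flow equilibrium lying strictly on the flow side of that threshold (producing no jump). Working region by region and using the Table~\ref{tab:EqPoints} inequalities, I would show that: (i) the terminal region is forward invariant, with no jump ever enabled along its flow, and its flow converges to the equilibrium; and (ii) from every non-terminal region the flow reaches, in finite time, a threshold whose jump moves the state strictly closer to the terminal region, so that the induced transitions form an acyclic graph whose unique sink is the terminal region (for example, in case 1, $C_3 \to C_4 \to C_1 \to C_2$). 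Each region is left in finite flow time, since the relevant threshold is reached in finite time because $x_i(t)$ is a decaying exponential toward a limit strictly beyond that threshold, and at most three jumps separate any region from the sink. Completeness and the non-Zeno property from Proposition~\ref{eqn:Existence} guarantee the solution is defined long enough to realize this; combined with stability, this establishes global asymptotic stability.

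The main obstacle I anticipate is the bookkeeping of the transition graph rather than any single hard estimate: one must enumerate the enabled jumps out of each non-terminal region under the given inequalities, handle the set-valued jump map at the corners where $\eta_1 = 0$ and $\eta_2 = 0$ hold simultaneously by checking that both branches $g_1,g_2$ still lead toward the sink, and in case 1 treat separately the sub-cases $k_2/\gamma_2 < \theta_2 - h_2$ and $k_2/\gamma_2 > \theta_2 - h_2$, which change whether $C_3$ jumps directly to $C_2$ or instead via $C_4 \to C_1$. In all sub-cases the graph remains acyclic with the same sink, so the conclusion is unaffected, but care is needed to confirm that no spurious cycle among the non-terminal regions can trap a solution.
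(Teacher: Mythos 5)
Your proposal is correct and takes essentially the same approach as the paper's proof: the paper likewise passes to error coordinates with the quadratic Lyapunov function $V(e)=e^\top e$ in the region containing the equilibrium ($C_2$ in case 1, $C_1$ in cases 2 and 4), establishes forward invariance of that region by checking the boundary vector fields, and obtains global attractivity by a region-by-region transition argument showing every solution reaches the terminal region after finitely many jumps, including the same sub-case split in case 1 according to whether $k_2/\gamma_2$ lies below or above $\theta_2-h_2$. No substantive difference.
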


\begin{proposition} 
\label{eqn:Stable2}
For case 3 in Table \ref{tab:EqPoints}, 
if $z(0, 0)\in C_2$, then we have that $\lim_{t+j \to \infty} z(t, j) = z_1^*$; if $z(0, 0)\in C_1$ 
or 
$z(0, 0)\in C_4,$ then $\lim_{t+j \to \infty} z(t, j) = z^*_2$. 
If $z(0, 0)\in C_3$, then $\lim_{t+j \to \infty} z(t, j) = z_1^*$ or $z_2^*.$ 
Furthermore, $z^*_1$ and $z_2^*$ are stable.
\end{proposition}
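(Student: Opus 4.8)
The plan is to exploit that on each of the four regions $C_1,\dots,C_4$ the discrete pair $(q_1,q_2)$ is frozen during flow, so the continuous dynamics reduce to the decoupled linear system $\dot{x}=-\gamma(x-x^{\mathrm{eq}})$ whose solutions are exactly the exponentials already written down in Proposition~\ref{eqn:EP}. For each region I would first locate the flow equilibrium $x^{\mathrm{eq}}$ relative to $C_i$ under the case-3 inequalities $\theta_1-h_1<k_1/\gamma_1<\theta_1+h_1$ and $0<k_2/\gamma_2<\theta_2+h_2$. The argument then reduces to bookkeeping: decide in each region whether the trajectory converges to an interior equilibrium or is driven monotonically to a threshold, where the jump map $G$ sends it into a neighboring region. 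Completeness and the absence of Zeno behavior needed to justify ``following the itinerary'' are already guaranteed by Proposition~\ref{eqn:Existence}.

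Carrying this out region by region: in $C_2$ (where $(q_1,q_2)=(1,0)$) the flow equilibrium is $(k_1/\gamma_1,k_2/\gamma_2)$, the continuous part of $z_1^*$, and the case-3 inequalities place it in the interior of $C_2$ (since $k_1/\gamma_1>\theta_1-h_1$ keeps $x_1$ off its threshold $\theta_1-h_1$ and $k_2/\gamma_2<\theta_2+h_2$ keeps $x_2$ off $\theta_2+h_2$); the contraction therefore keeps the solution in $C_2$ and $z\to z_1^*$. In $C_1$ the equilibrium is $(k_1/\gamma_1,0)$, the continuous part of $z_2^*$, again interior because $k_1/\gamma_1<\theta_1+h_1$ and $0<\theta_2+h_2$, so $z\to z_2^*$. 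In $C_4$ one has $\dot{x}_1=-\gamma_1 x_1$, $\dot{x}_2=-\gamma_2 x_2$ with equilibrium at the origin, which is outside $C_4$ because $0<\theta_2-h_2$; hence $x_2$ decreases to $\theta_2-h_2$ in finite time, the jump resets $q_2$ to $0$, the state enters $C_1$, and the previous step gives $z\to z_2^*$. Finally in $C_3$ the equilibrium $(0,k_2/\gamma_2)$ lies outside $C_3$ (as $0<\theta_1-h_1$), so the trajectory must reach a threshold first: reaching $x_1=\theta_1-h_1$ jumps to $C_4$ and hence to $z_2^*$, while reaching $x_2=\theta_2-h_2$ jumps to $C_2$ and hence to $z_1^*$, which is precisely the stated two-valued limit.

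For the stability claim I would use that $z_1^*$ and $z_2^*$ lie in the interior of their regions ($C_2$ and $C_1$, respectively), so a sufficiently small neighborhood is disjoint from the jump set $D$. On such a neighborhood every solution coincides with the purely continuous flow $\dot{x}=-\gamma(x-x^{\mathrm{eq}})$, for which $V(x)=(x-x^{\mathrm{eq}})^\top(x-x^{\mathrm{eq}})$ satisfies $\dot V=-2(x-x^{\mathrm{eq}})^\top\gamma\,(x-x^{\mathrm{eq}})<0$; this yields Lyapunov stability (in fact local exponential attractivity), and combined with the basin description above it delivers the convergence statements.

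The main obstacle is the careful treatment of the jump set and the region boundaries. On a separating threshold such as $x_1=\theta_1-h_1$ with $q_1=1$ the flow points back into the interior of $C_2$ (since $k_1/\gamma_1>\theta_1-h_1$) while the jump map also applies, so solutions there may be non-unique and the labels $C_i$ overlap on a measure-zero set; I would therefore phrase the convergence claims for initial conditions in the interior of each $C_i$, or, more conservatively, track \emph{every} maximal solution and show it reaches $C_1$ or $C_2$ after finitely many jumps. Proving this finite itinerary is the second technical point: it follows from the explicit exponential exit-time expressions of Proposition~\ref{eqn:EP}, which bound each threshold-crossing time away from zero and thus rule out accumulation of jumps, consistent with the non-Zeno conclusion of Proposition~\ref{eqn:Existence}.
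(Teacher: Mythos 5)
Your proposal follows essentially the same route as the paper's proof: a region-by-region analysis showing that $C_1$ and $C_2$ are forward invariant with the flow equilibria $z_2^*$ and $z_1^*$ in their (relative) interiors, finite-time exit from $C_3$ and $C_4$ through the thresholds with the jump map routing solutions into $C_2$ or $C_1$, and Lyapunov stability via $V(e)=e^\top e$ on a neighborhood disjoint from $D$; the paper carries out the invariance step by explicitly checking the boundary vector fields of each $C_i$, which is equivalent to your monotone, decoupled-component argument. Your observation about non-uniqueness at points of $C_i\cap D$ (e.g., $x_1=\theta_1-h_1$, $q=(1,0)$, from which a jumping solution enters $C_1$ and converges to $z_2^*$ rather than $z_1^*$) is a genuine subtlety that the paper's proof silently ignores, so your restriction of the convergence claims to interiors is, if anything, more careful than the original.
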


\subsubsection{Stability properties of the limit cycle}

\IfConf{
Now, we determine conditions on the parameters under which the limit cycle $S$ defined in  \eqref{eqn:S} is asymptotically stable.
As shown in Figure~\ref{fig:NaturalAndDecreasingDistance},
the natural metric (shown in dashed line) defined by the distance 
between the trajectories $z$ of $\HS$ and the set $S$ is not necessarily decreasing,
even though Figure~\ref{fig:PlanarTrajectoryWithVirtual} shows that the trajectory converges to $S$.
In fact, as depicted in the figures, the trajectory $x$ approaches $S$ for some time
and then gets far away from it (around the corners), until a jump to a new value of $q$ occurs.
}
{
Now, we determine conditions on the parameters under which the limit cycle $S$ defined in  \eqref{eqn:S} is asymptotically stable.
As shown in Figure~\ref{fig:NaturalDistance},
the natural metric defined by the distance 
between the trajectories $z$ of $\HS$ and the set $S$ is not necessarily decreasing,
even though Figure~\ref{fig:PlanarTrajectory} shows that the trajectory converges to $S$.
In fact, as depicted in the figures, the trajectory $x$ approaches $S$ for some time
and then gets far away from it (around the corners), until a jump to a new value of $q$ occurs.
}

\NotForConf{
\begin{figure}[H]
\begin{center}
\subfigure[Trajectory $x$ on the plane converging to $S$. \label{fig:PlanarTrajectory}]
{\psfragfig*[width=0.45\textwidth]{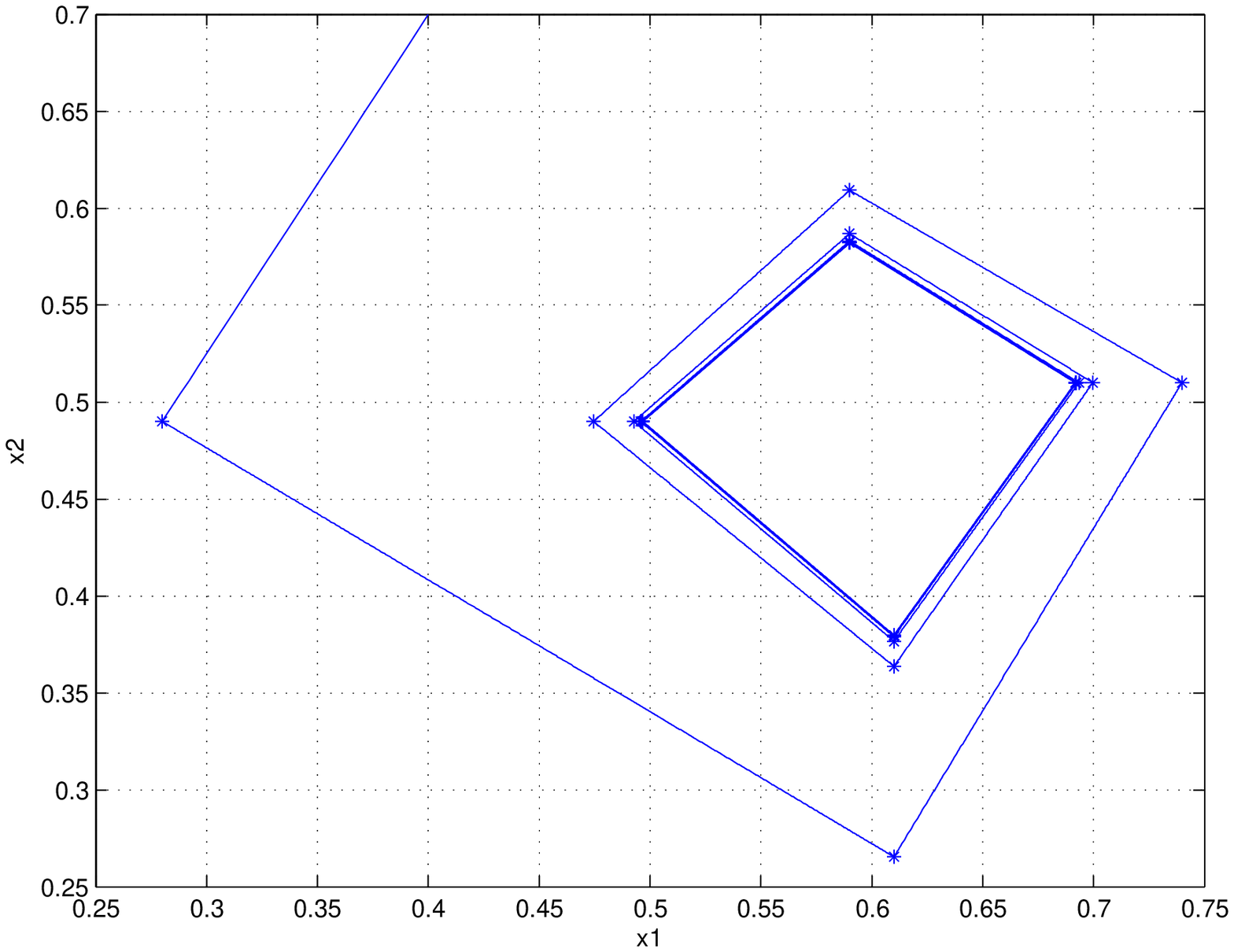}
{
\psfrag{x1}[][][0.65]{\hspace{-0.2in}$x_1$}
\psfrag{x2}[][][0.65][-90]{$x_2$}
\psfrag{t [sec]}[][][0.65]{$t [sec]$}
\psfrag{d(A,x)}[][][0.65][-90]{$|z|_{S}$}
}
}
\subfigure[Distance between trajectory $x$ and the set $S$, denoted $|z|_S$.\label{fig:NaturalDistance}]
{
\psfragfig*[width=0.45\textwidth]{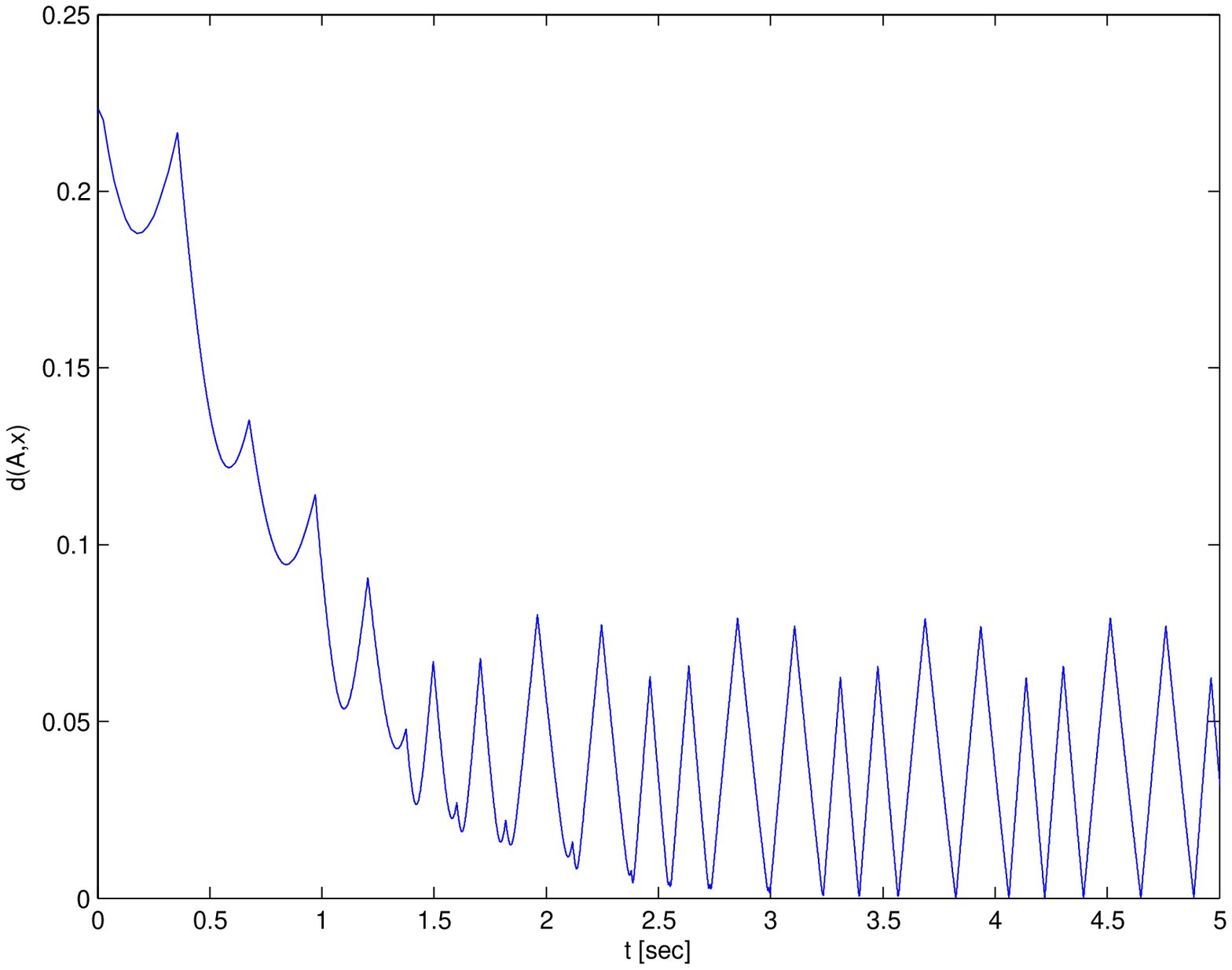}
{
\psfrag{x1}[][][0.65]{\hspace{-0.2in}$x_1$}
\psfrag{x2}[][][0.65][-90]{$x_2$}
\psfrag{t [sec]}[][][0.65]{$t [sec]$}
\psfrag{d(A,x)}[][][0.65][-90]{$|z|_{S}$}
}
}
\end{center}
\caption{\label{fig:Dis}\emph{Trajectory $z$ of $\HS$ on the plane and distance between it and the set $S$ with $\theta_1=0.6$, $\theta_2=0.5$, $\gamma_1=1$, $\gamma_2=1$, $k_1=1$, $k_2=1$, $h_1=0.01$,  $h_1=0.01$, $x_1(0,0)=0.4$, $x_2(0,0)=0.4$, $q_1(0,0)=0$, and $q_2(0,0)=0$.}}
\end{figure}
}

To overcome this issue, we augment the hybrid system $\HS$
with a state $\zeta \in \reals^2$ and with continuous dynamics
governed by a flow map given by a copy of the one for $x$, that is,
$$
\dot{\zeta} = \matt{
k_1 (1-q_2)-\gamma_1 \zeta_1\\
k_2 q_1-\gamma_2 \zeta_2}.
$$
The discrete dynamics of $\zeta$ are chosen so that 
jumps occur when jumps of $\HS$ occur and, at such jumps, 
$\zeta$ is updated via the difference inclusion
$$
\zeta^+ \in \widetilde{G}(x,q,\zeta).
$$
\IfConf{
To define the jump map $\widetilde{G}$,
we consider the case $\gamma_1 = \gamma_2$ and
}
{
To define the jump map $\widetilde{G}$,
consider the case $\gamma_1 = \gamma_2$ and, using
Corollary~\ref{coro:LimitCycleLineCase}, 
}
we extend to $\mathbb{R}^{2}$ the set of points $S_i$, $i \in \{1,2,3,4\}$, that is, we define the (unbounded) set
\begin{equation}\label{eqn:widetildeS}
\widetilde{S}=\bigcup^4_{i=1}\widetilde{S}_i,
\end{equation}
where\\
$
\widetilde{S}_1  = \defset{x \in \mathbb{R}^{2}}{ x_2=m_1x_1-m_1p_1(1)+p_1(2)}\times \{(0,0)\}$,\\
$\qquad\widetilde{S}_2  = \defset{x \in \mathbb{R}^{2}}{ x_2=m_2x_1-m_2p_1(1)+p_1(2)}\times \{(1,0)\}$,\\
$\qquad\widetilde{S}_3  = \defset{x \in \mathbb{R}^{2}}{ x_2=m_3x_1-m_3p_3(1)+p_3(2)}\times \{(1,1)\}$,\\
$\qquad\widetilde{S}_4  = \defset{x \in \mathbb{R}^{2}}{ x_2=m_4x_1-m_4p_3(1)+p_3(2)}\times \{(0,1)\}.$
\IfConf{
The constants $m_i$ are defined as \begin{equation}\label{eqn:m}
\begin{array}{l} \displaystyle
m_1 \ = \ \frac{p_0(2)-p_1(2)}{p_0(1)-p_1(1)},\ 
m_2 \ = \ \frac{p_2(2)-p_1(2)}{p_2(1)-p_1(1)},\\ \displaystyle
m_3 \ = \ \frac{p_2(2)-p_3(2)}{p_2(1)-p_3(1)},\
m_4 \ = \ \frac{p_0(2)-p_3(2)}{p_0(1)-p_3(1)}
\end{array}
\end{equation}
}

During flows, the set $\widetilde{S}$ is forward invariant for the state component $\zeta$ (both during flows and jumps)
along the dynamics of $q$ governed by $\HS$.
This is the reason
we restrict $\zeta$ to belong to $\widetilde{S}$ for the current value of $q$.
Then, due to the stability properties of the error system with state $\zeta - x$, the distance between $x$ and $\zeta$ strictly decreases during flows.
With this useful property of the trajectories while flowing, 
at jumps due to $\HS$, which occur when $(x(t,j),q(t,j)) \in D$ and map $q(t,j)$ to $q(t,j+1)$ 
(following the definition of $G$ in \eqref{eqn:G}),
the jump map $\widetilde{G}$ is constructed to map the state $\zeta$ to satisfy $(\zeta(t,j+1),q(t,j+1)) \in \widetilde{S}$
such that, 
if $(\zeta(t,j),q(t,j)) \in \widetilde{S}_{q(t,j)}$ before the jump, 
then $(\zeta(t,j+1),q(t,j+1)) \in \widetilde{S}_{q(t,j+1)}$ and with the property that 
$$
\dist(x(t,j+1),\zeta(t,j+1)) \leq \dist(x(t,j),\zeta(t,j))
$$
where $\dist$ is
the Euclidean distance between two points in $\reals^2$.
In this way, 
the new value of $\zeta$ at jumps can be determined 
for each $x \in \reals^2$ from the set
\IfConf
{\\ $
\widetilde{G}(x,q,\zeta) :=
$
\\ $
\left\{\zeta': \dist(x,\zeta') \leq \dist(x,\zeta), (\zeta',q')\in \widetilde{S}_{q'}, (x,q') \in G(x,q)\right\}
$
}{$$
\widetilde{g}(x,q,\zeta):= \defset{\zeta'}{\dist(x,\zeta') \leq \dist(x,\zeta), (\zeta',q')\in \widetilde{S}_{q'}, (x,q') \in G(x,q)}
$$}
(when it is not empty).  
Since the distance between $x$ and $\zeta$ decreases during
flows, 
asymptotic stability of $\widetilde{S}$ can be established
when $\widetilde{G}(x,q,\zeta)$ is nonempty since this guarantees
that the distance between $x$ and $\zeta$ is nonincreasing.  
The following result
imposes conditions on the parameters 
guaranteeing that $\widetilde{G}$ is nonempty 
and, furthermore, extends the attractivity property to the set $S$.

\begin{theorem}
\label{thm:AsymptoticStabilityLineCase}
For positive constants $k_1$, $k_2$, $\gamma_1$, $\gamma_2$,
$\theta_1$, $\theta_1^{\max}$, $\theta_2$, $\theta_2^{\max}$, 
$h_1$, and $h_2$ such that
\IfConf
{\begin{eqnarray}\label{eqn:StabilityConditionLineCase}
\gamma_1&=& \gamma_2\ \  =\ \ \gamma,\\
|m_1| &\leq& \min\{|m_2|,|m_4|\}, \\
|m_3| &\leq& \min\{|m_2|,|m_4|\},
\end{eqnarray}}{\begin{eqnarray}\label{eqn:StabilityConditionLineCase}
\gamma_1=\gamma_2 = \gamma, \qquad 
|m_1| \leq \min\{|m_2|,|m_4|\}, \qquad 
|m_3| \leq \min\{|m_2|,|m_4|\},
\end{eqnarray}}
where, for each $i \in \{1,2,3,4\}$, 
$m_i$ are given in \eqref{eqn:m},
the following holds:
\begin{enumerate}
\item
The set $\widetilde{S}$ is globally asymptotically stable for $\HS$.  In particular, 
each maximal solution to $\HS$ satisfies
\IfConf{\begin{eqnarray}
\begin{array}{ll}
d((x(t,j),q(t,j)),\widetilde{S}) & \leq \\ 
& \hspace{-0.7in} \exp(-\gamma t) d((x(0,0),q(0,0)),\widetilde{S}) 
\end{array}
\end{eqnarray}}
{\begin{eqnarray}
d((x(t,j),q(t,j)),\widetilde{S}) \leq \exp(-\gamma t) d((x(0,0),q(0,0)),\widetilde{S}) 
\end{eqnarray}}
for all
$(t,j) \in \dom (x,q)$, where
$d((x,q),\widetilde{S}) = \min_{(\zeta,q) \in \widetilde{S}} |x - \zeta|.$
\item 
The set $S$ in case 5 of Table~\ref{tab:EqPoints} is globally attractive for $\HS$, i.e., 
every solution to $\HS$ converges to $S$.
\end{enumerate}
\end{theorem}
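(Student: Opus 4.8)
The plan is to establish part~1 by exhibiting $\dist(x,\zeta)$, with $\zeta$ the virtual state constrained to $\widetilde S$, as a Lyapunov-like quantity that contracts exponentially along every solution, and then to deduce part~2 from part~1 together with completeness and non-Zenoness of solutions (Proposition~\ref{eqn:Existence}). First I would set up the flow estimate. Since $x$ and $\zeta$ share the discrete component $q$ and obey copies of the same linear flow map, the error $e := x - \zeta$ satisfies, when $\gamma_1=\gamma_2=\gamma$, the decoupled dynamics $\dot e = -\gamma e$ (this is exactly the coordinate computation in the footnote of Corollary~\ref{coro:LimitCycleLineCase}). Hence $|e(t,j)| = \exp(-\gamma(t-t_j))\,|e(t_j,j)|$ on each flow interval, so $\dist(x,\zeta)$ strictly decreases during flows at rate $\gamma$. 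The same computation shows each line $\widetilde S_i$ is a flow trajectory of the corresponding mode and is therefore forward invariant under flow; consequently, if I initialize $\zeta(0,0)$ as the Euclidean projection of $x(0,0)$ onto $\widetilde S_{q(0,0)}$, then $\zeta$ remains on $\widetilde S_q$ throughout each flow, $\dist(x,\zeta)=d((x,q),\widetilde S)$ holds initially, and $d((x,q),\widetilde S) \leq \dist(x,\zeta)$ thereafter.

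Next I would treat the jumps. By construction $\widetilde G$ updates $\zeta$ to a point of $\widetilde S_{q'}$ no farther from $x$ than the current $\zeta$, so $\dist(x,\zeta)$ is nonincreasing at jumps whenever $\widetilde G$ is nonempty. Granting nonemptiness, chaining the flow decay across the nonincreasing jumps and reindexing by total flow time gives $d((x(t,j),q(t,j)),\widetilde S) \leq \dist(x,\zeta) \leq \exp(-\gamma t)\,\dist(x(0,0),\zeta(0,0)) = \exp(-\gamma t)\,d((x(0,0),q(0,0)),\widetilde S)$, which is the claimed bound in part~1 and yields both stability and global attractivity, i.e.\ global asymptotic stability of $\widetilde S$.

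The crux, and the step I expect to be the main obstacle, is showing that $\widetilde G(x,q,\zeta)$ is nonempty at every jump; this is exactly where the slope hypotheses in \eqref{eqn:StabilityConditionLineCase} enter. Since the disk of radius $\dist(x,\zeta)$ about $x$ meets $\widetilde S_{q'}$ iff $d(x,\widetilde S_{q'}) \leq \dist(x,\zeta)$, and since $\dist(x,\zeta)\geq d(x,\widetilde S_{q})$, it suffices to verify $d(x,\widetilde S_{q'}) \leq d(x,\widetilde S_{q})$ at each jump point. Here I would use the key geometric fact that the two lines involved in a jump share the corner $p_i$ at which that jump is triggered, while $x$ lies on the threshold line through $p_i$. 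Writing $x$ in coordinates centered at $p_i$, the distance from $x$ to a line of slope $m$ through $p_i$ is an explicit monotone function of $|m|$: for the vertical threshold jumps ($S_1\!\to\!S_2$ at $x_1=\theta_1+h_1$ and $S_3\!\to\!S_4$ at $x_1=\theta_1-h_1$) it equals $|x_2-p_i(2)|/\sqrt{m^2+1}$, decreasing in $|m|$, so the inequality reduces to $|m_1|\leq|m_2|$ and $|m_3|\leq|m_4|$; for the horizontal threshold jumps ($S_2\!\to\!S_3$ at $x_2=\theta_2+h_2$ and $S_4\!\to\!S_1$ at $x_2=\theta_2-h_2$) it equals $|m|\,|x_1-p_i(1)|/\sqrt{m^2+1}$, increasing in $|m|$, so it reduces to $|m_3|\leq|m_2|$ and $|m_1|\leq|m_4|$. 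These four inequalities are precisely \eqref{eqn:StabilityConditionLineCase}, so under those hypotheses $\widetilde G$ is nonempty at each jump; the simultaneous case $\eta_1=\eta_2=0$ is handled by applying the same two estimates to the two candidate images of $G$.

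Finally, for part~2 I would upgrade convergence from the unbounded set $\widetilde S$ to the bounded set $S$. By Proposition~\ref{eqn:Existence} every maximal solution is complete and non-Zeno, so it cycles through the modes $C_1\to C_2\to C_3\to C_4\to C_1$ infinitely often. In mode $i$ the solution exits exactly when $x$ meets the exit threshold, whose intersection with $\widetilde S_i$ is the corner $p_i$; since jumps leave $x$ unchanged, the entry point into mode $i$ equals the exit point of mode $i-1$. As $d((x,q),\widetilde S)\to0$ by part~1, the exit point in each mode converges to $p_i$ and hence the entry point converges to $p_{i-1}$, so the flowed arc in mode $i$, being a trajectory segment of $\widetilde S_i$, converges (uniformly over its bounded traversal time) to the segment of $\widetilde S_i$ joining $p_{i-1}$ and $p_i$, namely $S_i$. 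Taking the union over $i \in \{1,2,3,4\}$ gives $d((x,q),S)\to0$, so $S$ is globally attractive, completing the proof.
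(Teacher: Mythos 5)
Your proposal is correct. Part 1 is essentially the paper's own argument: the same augmented state $\zeta$ constrained to $\widetilde{S}$, the same exponential contraction of $\dist(x,\zeta)$ along flows (the paper packages this as the Lyapunov function $V=\dist(x,\zeta)^2$ with $\langle \nabla V, \widetilde{F}\rangle = -2\gamma V$ and invokes a hybrid-systems stability result, but the computation is identical), and the same proof that $\widetilde{G}$ is nonempty by comparing point-to-line distances before and after each of the four jump types at the shared corner $p_i$ lying on the triggering threshold; your pairing of the four slope inequalities with the four jumps matches the paper's exactly. Part 2 is where you genuinely diverge. The paper notes $\Omega(x,q)\subset\widetilde{S}$ and argues by contradiction that $\Omega(x,q)$ cannot meet $\widetilde{S}\setminus S$ because (it claims) such points are not in $C\cup D$, so no solution exists from them, contradicting weak invariance of $\omega$-limit sets. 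You instead exploit the cyclic structure directly: exit points lie on threshold lines, each threshold meets the corresponding line $\widetilde{S}_i$ transversally at $p_i$, so part 1 forces exit (hence entry) points to converge to the corners, and continuous dependence of the contractive linear flow over bounded traversal times then carries each flowed arc onto the segment $S_i$. Your route is more elementary---no $\omega$-limit machinery or regularity of the data is needed---and it sidesteps a genuinely delicate point in the paper: the claim that $\widetilde{S}\setminus S$ is disjoint from $C\cup D$ is not literally true (points on $\widetilde{S}_1$ just past $p_0$, with $x_1<p_0(1)$ and $x_2$ slightly above $\theta_2-h_2$, do lie in $C_1$), so the paper's contradiction really requires the backward weak invariance of $\omega$-limit sets, which your argument never touches. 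One attribution to fix: that maximal solutions eventually cycle through $C_1\to C_2\to C_3\to C_4$ infinitely often is not a consequence of Proposition~\ref{eqn:Existence} (completeness and non-Zenoness) alone; it relies on the case-5 vector-field analysis in the proof of Proposition~\ref{eqn:EP} (no equilibria in any $C_i$, flow pointing toward the exit threshold), so you should cite that transition-graph argument at that step.
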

\IfConf{\begin{figure}[H]
\begin{center}
\psfrag{x1 (blue), z1 (red)}[][][0.65]{\hspace{-0.3in}$x_1, \zeta_1$}
\psfrag{x2 (blue), z2 (red)}[][][0.65][-90]{\hspace{-0.3in}$x_2, \zeta_2$}
\psfrag{t [sec]}[][][0.65]{$t [sec]$}
\psfrag{d(x,Aext)}[][][0.65]{$\dist(x,\zeta)$ (solid), $|x|_S$ (dashed)}
\subfigure[Trajectories $x$ and $\zeta$ on the plane. \label{fig:PlanarTrajectoryWithVirtual}]
{\includegraphics[width=0.23\textwidth]{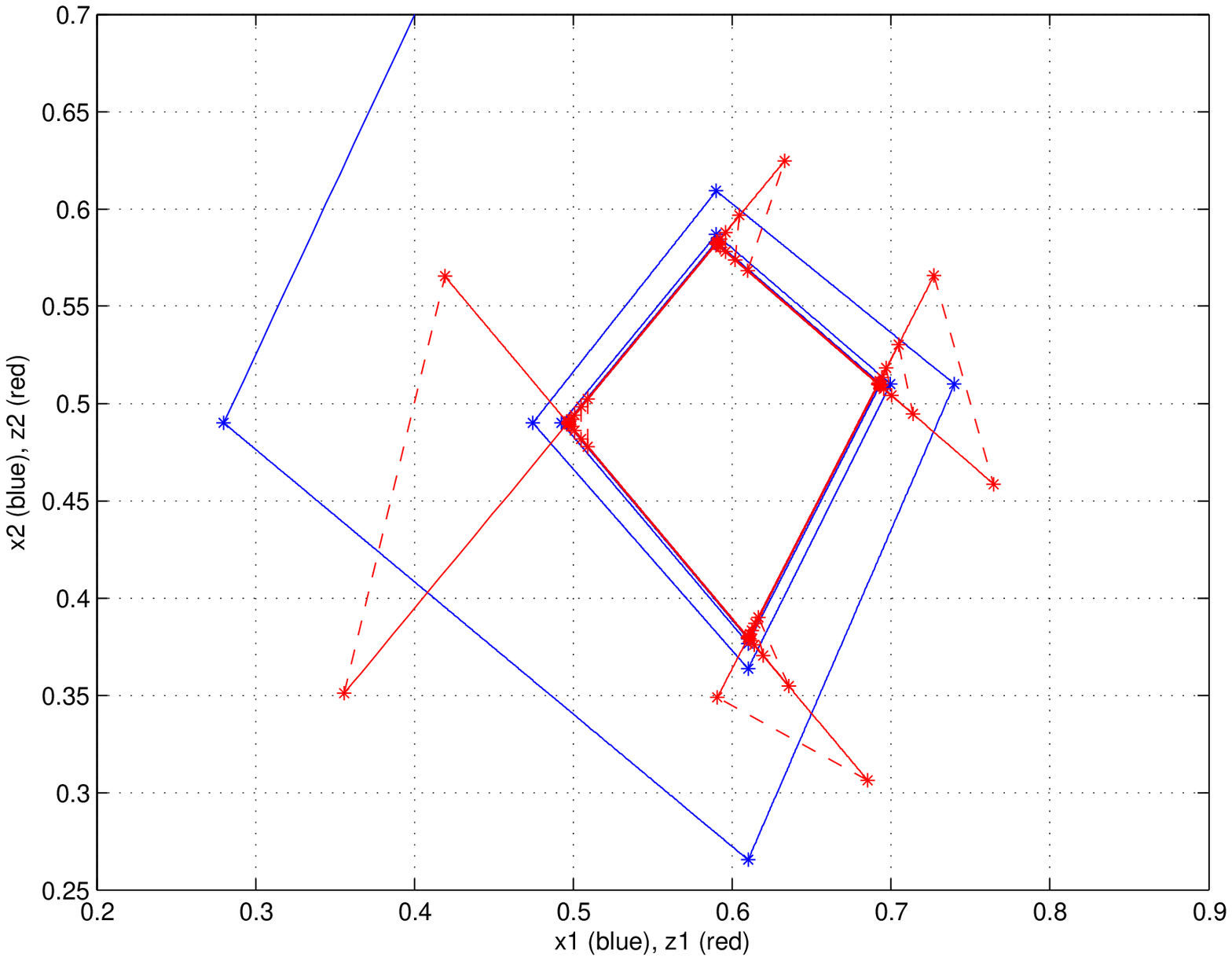}}
\subfigure[Distance between trajectory $x$ and $\zeta$ (solid),
and distance between $x$ and $S$ (dashed). \label{fig:NaturalAndDecreasingDistance}]
{\includegraphics[width=0.23\textwidth]{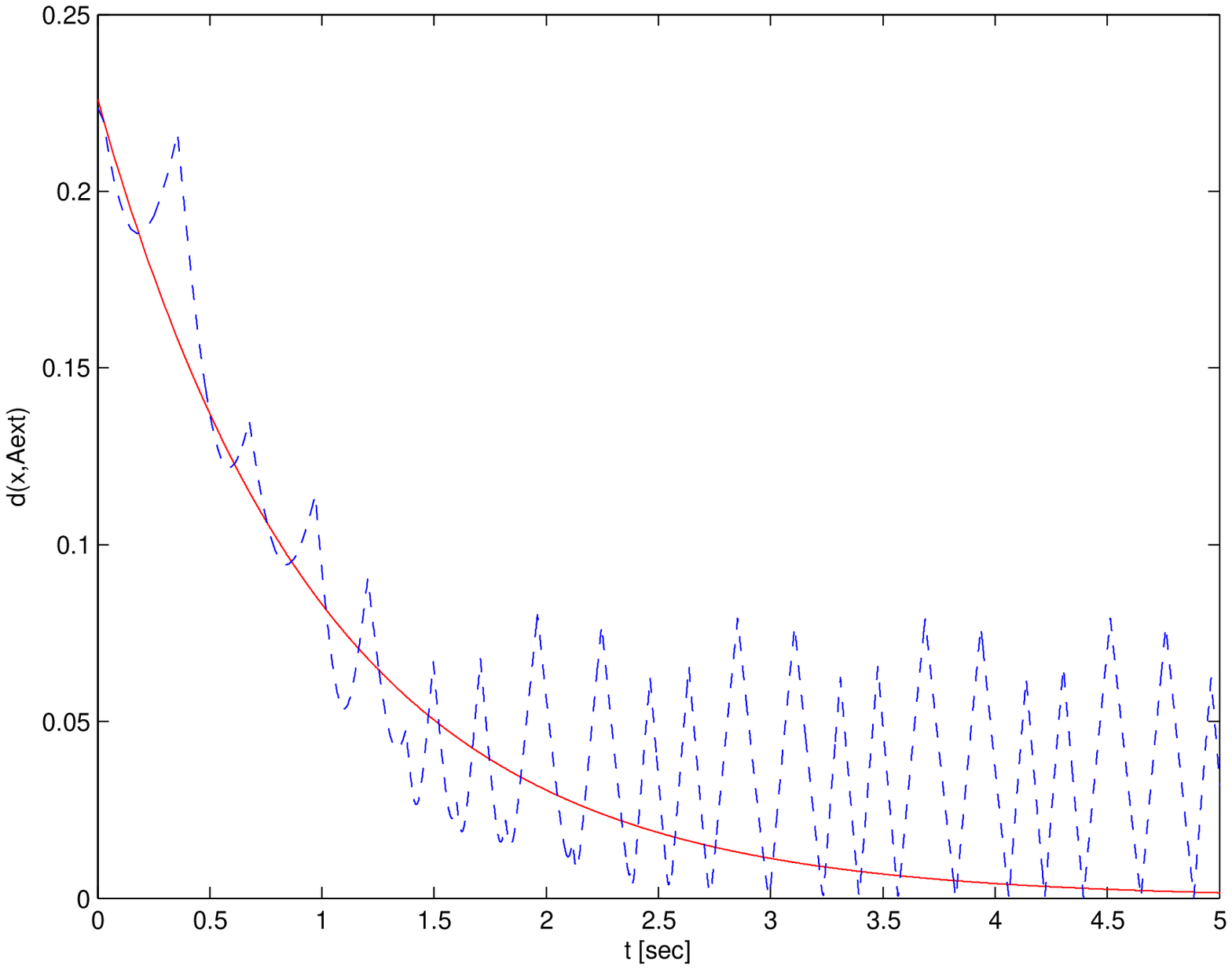}}\qquad
\end{center}
\vspace{-0.1in}
\caption{\label{fig:DecreasingDis}\emph{Trajectories $x$ and $\zeta$ on the plane, and distance between $x$ and $\zeta$ compared to distance between $x$ and the set $S$ (dashed).}}
\end{figure}}
{
\begin{figure}[H]
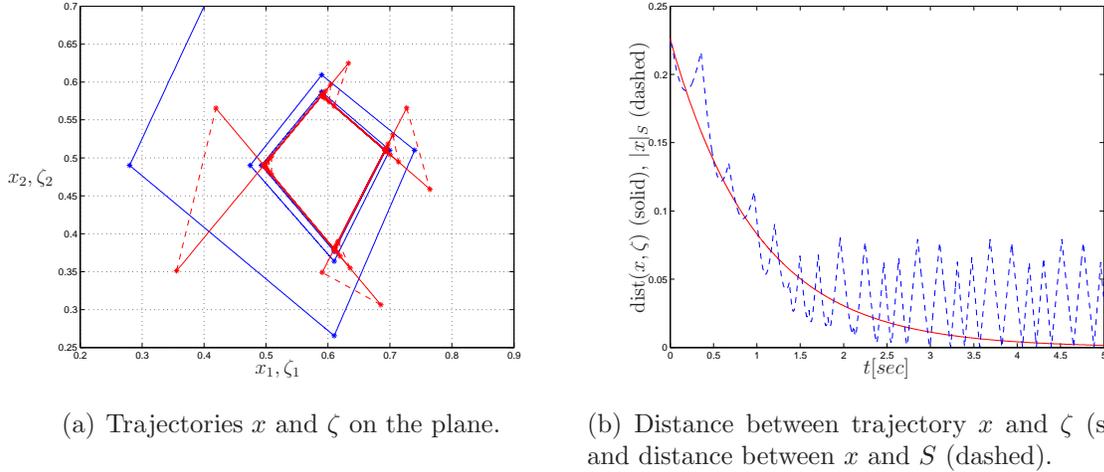

\begin{center}
\subfigure[Trajectories $x$ and $\zeta$ on the plane. \label{fig:PlanarTrajectoryWithVirtual}]
{\psfragfig*[width=0.45\textwidth]{Figures/TwoGenes_PlanarTrajectoryWithVirtual}
{
\psfrag{x1 (blue), z1 (red)}[][][0.65]{\hspace{-0.3in}$x_1, \zeta_1$}
\psfrag{x2 (blue), z2 (red)}[][][0.65][-90]{\hspace{-0.3in}$x_2, \zeta_2$}
\psfrag{t [sec]}[][][0.65]{$t [sec]$}
\psfrag{d(x,Aext)}[][][0.65]{$\dist(x,\zeta)$ (solid), $|x|_S$ (dashed)}
}
}
\subfigure[Distance between trajectory $x$ and $\zeta$ (solid),
and distance between $x$ and $S$ (dashed). \label{fig:NaturalAndDecreasingDistance}]
{\psfragfig*[width=0.45\textwidth]{Figures/TwoGenes_NaturalAndDecreasingDistance}
{
\psfrag{x1 (blue), z1 (red)}[][][0.65]{\hspace{-0.3in}$x_1, \zeta_1$}
\psfrag{x2 (blue), z2 (red)}[][][0.65][-90]{\hspace{-0.3in}$x_2, \zeta_2$}
\psfrag{t [sec]}[][][0.65]{$t [sec]$}
\psfrag{d(x,Aext)}[][][0.65]{$\dist(x,\zeta)$ (solid), $|x|_S$ (dashed)}
}
}\qquad
\end{center}
\vspace{-0.1in}
\caption{\label{fig:DecreasingDis}\emph{Trajectories $x$ and $\zeta$ on the plane, and distance between $x$ and $\zeta$ compared to distance between $x$ and the set $S$ (dashed) with the 
same parameters and initial conditions as in Figure~\ref{fig:Dis}.}}
\end{figure}
}

Figure~\ref{fig:DecreasingDis} shows trajectories $x$ and $\zeta$ 
as well as the distance between them obtained from the hybrid system
augmented with the state $\zeta$.  As Figure~\ref{fig:NaturalAndDecreasingDistance}
indicates, this distance (solid) decreases to zero while, as pointed out earlier,
the natural distance between $x$ and $S$ (dashed) does not. 
\NotForConf{
The extended version of the hybrid system $\HS$ in \eqref{eqn:H2}
can be written as
\begin{equation}\label{eqn:H2ext}
\widetilde{\cal H}: (x,q,\zeta)\in \Z\times\reals^2_{\geq 0} \left\{
\begin{array}{ll}
\matt{\dot{x}_1 
\\
\dot{x}_2
\\
\dot{q}_1 
\\
\dot{q}_2
\\
\dot{\zeta}_1 
\\
\dot{\zeta}_2
}= \matt{
k_1(1-q_2)-\gamma_1x_1\\
k_2q_1-\gamma_2x_2\\
0\\
0\\
k_1(1-q_2)-\gamma_1 \zeta_1\\
k_2q_1-\gamma_2 \zeta_2
} =: \widetilde{F}(x,q,\zeta) \\
\hspace{2in} (x,q)\in C, (\zeta,q) \in \widetilde{S},\\
\matt{z^+\\ \zeta^+ }\in \matt{G(z)\\ \widetilde{G}(x,q,\zeta)} =: \widetilde{G}(x,q,\zeta) \\
\hspace{2in}  (x,q)\in D, (\zeta,q) \in \widetilde{S}.
\end{array}\right.
\end{equation}
}

\IfConf{}{We are now ready to prove Theorem~\ref{thm:AsymptoticStabilityLineCase}.}
\IfConf{}{\begin{proof} {\bf (of Theorem~\ref{thm:AsymptoticStabilityLineCase})}
First, we show that $\widetilde{G}$ is nonempty for each $(x,q,\zeta)$ such that $(x,q)\in D$ and $(\zeta,q) \in \widetilde{S}$.
For each $(x,q) \in D$, the minimum possible value for $\dist(x,\zeta)$
with $(\zeta,q) \in \widetilde{S}$ is given by the minimum distance 
between $x$ and the projection on $\reals^2$ of $\widetilde{S}$ for the chosen $q$.
There are four possible cases for this distance (one per possible value of $q$)
and each distance can be computed as the minimum distance between
the point $x$ and the line defined by $\widetilde{S}$ for the chosen $q$.
For jumps from $q = (0,1)$ to $q = (0,0)$,
in which case $x_1 \in [0,p_0(1)]$, $x_2 = \theta_2 - h_2$, 
the minimum distance is
\begin{equation}\label{eqn:SamplePreDistance}
\frac{|m_4| |x_1- p_0(1)|}{\sqrt{m_4^2 + 1}}
\end{equation}
Similarly, the minimum distance from $x$ to the line defined by $\widetilde{S}$ for 
$q = (0,0)$, which is the distance between $(x,q)$ and $\widetilde{S}$ after the jump,
is given by
\begin{equation}\label{eqn:SamplePostDistance}
\frac{|m_1| |x_1- p_0(1)|}{\sqrt{m_1^2 + 1}}.
\end{equation}
Then, imposing that 
\eqref{eqn:SamplePostDistance} is no larger than
\eqref{eqn:SamplePreDistance} guarantees that, in the worst case,
$\dist(x,\zeta') \leq \dist(x,\zeta)$.
Then, we require
\begin{equation}
\frac{|m_1| |x_1- p_0(1)|}{\sqrt{m_1^2 + 1}} \leq
\frac{|m_4| |x_1- p_0(1)|}{\sqrt{m_4^2 + 1}}
\qquad
\Longleftrightarrow
\qquad
|m_1| \leq |m_4|.
\end{equation}
Proceeding in this way, for jumps from $q = (0,0)$ to $q = (1,0)$, 
from $q = (1,0)$ to $q = (1,1)$, and
from $q = (1,1)$ to $q = (0,1)$ 
we require
\begin{equation}
|m_1| \leq |m_2|, \qquad 
|m_3| \leq |m_2|, \qquad 
|m_3| \leq |m_4|,
\end{equation}
respectively.  Under these conditions,
which can be rewritten as in \eqref{eqn:StabilityConditionLineCase}, $\widetilde{G}$ is nonempty.

For each $(x,q,\zeta) \in \reals^2\times \{0,1\}^2 \times \reals^2$,
let
$$
V(x,q,\zeta) =  \dist(x,\zeta)^2
$$
and note that $V$ is positive definite with respect to the closed set
\begin{equation}
\A := \defset{(x,q,\zeta)}{ x = \zeta, (x,q) \in C \cup D, (\zeta,q) \in \widetilde{S} }.
\end{equation}
For each $(x,q)\in C$ and $(\zeta,q) \in \widetilde{S}$,
we obtain
\begin{eqnarray} \non
\langle \nabla V(x,q,\zeta), \widetilde{F}(x,q,\zeta)  \rangle 
&=& -2 \left( \gamma_1 (x_1 - \zeta_1)^2 + \gamma_2 (x_2 - \zeta_2)^2 \right)
\\ \label{eqn:VdecreaseFlows}
&=& -2 \gamma V(x,q,\zeta),
\end{eqnarray}
where we have used the condition $\gamma_1 = \gamma_2 = \gamma$.
For each $(x,q)\in D$ and $(\zeta,q) \in \widetilde{S}$,
we have
\begin{eqnarray}\non
\max_{\xi \in \widetilde{G}(x,q,\zeta)} V(\xi) - V(x,q,\zeta)
&=& 
\max_{(x,\xi_2) \in G(x,q), (\xi_3,\xi_2) \in \widetilde{G}(x,q,\zeta)}  \dist(x,\xi_3)^2 -  \dist(x,\zeta)^2
\\ \label{eqn:VdecreaseJumps}
& \leq &  0
\end{eqnarray}
since, by definition of $\widetilde{G}$, 
we have that any possible value of $\xi_3$ obtained from $\widetilde{G}$ is such that $\dist(x,\xi_3)^2 \leq \dist(x,\zeta)^2$.
Then, since every maximal solution to $\HS$ (and, hence, to $\widetilde{\HS}$) is complete
and has a hybrid time domain unbounded in the $t$ direction,
\cite[Proposition 3.29]{83} implies that $\A$ is globally asymptotically stable.\footnote{The 
same result can be obtained using the invariance principle for hybrid systems in \cite{SanfeliceGoebelTeel05}.}
In fact, 
combining
\eqref{eqn:VdecreaseFlows} and \eqref{eqn:VdecreaseJumps}, and simple integration, we get that
every solution $(x,q,\zeta)$ to $\widetilde{\HS}$ satisfies
\begin{eqnarray}\label{eqn:dKLBound}
\dist(x(t,j),\zeta(t,j)) \leq \exp(-\gamma t) \dist(x(0,0),\zeta(0,0)) 
\end{eqnarray}
for all
$(t,j) \in \dom (x,q,\zeta)$.

Now, we relate the asymptotic stability property above to $\widetilde{S}$. 
The bound
\eqref{eqn:dKLBound}
holds for any $\zeta(0,0)$ such that $(\zeta(0,0),q(0,0)) \in \widetilde{S}$, 
in particular, when $\zeta(0,0)$ is such that\footnote{Note that we could also pick $\zeta(0,0)$ such that the distance to $S$ matches.}
$\dist(x(0,0),\zeta(0,0)) = d((x(0,0),q(0,0)),\widetilde{S})$.
Moreover, note that since $(\zeta(t,j),q(t,j))\in \widetilde{S}$ for all $(t,j) \in \dom (x,q,\zeta)$, we have
\begin{eqnarray}
d((x(t,j),q(t,j)),\widetilde{S}) \leq \dist(x(t,j),\zeta(t,j))
\end{eqnarray}
for all
$(t,j) \in \dom (x,q,\zeta)$.
Then, from \eqref{eqn:dKLBound} and the above arguments, we obtain
\begin{eqnarray}
\label{eqn:dKLBoundS}
d((x(t,j),q(t,j)),\widetilde{S}) \leq \exp(-\gamma t) d((x(0,0),q(0,0)),\widetilde{S}) 
\end{eqnarray}
for all
$(t,j) \in \dom (x,q,\zeta)$.

To show that the components $(x,q)$ of the solutions to $\widetilde{\HS}$ converge to $S$, 
we proceed by contradiction and suppose 
that there exists a maximal solution to $\widetilde{\HS}$
with components $(x,q)$ with $\omega$-limit set $\Omega(x,q)$
such that $\Omega(x,q) \cap (\widetilde{S} \setminus S) \not= \emptyset$.
Let $z^\circ \in \Omega(x,q) \cap (\widetilde{S} \setminus S) \not= \emptyset$.
By the properties of the $\omega$-limit set of complete solutions to hybrid systems (see \cite[Definition 3.2 and Lemma 3.3]{SanfeliceGoebelTeel05}),
there exists at least one solution starting from $z^\circ$, which is impossible since points 
in $\widetilde{S} \setminus S$ are not in $C \cup D$ and $\widetilde{\HS}$
satisfies the hybrid basic conditions.  
Then, $\Omega(x,q)$ cannot contain points that are 
not in $S$, which implies that $\Omega(x,q) \subset S$.
Convergence 
of components $(x,q)$ of the solutions to $\widetilde{\HS}$ to $S$
follows by the very definition of $\omega$-limit set of a solution.

\end{proof}}

\IfConf{
Due to the regularity properties of the data of $\HS$,
the asymptotic stability guaranteed by Theorem~\ref{thm:AsymptoticStabilityLineCase}
is robust to small perturbations.
}
{}

\NotForConf{
\subsection{Robustness properties}

When the system $\HS$ in \eqref{eqn:H2} is restricted to a compact set of the initial conditions for the state component $x$,
the asymptotic stability of the set $\widetilde{S}$ guaranteed
in Theorem~\ref{thm:AsymptoticStabilityLineCase} 
is robust to small perturbations.
We define this set of initial conditions as the compact box 
in $\reals^2_{\geq 0}$ as
$$
K := [0,x_1^{\max}]\times [0,x_2^{\max}]
$$
with positive constants $x_1^{\max}$ and $x_2^{\max}$
such that $S \subset K \times \{0,1\}^2$.
We consider perturbations on the state and on the continuous dynamics of the system.
The signal $d_1:\realsgeq \to \delta_1 \ball \subset \reals^2$ defines 
the perturbation on the state and
the signal $d_2:\realsgeq \to \delta_2 \ball \subset \reals^2$ defines 
the perturbation on the flow of $x$, where $\delta_1, \delta_2 > 0$.
In this way, the perturbed hybrid system is given by
\begin{equation}\label{eqn:H2pert}
{\cal H}_{\delta}: z\in \Z \left\{
\begin{array}{ll}
\dot{z}= \matt{
k_1(1-q_2)-\gamma_1(x_1+d_{11}(t)) + d_2(t)\\
k_2q_1-\gamma_2(x_2+d_{12}(t)) + d_2(t)\\
0\\
0}
& (x+d_1(t),q)\in C \cap K\\
z^+\in G(z)&  (x+d_1(t),q)\in D \cap K,
\end{array}\right.
\end{equation}
where $C$ is defined in \eqref{eqn:C}, $G$ in \eqref{eqn:G}, and $D$ in \eqref{eqn:D}.
The perturbation $d_1$ captures uncertainty in the values of the protein concentrations $x$
while $d_2$ models the uncertainty in the dynamical model governing $x$.\footnote{Perturbations on each of the system parameters, in particular, the thresholds $\theta_i$ and hysteresis half widths $h_i$, can be treated similarly.}  In particular, the
latter perturbation allows for uncertainty in the parameters $k_1, k_2$.  
For instance, if $k_1$ is replaced by $k_1 + k_1^{\delta}$ 
with 
$k_1^{\delta} \in \reals$ then the continuous dynamics of $x_1$ 
along a solution $(x,q)$ to $\HS$ can be rewritten as
\IfConf
{\begin{eqnarray*}
\frac{d}{dt}x_1(t,j) &=& (k_1 + k_1^{\delta}) (1-q_2(t,j)) - \gamma_1(x_1+d_{11}(t))\\
& =& k_1 (1-q_2(t,j)) - \gamma_1(x_1(t,j)+d_{11}(t))\\
& & +k_1^{\delta} (1-q_2(t,j)),
\end{eqnarray*}}{\begin{eqnarray*}
\frac{d}{dt}x_1(t,j) &=& (k_1 + k_1^{\delta}) (1-q_2(t,j)) - \gamma_1(x_1+d_{11}(t))\\
& =& k_1 (1-q_2(t,j)) - \gamma_1(x_1(t,j)+d_{11}(t)) +
k_1^{\delta} (1-q_2(t,j)),
\end{eqnarray*}}
which leads to\footnote{For each $t$ such that $(t,j)\in \dom (x,q)$, 
the function $j:\realsgeq \to \nats$ is given by
$j(t) = j'$,  
where $j' = \max \defset{j}{(t,j) \in \dom (x,q)}$.}
$d_{21}(t) = k_1^{\delta} (1-q_2(t,j(t)))$.
Note that since $q_2$ takes values from $\{0,1\}$, then 
we have that $|d_{2}(t)| \leq \delta_2$
when $|k_1^{\delta}| \leq \frac{\sqrt{2}}{2}\delta_2$.  

Due to $\HS$ satisfying 
conditions (A1)-(A3) in Lemma~\ref{lemma:HBC},
the stability property guaranteed by Theorem~\ref{thm:AsymptoticStabilityLineCase}
is robust to small perturbations.  This 
property follows from the 
results on robustness of stability for hybrid systems 
in \cite{83}.

\begin{theorem}
\label{thm:RobustStability}
For each 
positive constants $x_1^{\max}$ and $x_2^{\max}$
defining 
$
K := [0,x_1^{\max}]\times [0,x_2^{\max}]
$
such that $S \subset K \times \{0,1\}^2$
and system constants satisfying 
case 5 of Table~\ref{tab:EqPoints},
there exists\footnote{A function $\beta$ is of class $\classKL$
if it is continuous, $r \mapsto \beta(r,s)$ is zero at         
zero and nondecreasing, and $s \mapsto \beta(r,s)$ is        
nonincreasing and converges to zero as $s$ goes to $\infty$.} $\beta \in \classKL$
such that,
for each $\eps >0$ 
there exists $\delta>0$ such that for each measurable
functions
${d}_1:\realsgeq \to \delta_1\ball$,
${d}_2:\realsgeq \to \delta_2\ball$
with $\delta_1, \delta_2 \in (0,\delta]$,
every solution
$(x,q)$ to ${\HS}_{\delta}$ 
with $(x(0,0),q(0,0)) \in K$
satisfies
\IfConf
{\begin{eqnarray*}
|(x(t,j),q(t,j))|_{\widetilde{S} \cap K} &\leq& \beta(|(x(0,0),q(0,0))|_{\widetilde{S} \cap K}, t+j)\\
& & +\eps\forall (t,j) \in \dom (x,q).
\end{eqnarray*}}{\begin{eqnarray*}
|(x(t,j),q(t,j))|_{\widetilde{S} \cap K}\leq \beta(|(x(0,0),q(0,0))|_{\widetilde{S} \cap K}, t+j)+\eps\  \ \ \forall (t,j) \in \dom (x,q).
\end{eqnarray*}}
\end{theorem}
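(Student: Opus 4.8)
The plan is to deduce the result from the general robustness theory for well-posed hybrid systems in \cite{83}, using as the nominal stability property the global asymptotic stability of $\widetilde{S}$ established in Theorem~\ref{thm:AsymptoticStabilityLineCase}. The first step is to observe that, by Lemma~\ref{lemma:HBC}, the data $(C,F,D,G)$ of $\HS$ satisfy the hybrid basic conditions (A1)--(A3), so $\HS$ is a well-posed (nominally robust) hybrid system. Since $\widetilde{S}$ is unbounded (a union of four lines crossed with discrete values of $q$), I would first pass to the compact attractor $\widetilde{S}\cap K$: restricting the flow and jump sets to the closed box $K$ yields a well-posed hybrid system for which $\widetilde{S}\cap K$ is compact, and the exponential bound in Theorem~\ref{thm:AsymptoticStabilityLineCase} shows that $\widetilde{S}\cap K$ is $\classKL$-asymptotically stable on $K$. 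This reduces the claim to a statement about robustness of $\classKL$-asymptotic stability of a \emph{compact} set.

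Next I would recast the perturbations in the standard framework of \cite{83}. The measurement-type perturbation $d_1$, which enters the membership tests $(x+d_1(t),q)\in C\cap K$ and $(x+d_1(t),q)\in D\cap K$, together with the additive flow perturbation $d_2$ bounded by $\delta_2$, are exactly of the admissible form for which every solution to ${\HS}_\delta$ is also a solution of the $\delta$-inflated (outer perturbation) system obtained by replacing $C$ and $D$ by their $\delta$-fattenings $\defset{z}{(z+\delta\ball)\cap C\neq\emptyset}$ and $\defset{z}{(z+\delta\ball)\cap D\neq\emptyset}$, and $F$ by the inflated map $z\mapsto \overline{\mathrm{co}}\,F\big((z+\delta\ball)\cap C\big)+\delta\ball$ (and analogously for $G$), all intersected with $K$. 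As $\delta\downarrow 0$ these inflated data converge graphically to the nominal data of $\HS$ restricted to $K$, which is precisely the continuity-of-solutions hypothesis needed below.

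With these two ingredients I would invoke the semiglobal practical robust $\classKL$-asymptotic stability result for well-posed hybrid systems (see, e.g., \cite[Chapter 7]{83}): for a compact set that is $\classKL$-asymptotically stable for a hybrid system satisfying the hybrid basic conditions, there is a single $\beta\in\classKL$ such that, for every compact set of initial conditions (here $K\times\{0,1\}^2$) and every $\eps>0$, there is $\delta>0$ for which all solutions of the $\delta$-inflated system satisfy the $\beta$-bound with an $\eps$ offset. Feeding in the inflation of the previous paragraph, and using that by Proposition~\ref{eqn:Existence} every maximal solution to $\HS$ (and hence to ${\HS}_\delta$ for $\delta$ small) is complete and non-Zeno, so that the estimate holds over the entire hybrid time domain, yields exactly the asserted bound on $|(x(t,j),q(t,j))|_{\widetilde{S}\cap K}$. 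The relation $|k_1^{\delta}|\le \tfrac{\sqrt2}{2}\delta_2$ noted above shows that parametric uncertainty in $k_1,k_2$ is subsumed by $d_2$.

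The main obstacle I anticipate is not the invocation of the abstract robustness theorem but the reduction to a compact attractor on $K$. One must check that restricting $C$ and $D$ to $K$ still yields a well-posed system (closedness of $C\cap K$ and $D\cap K$ is immediate since $K$ is a closed box) and, more delicately, that solutions starting in $K$ do not leave the region where the $\classKL$ estimate is valid before the perturbation can be controlled. Since $S\subset K\times\{0,1\}^2$ and the nominal flow on each $C_i$ is contracting toward the affine pieces of $\widetilde{S}$, choosing $x_1^{\max},x_2^{\max}$ large enough makes $K$ forward invariant up to the small inflation, which I would verify via the coordinate change $e$ in \eqref{eqn:Ecoordinates}, whose dynamics $\dot{e}=-\gamma e$ are globally contracting. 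The remaining care is purely bookkeeping: confirming that the particular structure of $d_1$ entering set membership and $d_2$ entering the flow map fits the admissible perturbation class of \cite{83}, which it does because both are globally bounded by $\delta$ and enter continuously.
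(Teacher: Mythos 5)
Your proposal is correct and follows essentially the same route as the paper: the paper itself justifies Theorem~\ref{thm:RobustStability} only by noting that $\HS$ satisfies the hybrid basic conditions (A1)--(A3) of Lemma~\ref{lemma:HBC}, that $\widetilde{S}$ is globally asymptotically stable by Theorem~\ref{thm:AsymptoticStabilityLineCase}, and that the claimed semiglobal practical $\classKL$ bound then follows from the robustness results for well-posed hybrid systems in \cite{83}. Your additional steps---restricting to the compact set $K$ to obtain a compact attractor, recasting $d_1$ and $d_2$ as an inflation of the data, and checking the perturbation class---are exactly the bookkeeping the paper leaves implicit, so there is nothing to flag.
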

}

\section{Numerical results}
\label{sec:4}

\NotForConf{
In this section, we simulate the hybrid system $\cal H$ in \eqref{eqn:H2} within Matlab/Simulink using the
HyEQ Toolbox \cite{80}.
}

\NotForConf{
\subsection{Isolated equilibrium points in Table \ref{tab:EqPoints}}

We perform simulations with parameters satisfying the conditions in Table~\ref{tab:EqPoints} 
for which there are isolated equilibrium points.
\IfConf{
Due to space constraints, we illustrate cases 1 and 3 of Table~\ref{tab:EqPoints} and refer
the reader to \cite{Shu.Sanfelice.13.TR} for the other cases.
}

\subsubsection{Case 1 of Table~\ref{tab:EqPoints}}

Figure \ref{fig:P1} illustrates that, when $\theta_1+h_1 < \frac{k_1}{\gamma_1}< \theta_1^{\max}$, $0< \frac{k_2}{\gamma_2}<\theta_2+h_2$, the solution converges to $z^*_1=[\frac{k_1}{\gamma_1} , \frac{k_2}{\gamma_2}, 1, 0]^\top$. Initially, the concentration of protein {\emph A} $(x_1)$ is low, which inhibits the expression of gene ${\emph b}$, hence the concentration of protein {\emph B} $(x_2)$  decreases and activates the expression of gene {\emph a}. However, after finite time, while the concentration of protein {\emph A}  is above the level $\theta_1+h_1$, which can permit the expression of gene {\emph b}, the concentration of protein {\emph B} increases. Finally, the concentrations of protein {\emph A} and {\emph B}  come to the equilibrium  $(\frac{k_1}{\gamma_1}, \frac{k_2}{\gamma_2}).$ This confirms the result in Proposition~\ref{eqn:Stable1}.

\begin{figure}[h!]
\begin{center}
\subfigure[$x$ components.]
{\psfragfig*[width=0.48\textwidth]{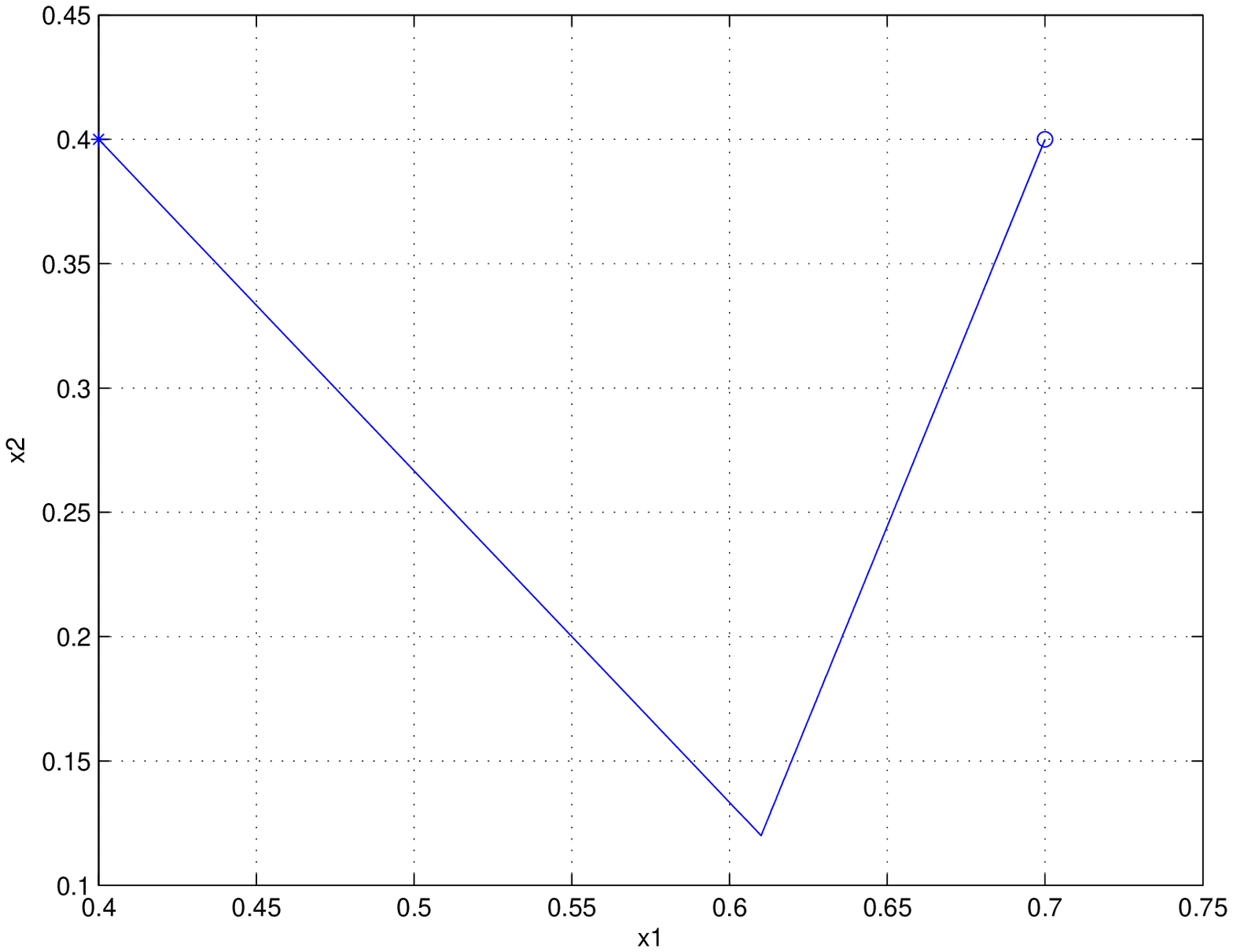}
{
\psfrag{x1}[][][0.9]{$x_1$}
\psfrag{x2}[][][0.9][-90]{$x_2$}
\psfrag{q1}[][][0.9][-90]{$q_1$}
\psfrag{q2}[][][0.9][-90]{$q_2$}
\psfrag{t}[][][0.7]{\quad $t [sec]$}
}
}
\subfigure[$q$ components.]
{\psfragfig*[width=0.48\textwidth]{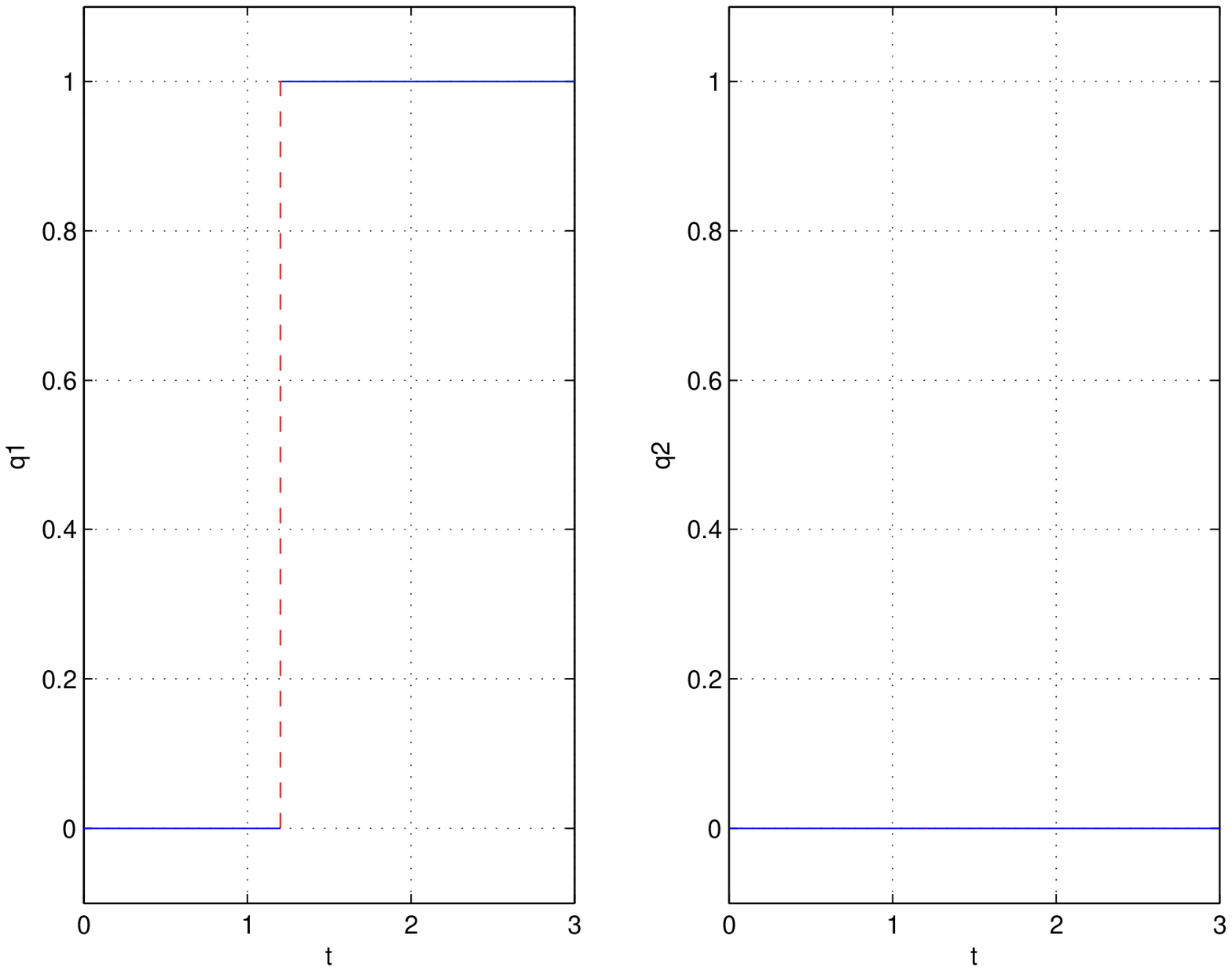}
{
\psfrag{x1}[][][0.9]{$x_1$}
\psfrag{x2}[][][0.9][-90]{$x_2$}
\psfrag{q1}[][][0.9][-90]{$q_1$}
\psfrag{q2}[][][0.9][-90]{$q_2$}
\psfrag{t}[][][0.7]{\quad $t [sec]$}
}
}
\end{center}
\caption{\label{fig:P1}\emph{The $x$ and $q$ components of a solution to $\cal H$ in \eqref{eqn:H2} converging to $z^*_1.$ 
The initial condition is given by
$q_1(0,0)=0$, $q_2(0,0)=0$, $x_1(0,0)=0.4$, $x_2(0,0)=0.4$.
The parameters are as follows: $\theta_1=0.6$, $\theta_2=0.5$, $k_1=0.7$, $k_2=0.4$, $\gamma_1=1$, $\gamma_2=1$, $h_1=0.01$, $h_2=0.01$. The symbol $*$ denotes the initial point and $\circ$ the point that the solution converges to (i.e., $z^*_1$). }}
\end{figure}

\subsubsection{Case 2 of Table~\ref{tab:EqPoints}}

Figure \ref{fig:P2} shows a solution to the equilibrium point $z^*_2=[\frac{k_1}{\gamma_1}, 0, 0, 0]^\top$ with $0<\frac{k_1}{\gamma_1}<\theta_1-h_1$. While both gene {\emph a} and gene {\emph b} are expressed at rate $k_i$, for gene {\emph a}, its degradation is faster than synthesis. When the concentration of protein {\emph A} $(x_1)$ is below some level, gene {\emph b} is inhibited. This confirms the result in Proposition ~\ref{eqn:Stable1}.

\begin{figure}[h!]
\begin{center}
\subfigure[$x$ components.]
{\psfragfig*[width=0.48\textwidth]{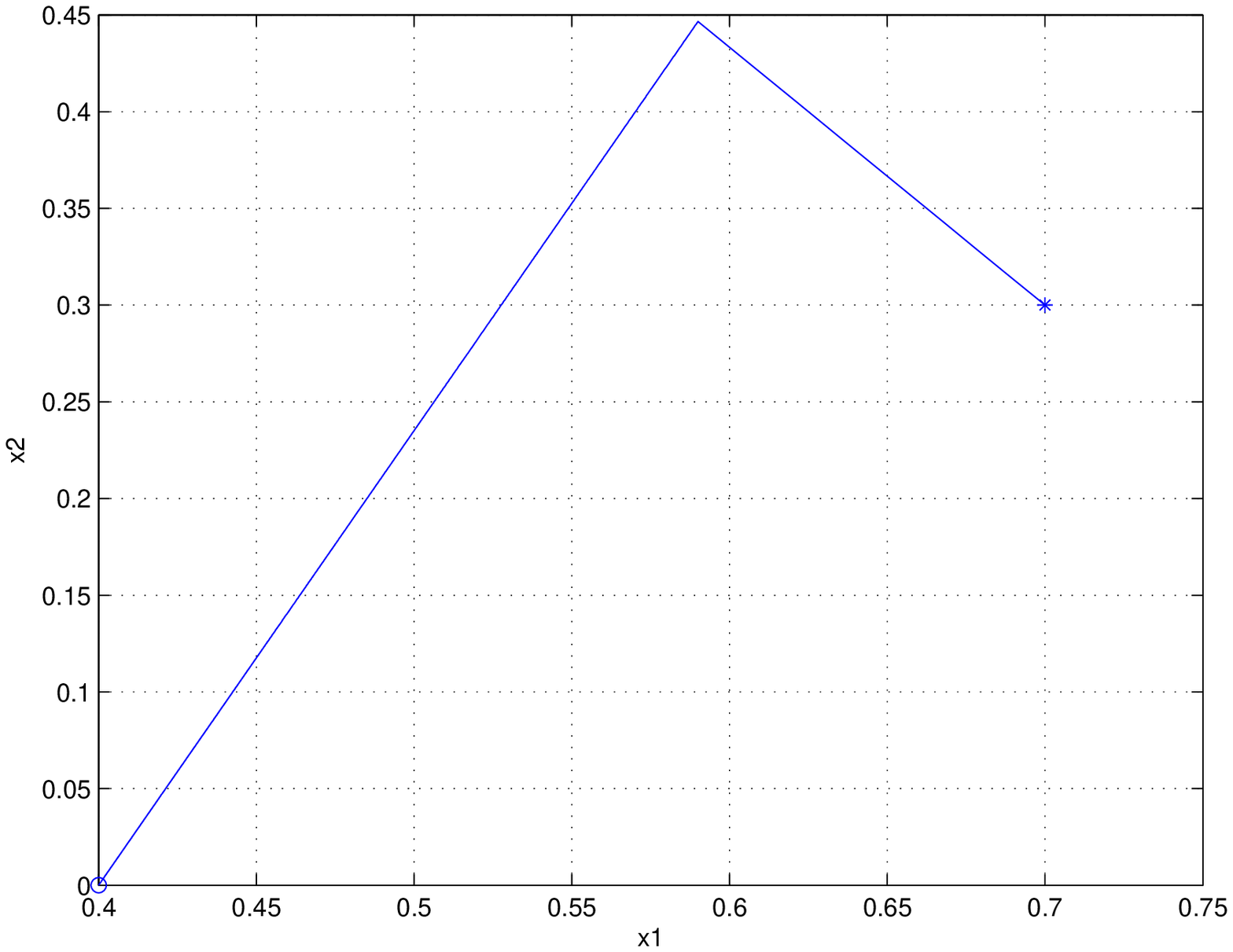}
{
\psfrag{x1}[][][0.9]{$x_1$}
\psfrag{x2}[][][0.9][-90]{$x_2$}
}
}
\subfigure[$q$ components.]
{\psfragfig*[width=0.48\textwidth]{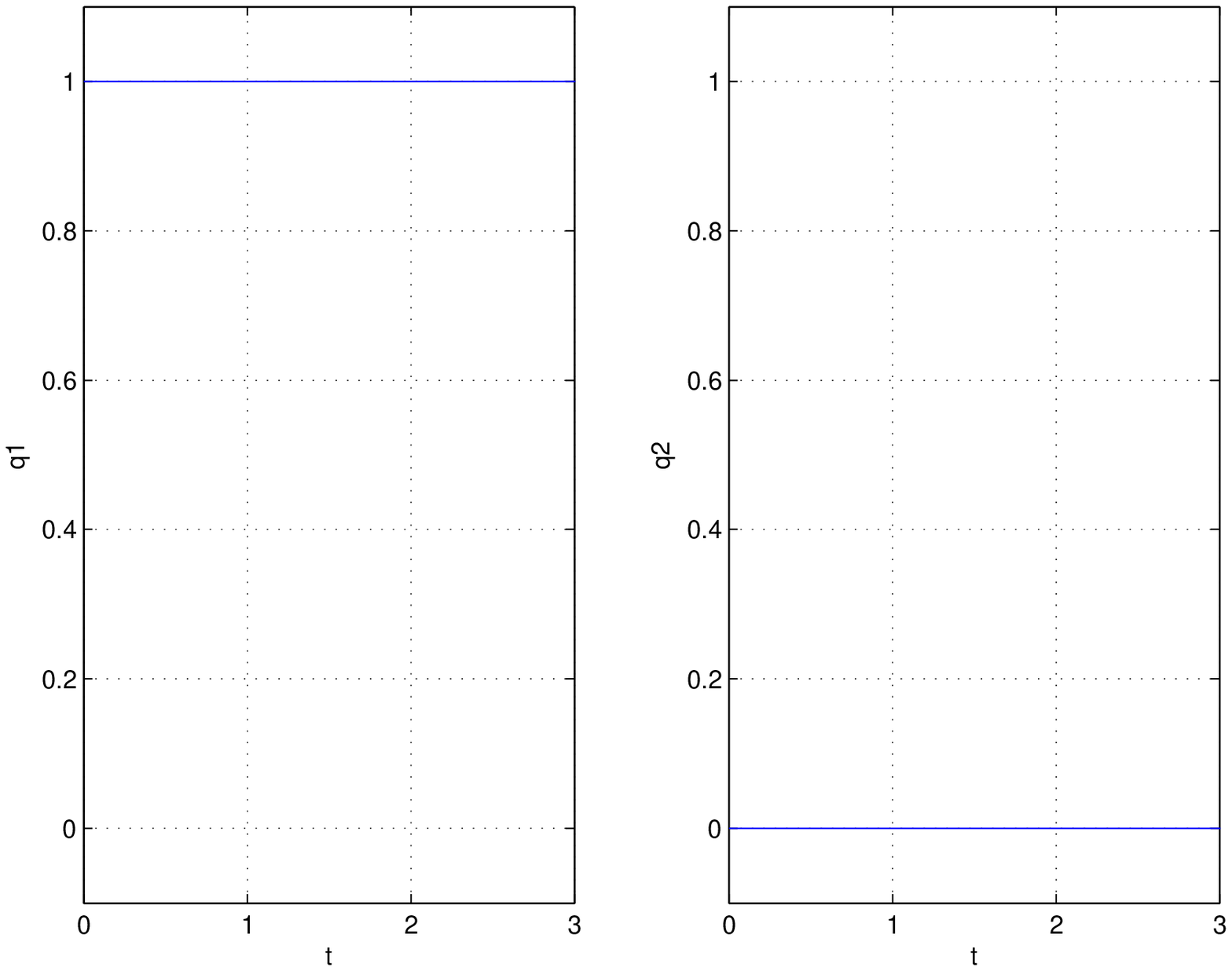}
{
\psfrag{q1}[][][0.9][-90]{$q_1$}
\psfrag{q2}[][][0.9][-90]{$q_2$}
\psfrag{t}[][][0.7]{\quad $t [sec]$}
}
}
\end{center}
\caption{\label{fig:P2}\emph{The $x$ and $q$ components of a solution to $\cal H$ in \eqref{eqn:H2} converging to $z^*_2.$ 
The initial condition is given by
$q_1(0,0)=1$, $q_2(0,0)=0$, $x_1(0,0)=0.7$, $x_2(0,0)=0.3$.
The parameters are as follows: $\theta_1=0.6$, $\theta_2=0.5$, $k_1=0.4$, $k_2=0.7$, $\gamma_1=1$, $\gamma_2=1$, $h_1=0.01$, $h_1=0.01$. The symbol $*$ denotes the initial point and $\circ$ is the point that the solution converges to  (i.e., $z_2^*$).}}
\end{figure}

\subsubsection{Case 3 of Table~\ref{tab:EqPoints}}

Figure \ref{fig:P4} indicates that, when $\theta_1-h_1 < \frac{k_1}{\gamma_1} < \theta_1+h_1$, $0 < \frac{k_2}{\gamma_2} < \theta_2+h_2$ with the initial value $z(0, 0)\in C_2:=\{q_1=1, q_2=0, x_1\geq\theta_1-h_1, x_2\leq\theta_2+h_2\}$. The solution flows towards $z^*_1=[\frac{k_1}{\gamma_1}, \frac{k_2}{\gamma_2}, 1, 0]^\top$. Under these conditions, gene {\emph a}  and gene {\emph b} are expressed at rate $k_i$, $i=1, 2$, respectively. However, for gene {\emph a}, its degradation is faster than its synthesis. This confirms the result in Proposition~\ref{eqn:Stable2}.

\begin{figure}[h!]
\begin{center}
\subfigure[$x$ components.]
{\psfragfig*[width=0.48\textwidth]{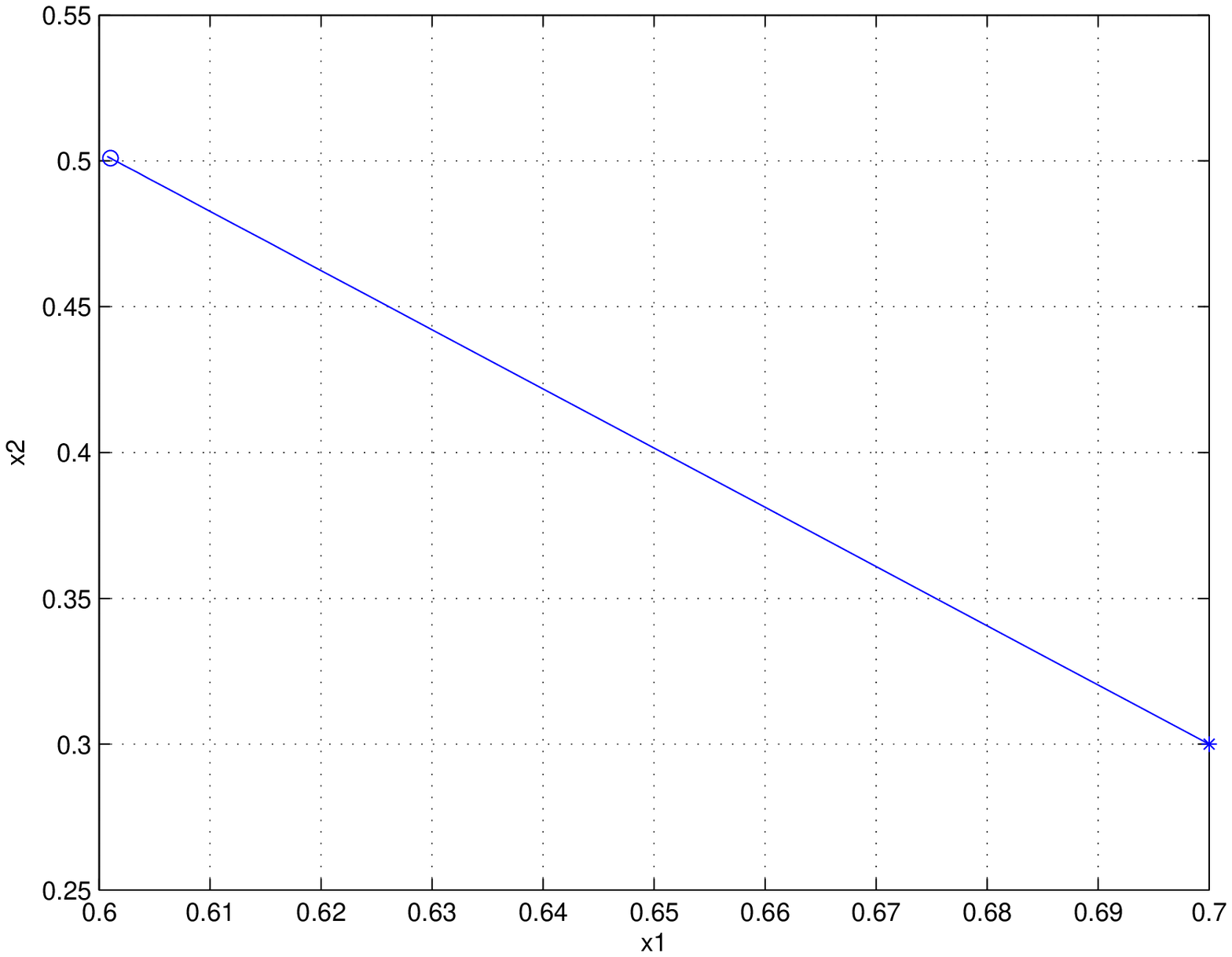}
{
\psfrag{x1}[][][0.9]{\!\!\!\!\!\!$x_1$}
\psfrag{x2}[][][0.9][-90]{$x_2$}
}
}
\subfigure[$q$ components.]
{\psfragfig*[width=0.48\textwidth]{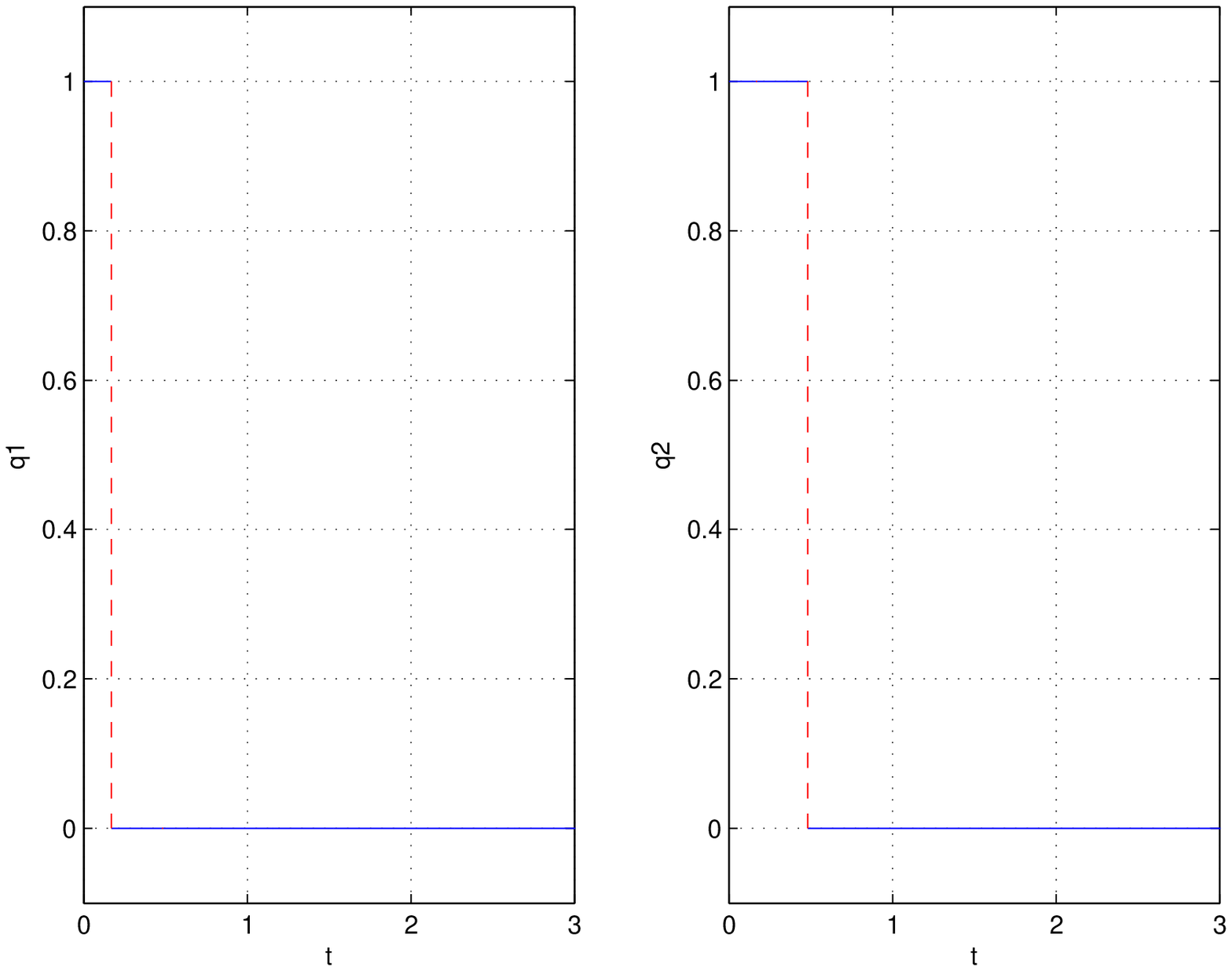}
{
\psfrag{q1}[][][0.9][-90]{$q_1$}
\psfrag{q2}[][][0.9][-90]{$q_2$}
\psfrag{t}[][][0.7]{\quad $t [sec]$}
}
}
\end{center}
\caption{\label{fig:P4}\emph{The $x$ and $q$ components of a solution to $\cal H$ in \eqref{eqn:H2} converging to $z^*_1.$ 
The initial conditions are given by
$q_1(0,0)=1$, $q_2(0,0)=0$, $x_1(0,0)=0.7$, $x_2(0,0)=0.3$.
The parameters are as follows: $\theta_1=0.6$, $\theta_2=0.5$, $k_1=0.601$, $k_2=0.501$, $\gamma_1=1$, $\gamma_2=1$, $h_1=0.02$, $h_1=0.02$. The symbol $*$ denotes the initial point and $\circ$ is the point that the solution converges to (i.e., $z_1^*$).}}
\end{figure}

\begin{figure}[h!]
\begin{center}
\subfigure[$x$ components.]
{\psfragfig*[width=0.48\textwidth]{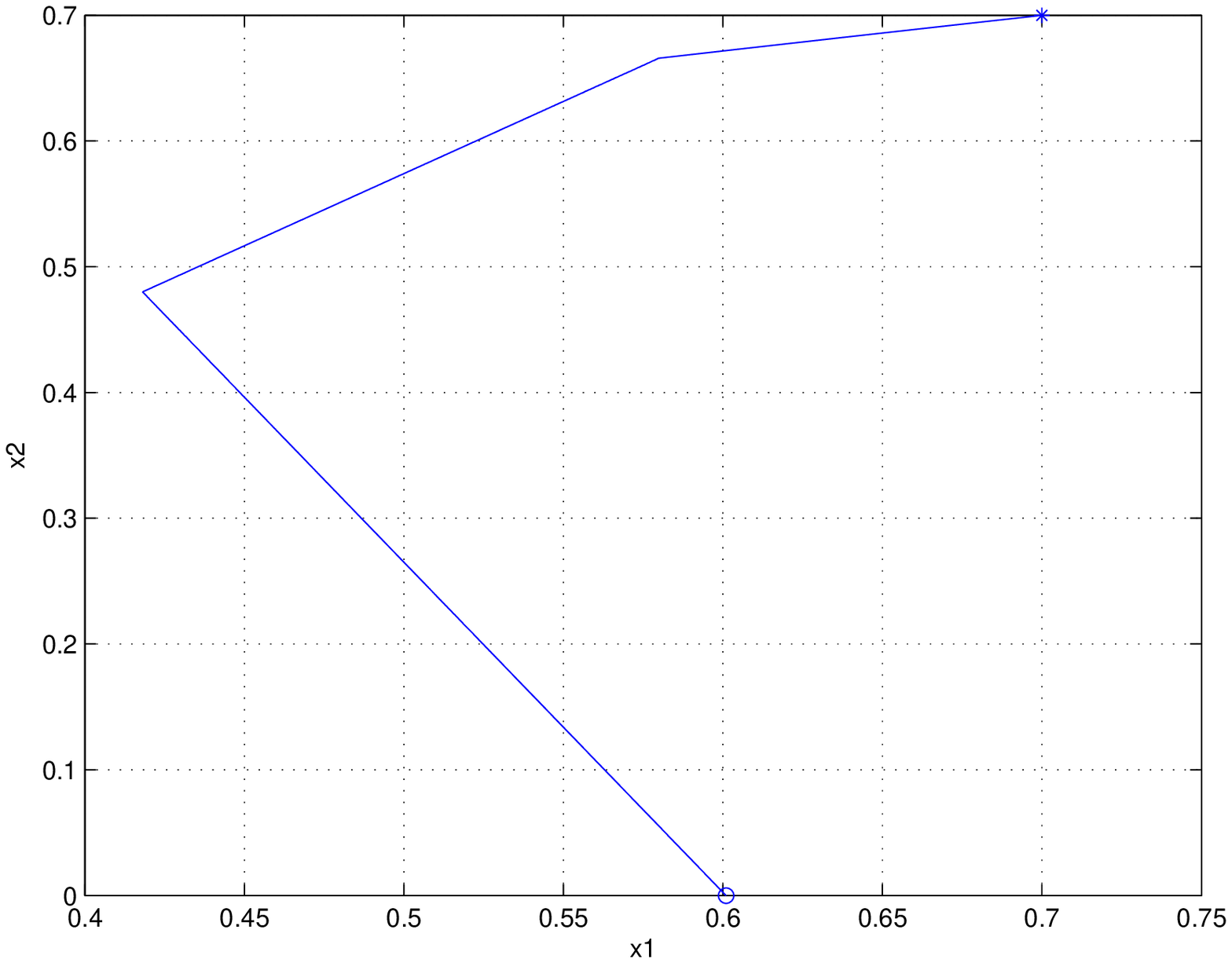}
{
\psfrag{x1}[][][0.9]{$x_1$}
\psfrag{x2}[][][0.9][-90]{$x_2$}
}
}
\subfigure[$q$ components.]
{\psfragfig*[width=0.48\textwidth]{Figures/TwoGenes_Case3b}
{
\psfrag{q1}[][][0.9][-90]{$q_1$}
\psfrag{q2}[][][0.9][-90]{$q_2$}
\psfrag{t}[][][0.7]{\quad $t [sec]$}
}
}
\end{center}
\vspace{-0.15in}
\caption{\label{fig:P3}\emph{The $x$ and $q$ components of a solution to $\cal H$ in \eqref{eqn:H2} converging to $z^*_1.$ 
The initial conditions are given by
$q_1(0,0)=1$, $q_2(0,0)=1$, $x_1(0,0)=0.7$, $x_2(0,0)=0.7$.
The parameters are as follows: $\theta_1=0.6$, $\theta_2=0.5$, $k_1=0.601$, $k_2=0.501$, $\gamma_1=1$, $\gamma_2=1$, $h_1=0.02$, $h_1=0.02$. The symbol $*$ is the initial point and $\circ$ is the point that the solution converges to (i.e., $z_2^*$).}}
\end{figure}

Figure \ref{fig:P3} illustrates the case when $\theta_1-h_1\leq\frac{k_1}{\gamma_1}\leq\theta_1+h_1$, $\frac{k_2}{\gamma_2}\leq\theta_2+h_2.$ With the initial value $z(0, 0)\notin C_2$, the solution converges to $z^*_2=[\frac{k_1}{\gamma_1}, 0, 0, 0]^\top$. With these conditions, initially, gene {\emph b} is expressed at $k_2$ and gene {\emph a} is inhibited. After finite time, as the concentration of protein {\emph A} $(x_1)$ is lower than $\theta_1-h_1$, gene {\emph b} becomes inhibited. Gene {\emph a} is expressed at $k_1$ while the concentration of protein {\emph B} $(x_2)$ is below a certain level. This confirms the result in Proposition~\ref{eqn:Stable2}.
}

\NotForConf{
\subsubsection{Case 4 of Table~\ref{tab:EqPoints}}

Figure \ref{fig:P5} indicates that, when $\theta_1-h_1 < \frac{k_1}{\gamma_1} < \theta_1+h_1$, $\theta_2+h_2<\frac{k_2}{\gamma_2} < \theta_2^{max}$ with the initial value $z(0, 0)\in C_2:=\{q_1=1, q_2=0, x_1\geq\theta_1-h_1, x_2\leq\theta_2+h_2\}$. The solution flows towards $z^*_2=[\frac{k_1}{\gamma_1}, 0, 0, 0]^\top$. Under these conditions, gene {\emph a}  and gene {\emph b} are expressed at rate $k_i$, $i=1, 2$ initially. After some time, the concentration of protein  {\emph B} exceeds a centain level, which triggers a jump, after which the expression of gene  {\emph a} is inhibited. When the concentraion of protein {\emph A}  decreases enough, another jump occurs, after which the expression of gene  {\emph b} is inhibited as well. Eventually, the concentration of protein  {\emph B} reaches a low enough value to trigger another jump, after which the expression of gene  {\emph a} is activated, and the concentrations approach a steady-state value. This simulation confirms the result in Proposition~\ref{eqn:Stable2}.

\begin{figure}[h!]
\begin{center}
\subfigure[$x$ components.]
{\psfragfig*[width=0.48\textwidth]{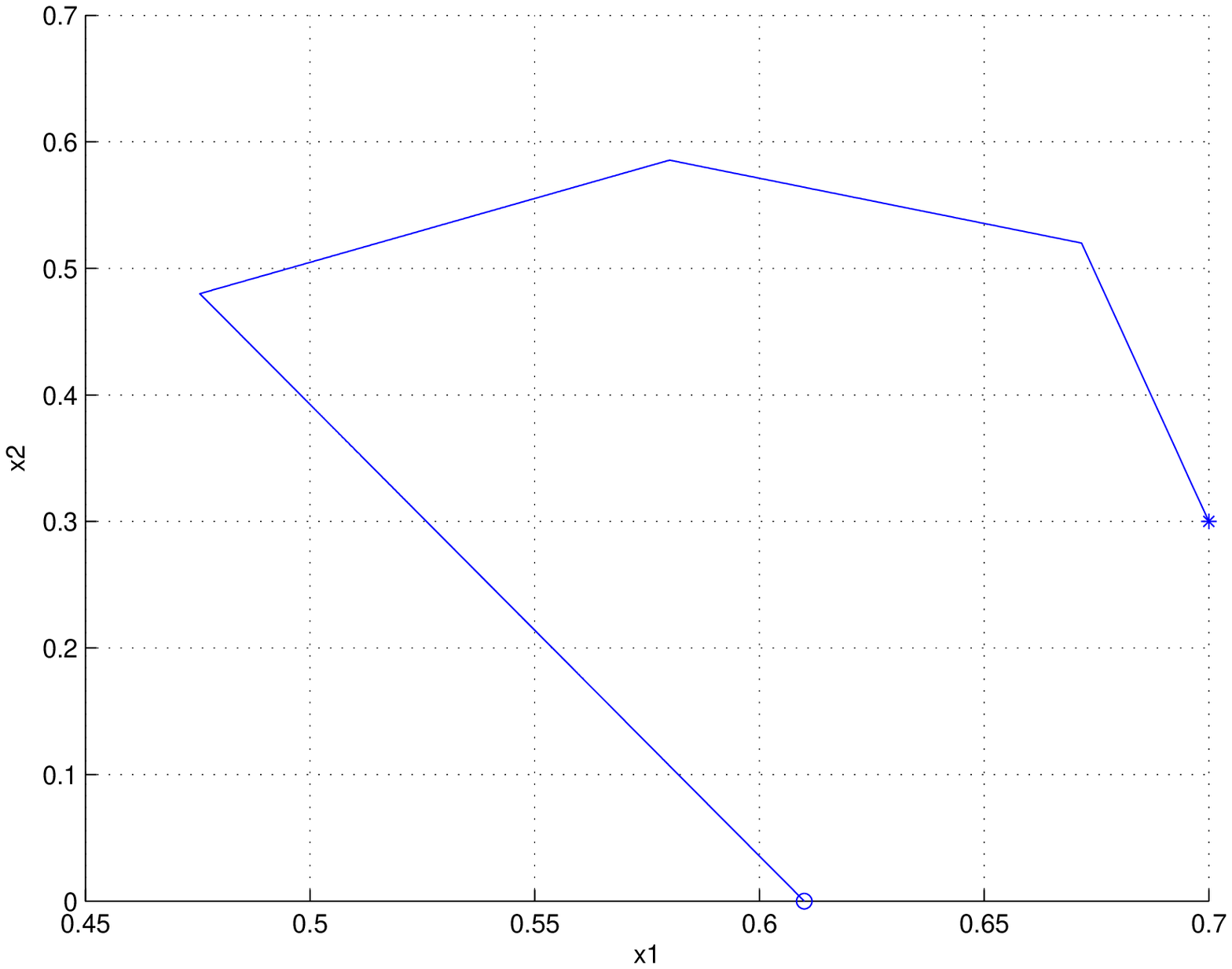}
{
\psfrag{x1}[][][0.9]{\!\!\!\!\!\!$x_1$}
\psfrag{x2}[][][0.9][-90]{$x_2$}
}
}
\subfigure[$q$ components.]
{\psfragfig*[width=0.48\textwidth]{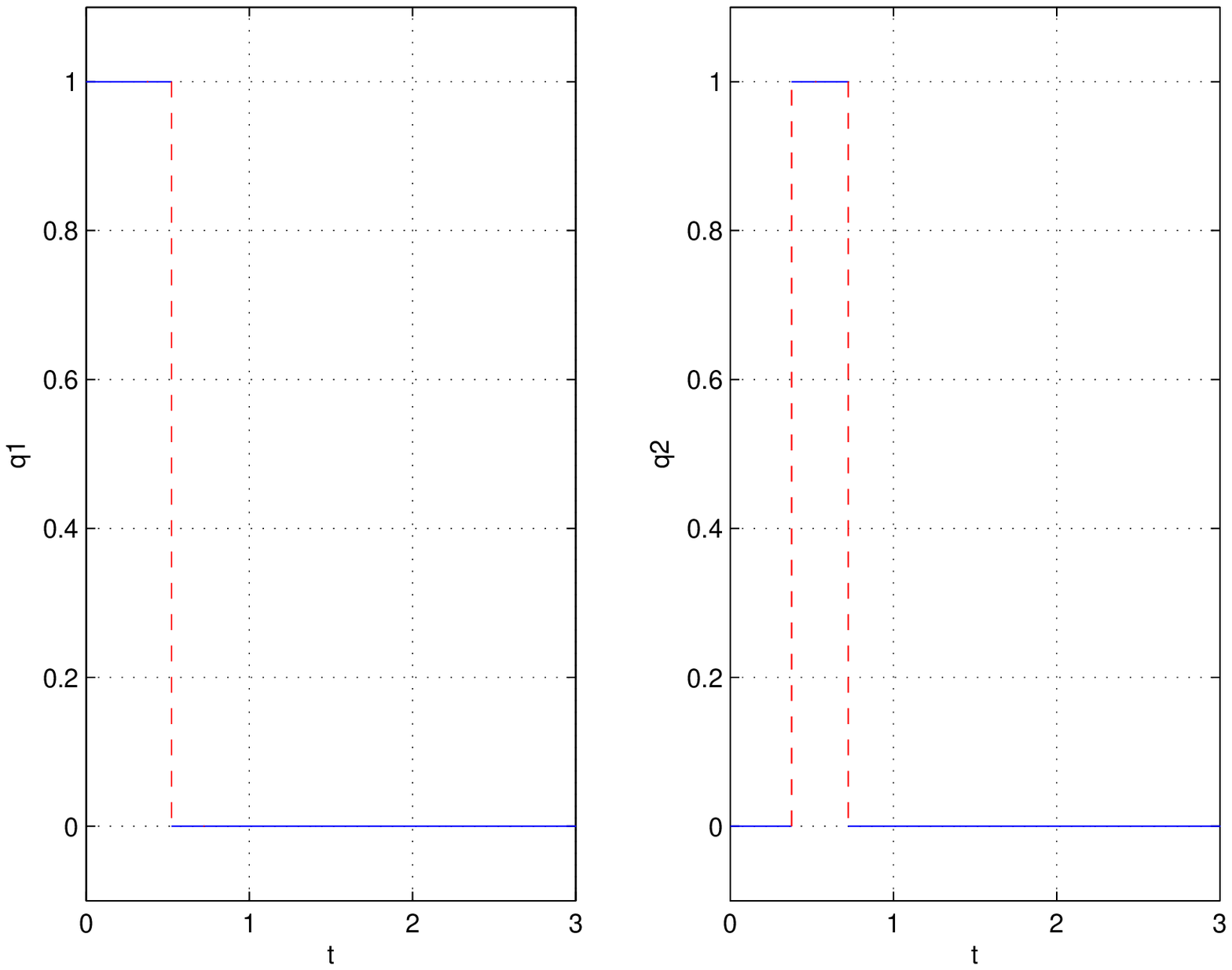}
{
\psfrag{q1}[][][0.9][-90]{$q_1$}
\psfrag{q2}[][][0.9][-90]{$q_2$}
\psfrag{t}[][][0.7]{\quad $t [sec]$}
}
}
\end{center}
\caption{\label{fig:P5}\emph{The $x$ and $q$ components of a solution to $\cal H$ in \eqref{eqn:H2} converging to $z^*_2.$ 
The initial conditions are given by
$q_1(0,0)=1$, $q_2(0,0)=0$, $x_1(0,0)=0.7$, $x_2(0,0)=0.3$.
The parameters are as follows: $\theta_1=0.6$, $\theta_2=0.5$, $k_1=0.61$, $k_2=1$, $\gamma_1=1$, $\gamma_2=1$, $h_1=0.02$, $h_1=0.02$. The symbol $*$ denotes the initial point and $\circ$ is the point that the solution converges to (i.e., $z_2^*$).}}
\end{figure}
}

\IfConf{
We illustrate numerically the more interesting case when the parameters lead to a limit cycle.
}
{
\subsection{Equilibrium set $S$}
}

When the parameters are in the region $\theta_1+h_1 < \frac{k_1}{\gamma_1} < \theta_1^{\max}, \theta_2+h_2<\frac{k_2}{\gamma_2}< \theta_2^{\max}$, the set of points $S$ in \eqref{eqn:S} defines the equilibria. 
First, we compute this set of points for particular 
values of $k_1, k_2, h_1, h_2, \gamma_1 = \gamma_2 = \gamma, \theta_1, \theta_2.$
Let $k_1=1$, $k_2=1$, $\gamma_1=\gamma_2= \gamma = 1$, $\theta_1=0.6$, $\theta_2=0.5$, $h_1=0.01$, $h_2=0.01$.
\NotForConf{
Then, using Corollary~\ref{coro:LimitCycleLineCase}, the point $p_0$ 
is given by  $p_0(1)=0.4966$. Then, from \eqref{eqn:p0}-\eqref{eqn:p3}, we obtain
$p_0=\left[\begin{array}{c}0.4966\\0.49\end{array}\right],\ p_1=\left[\begin{array}{c}0.61\\0.3796\end{array}\right],\ p_2=\left[\begin{array}{c}0.692\\0.51\end{array}\right],\ p_3=\left[\begin{array}{c}0.59\\0.5822\end{array}\right].$
With the values of $p_0, p_1, p_2, p_3$, the set $S$ in \eqref{eqn:S} is given by 
\begin{eqnarray*}
S_1 &=& \{x: x_2=-0.973381x_1+0.973381,\\
 & & \hspace{0.2in} 0.4966\leq x_1 \leq 0.61, 0.3796\leq x_2\leq0.49\}\times \{(0,0)\},\\
S_2 &=& \{x: x_2=1.590722x_1-0.590722, \\
& & \hspace{0.2in} 0.61\leq x_1 \leq 0.692, 0.3796\leq x_2 \leq 0.51\}\times \{(1,0)\},\\
S_3 &=& \{x: x_2=-0.7081296x_1+1, \\
& &\hspace{0.2in} 0.59 \leq x_1\leq 0.692, 0.51 \leq x_2\leq0.5822\}\times \{(1,1)\},\\
S_4 &=& \{x: x_2=0.9871896x_1-0.000238, \\
& & \hspace{0.2in} 0.4966 \leq x_1\leq0.59, 0.49\leq x_2 \leq 0.5822\}\times \{(0,1)\}.
\end{eqnarray*}
}
\begin{figure}[h!]
\begin{center}
\subfigure[$S$ \label{fig:lc}]
{\psfragfig*[width=0.4\columnwidth]{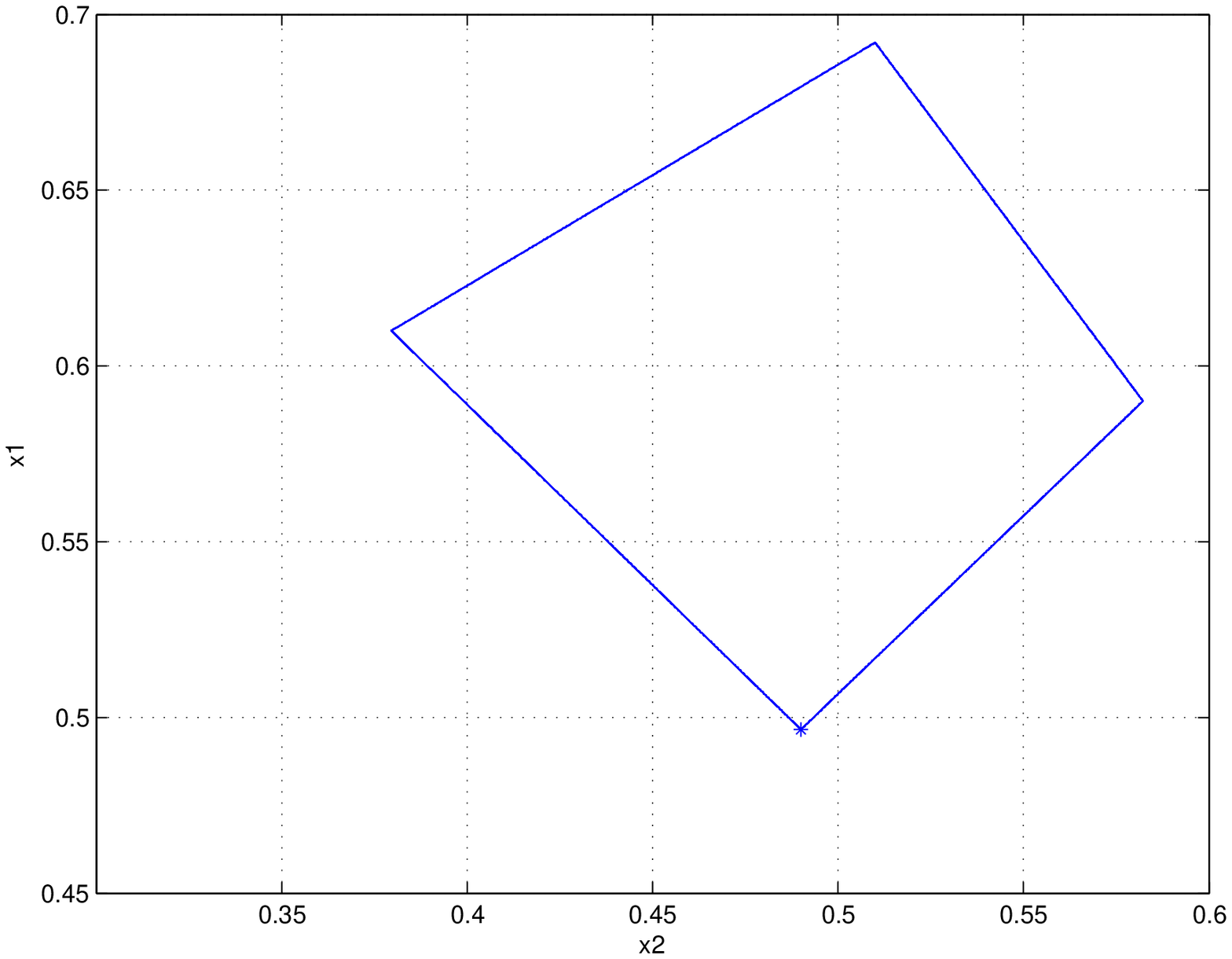}
{
\psfrag{x1}[][][0.9][-90]{$x_1$}
\psfrag{x2}[][][0.9]{\!\!\!\!\!\!$x_2$}
\psfrag{t}[][][0.85]{\!\!\!\!\!\!$t [sec]$}
}
}
\subfigure[$x_1$ component \label{fig:period}]
{\psfragfig*[width=0.4\columnwidth]{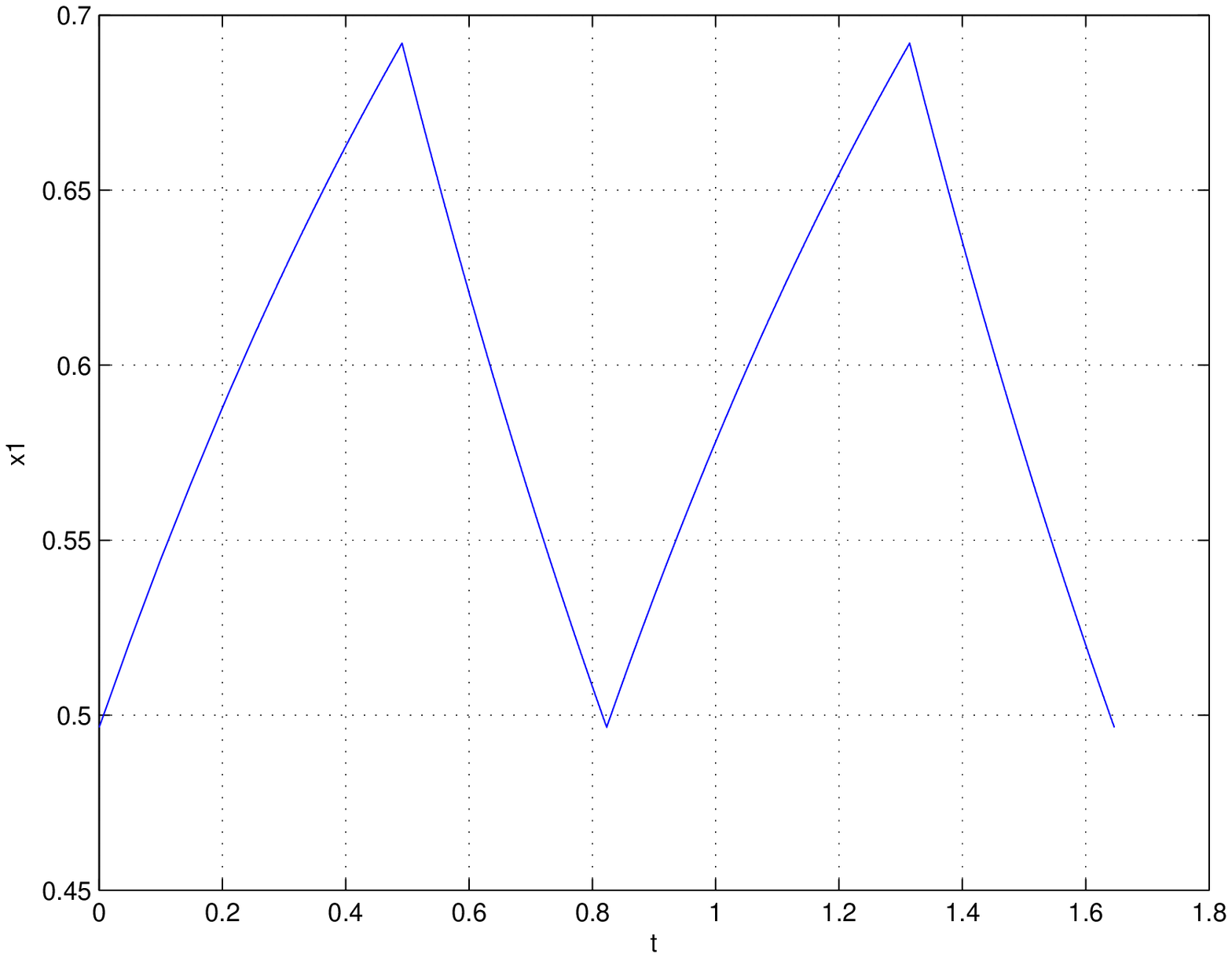}
{
\psfrag{x1}[][][0.9][-90]{$x_1$}
\psfrag{x2}[][][0.9]{\!\!\!\!\!\!$x_2$}
\psfrag{t}[][][0.85]{\!\!\!\!\!\!$t [sec]$}
}
}
\vspace{-0.15in}
\caption{\label{fig:lcmain}\emph{Set $S$ for parameters $k_1=1$, $k_2=1$, $\gamma_1=1$, $\gamma_2=1$, $\theta_1=0.6$, $\theta_2=0.5$, $h_1=0.01$, $h_2=0.01.$}}
\end{center}
\end{figure}

Figure~\ref{fig:lc} shows the set of points $S$ projected to $\reals^2$ for these parameters.
For the same parameter values, the period of the limit cycle \NotForConf{obtained from Corollary~\ref{coro:LimitCycleLineCase}} is
$T=0.8230 \mbox{ sec},$ where
$t_1' =0.2552 \mbox{ sec}$, $t_2' =0.2359 \mbox{ sec}$, $t_3' =0.1594 \mbox{ sec}$, $t_4' =0.1724 \mbox{ sec}.$ 
Figure \ref{fig:period} confirms this result.

\IfConf{
\begin{figure}[H]
\psfrag{x1}[][][0.9]{$x_1$}
\psfrag{x2}[][][0.9][-90]{$x_2$}
\begin{center}
\subfigure[$h_1=0.015$, $h_2=0.015$\label{fig:hys1}]
{\includegraphics[width=0.23\textwidth]{h015}}
\psfrag{x1}[][][0.9]{$x_1$}
\psfrag{x2}[][][0.9][-90]{$x_2$}
\subfigure[$h_1=0.01$, $h_2=0.01$\label{fig:hys11}]
{\includegraphics[width=0.23\textwidth]{h001}}\\
\psfrag{x1}[][][0.9]{$x_1$}
\psfrag{x2}[][][0.9][-90]{$x_2$}
\subfigure[$h_1=0.006$, $h_2=0.006$\label{fig:hys2}]
{\includegraphics[width=0.23\textwidth]{h0006}}
\subfigure[$h_1=0$, $h_2=0$\label{fig:hys4}]
{\includegraphics[width=0.23\textwidth]{h0}}
\end{center}
\caption{\label{fig:hys}\emph{Solutions approaching the set $S$ with different initial conditions of $z$ and fixed parameters $\theta_1=0.6$, $\theta_2=0.5$, $\gamma_1=1$, $\gamma_2=1$, $k_1=1$, $k_2=1$. }}
\end{figure}
}
{
\begin{figure}[H]
\begin{center}
\subfigure[$h_1=0.015$, $h_2=0.015$\label{fig:hys1}]
{\psfragfig*[width=0.49\textwidth]{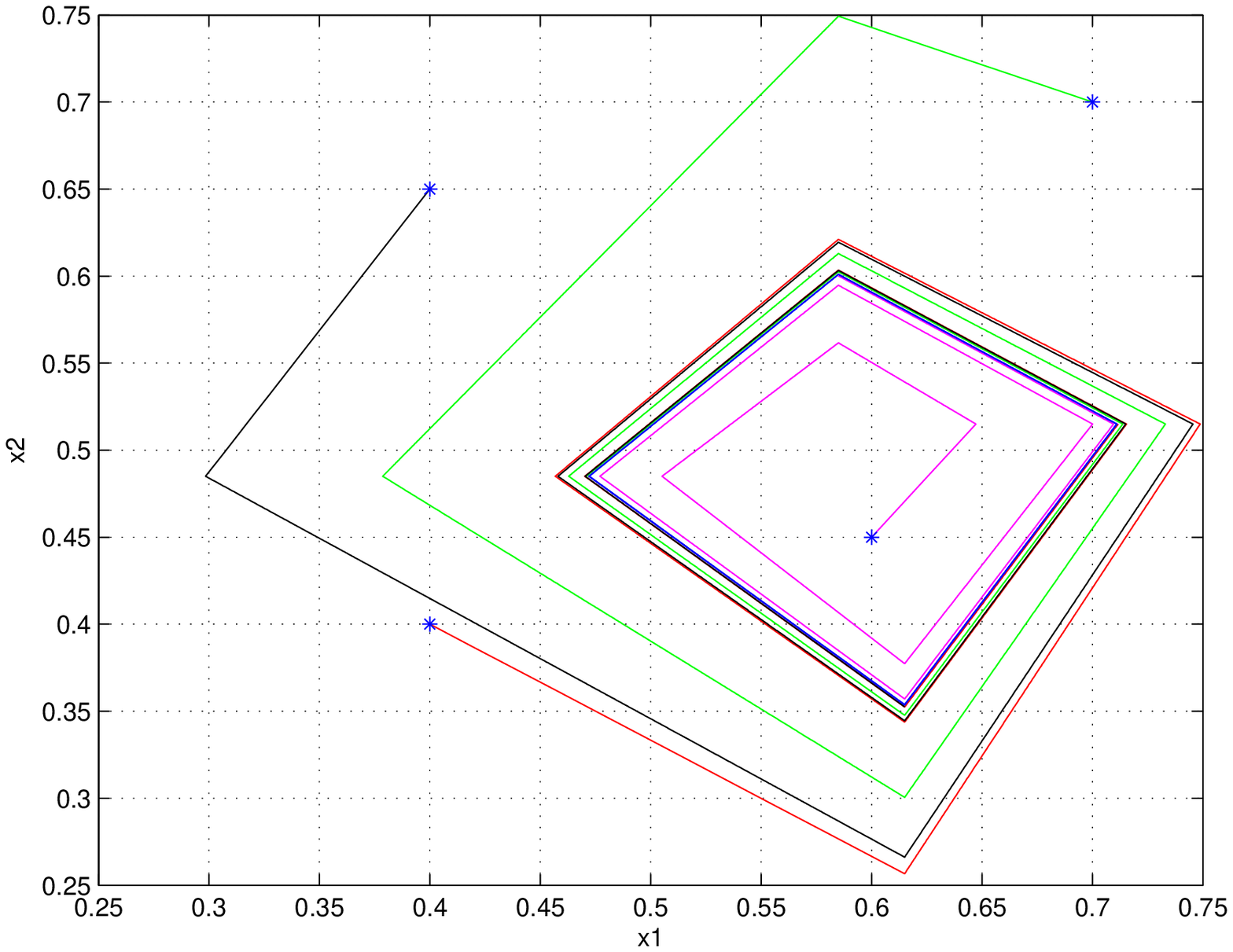}
{
\psfrag{x1}[][][0.9]{$x_1$}
\psfrag{x2}[][][0.9][-90]{$x_2$}
}
}
\subfigure[$h_1=0.01$, $h_2=0.01$\label{fig:hys11}]
{\psfragfig*[width=0.49\textwidth]{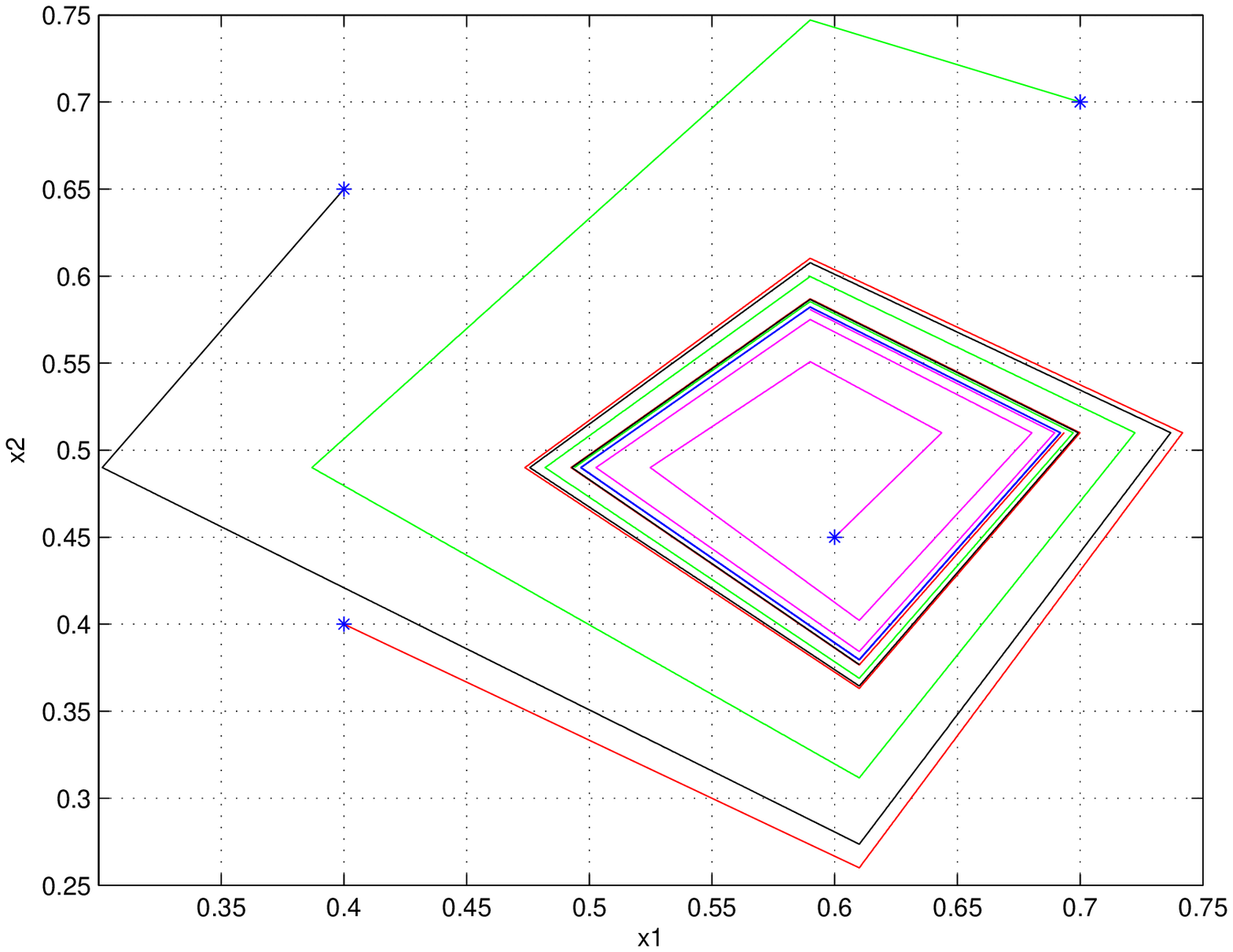}
{
\psfrag{x1}[][][0.9]{$x_1$}
\psfrag{x2}[][][0.9][-90]{$x_2$}
}
}\\
\subfigure[$h_1=0.006$, $h_2=0.006$\label{fig:hys2}]
{\psfragfig*[width=0.49\textwidth]{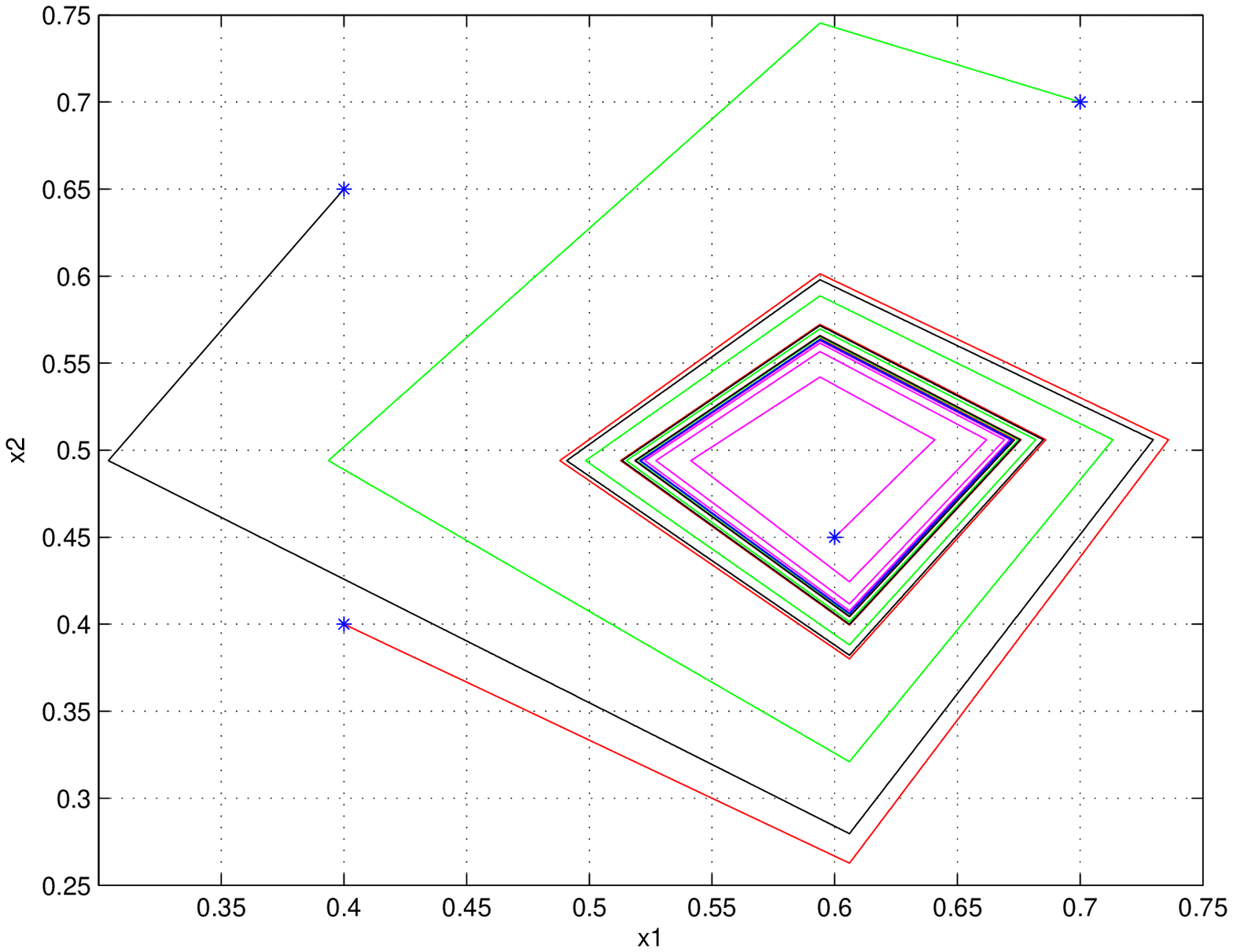}
{
\psfrag{x1}[][][0.9]{$x_1$}
\psfrag{x2}[][][0.9][-90]{$x_2$}
}
}
\subfigure[$h_1=0$, $h_2=0$\label{fig:hys4}]
{\psfragfig*[width=0.49\textwidth]{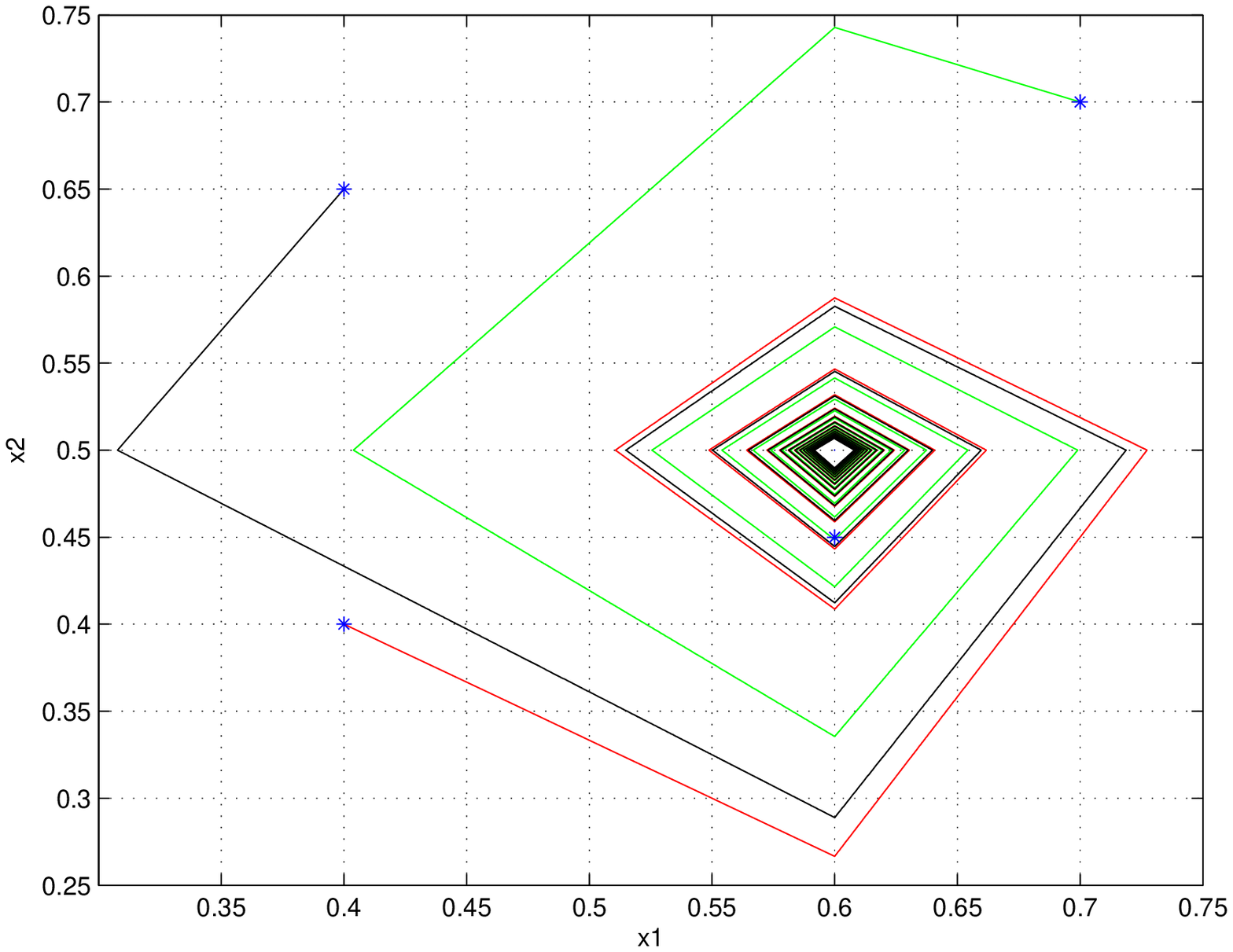}
{
\psfrag{x1}[][][0.9]{$x_1$}
\psfrag{x2}[][][0.9][-90]{$x_2$}
}
}
\end{center}
\caption{\label{fig:hys}\emph{Solutions approaching the set $S$ with different initial conditions of $z$ and fixed parameters $\theta_1=0.6$, $\theta_2=0.5$, $\gamma_1=1$, $\gamma_2=1$, $k_1=1$, $k_2=1$. }}
\end{figure}
}

Figure \ref{fig:hys} shows simulations with several initial conditions and common parameters $\theta_1=0.6,\theta_2=0.5,\gamma_1=1, \gamma_2=1,k_1=1, k_2=1$, but decreasing
$h_1, h_2$. 
Each solution  converges to the limit cycle $S$. 
The size of the limit cycle is reduced as $h_1, h_2$ gets smaller.
From our results we know that the size of the limit cycle depends on the value of hysteresis parameters. When the magnitude of hysteresis tends to zero, the set $S$ approaches a point, which is given by ($\theta_1$, $\theta_2$) (see similar case shown in Figure~\ref{fig:hys4}.) 

\NotForConf{
Figure \ref{fig:limit} shows simulations with several initial conditions and common parameters $\theta_1=0.6,\theta_2=0.5,\gamma_1=1, \gamma_2=1, h_1=0.01, h_2=0.01$, but changing $k_1, k_2$. 
Each solution converges to the limit cycle $S$ (in cyan). The blue set of points defines the limit cycle $S$ generated when $k_1= k_2 = 1.$ 
The variations of $k_1$ and $k_2$ can be considered to be perturbations as in Theorem~\ref{thm:RobustStability}.
The simulations show that the smaller the perturbation on these constants, the closer the limit cycle becomes
to the nominal one. 
Figure \ref{fig:limitcyc} shows simulations with several initial conditions and common parameters $\theta_1=0.6,\theta_2=0.5,k_1=1, k_2=1, h_1=0.01, h_2=0.01$, but now with $\gamma_1$ and $\gamma_2$ varying. 

\begin{figure}[H]
\subfigure[$k_1=0.8$, $k_2=0.8$]{\label{fig:limit1}
\psfragfig*[width=0.49\textwidth]{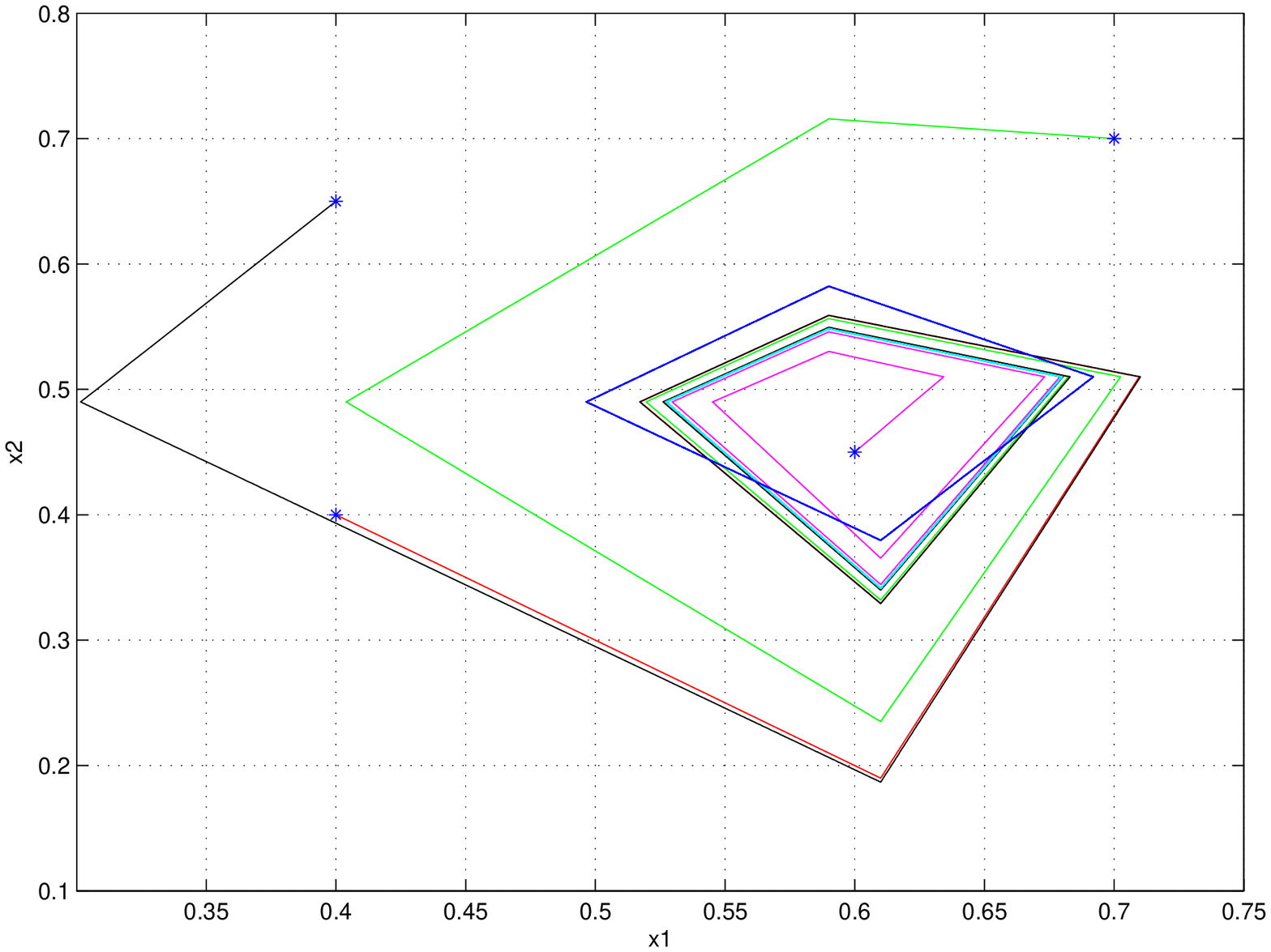}
{
\psfrag{x1}[][][0.9]{$x_1$}
\psfrag{x2}[][][0.9][-90]{$x_2$}
}
}
\subfigure[$k_1=0.9$, $k_2=0.9$]{\label{fig:limit2}
\psfragfig*[width=0.49\textwidth]{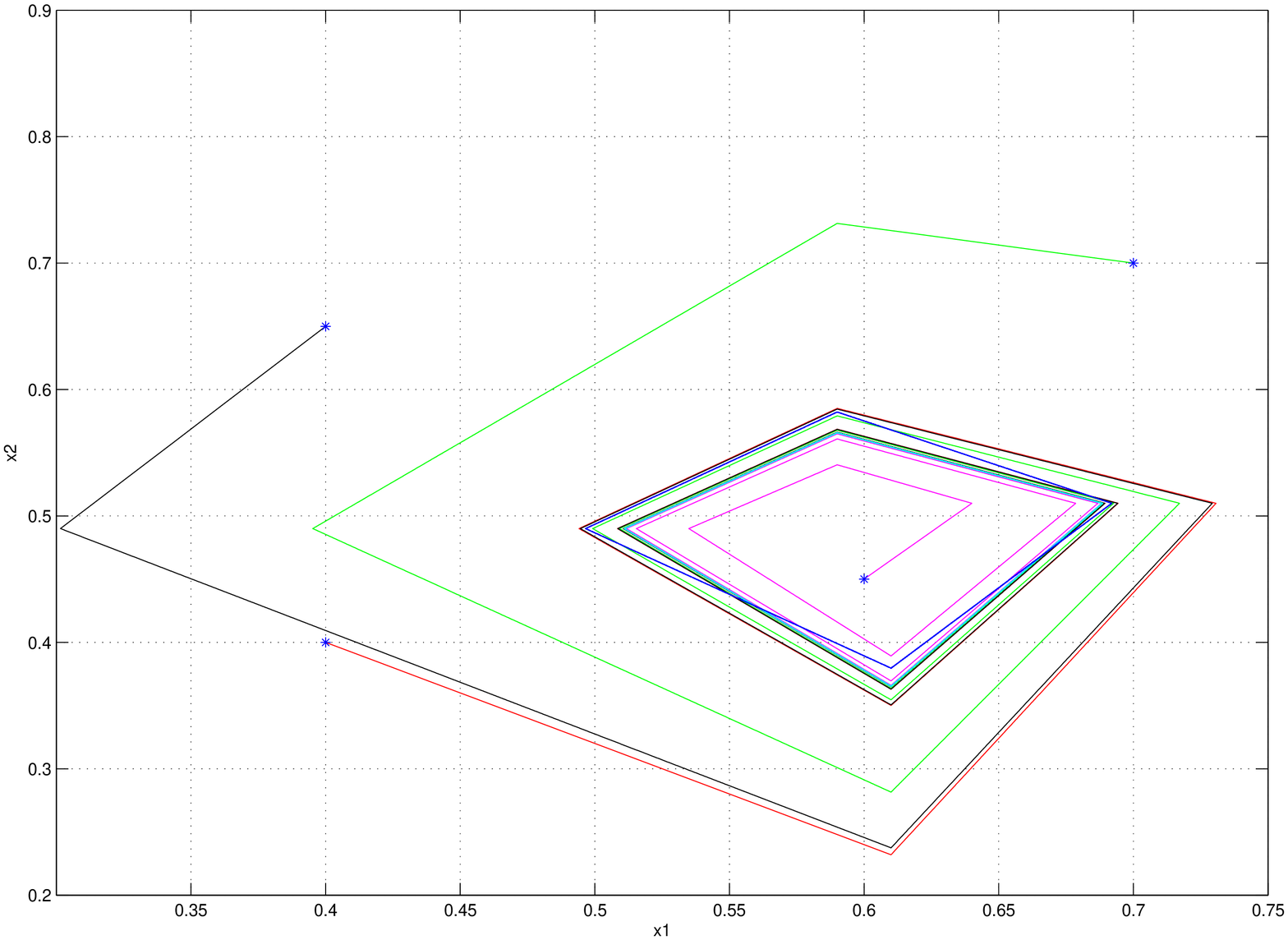}
{
\psfrag{x1}[][][0.9]{$x_1$}
\psfrag{x2}[][][0.9][-90]{$x_2$}
}
}
\subfigure[$k_1=1.1$, $k_2=1.1$]{
\psfragfig*[width=0.49\textwidth]{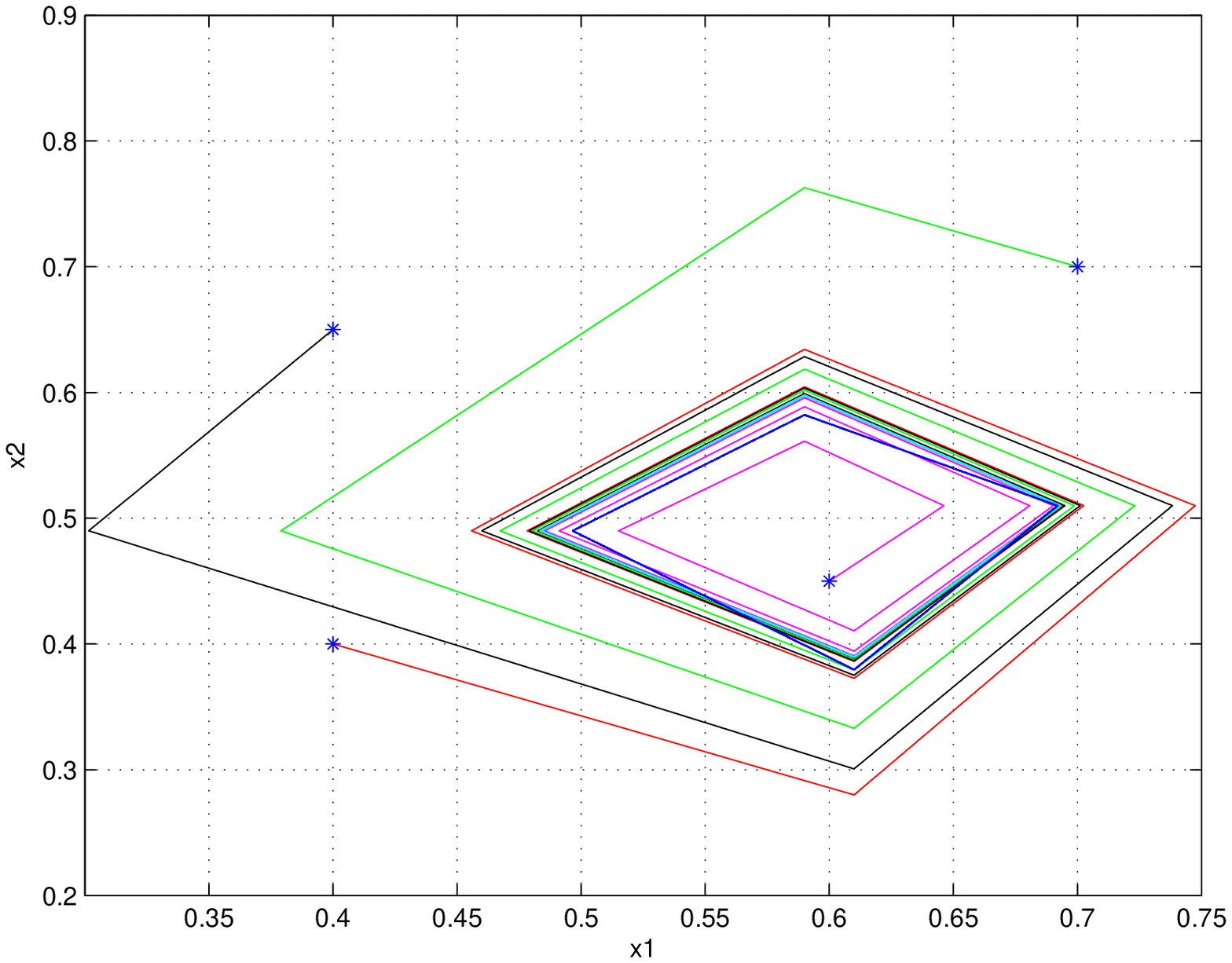}
{
\psfrag{x1}[][][0.9]{$x_1$}
\psfrag{x2}[][][0.9][-90]{$x_2$}
}
}
\subfigure[$k_1=1.2$, $k_2=1.2$]{\label{fig:limit4}
\psfragfig*[width=0.49\textwidth]{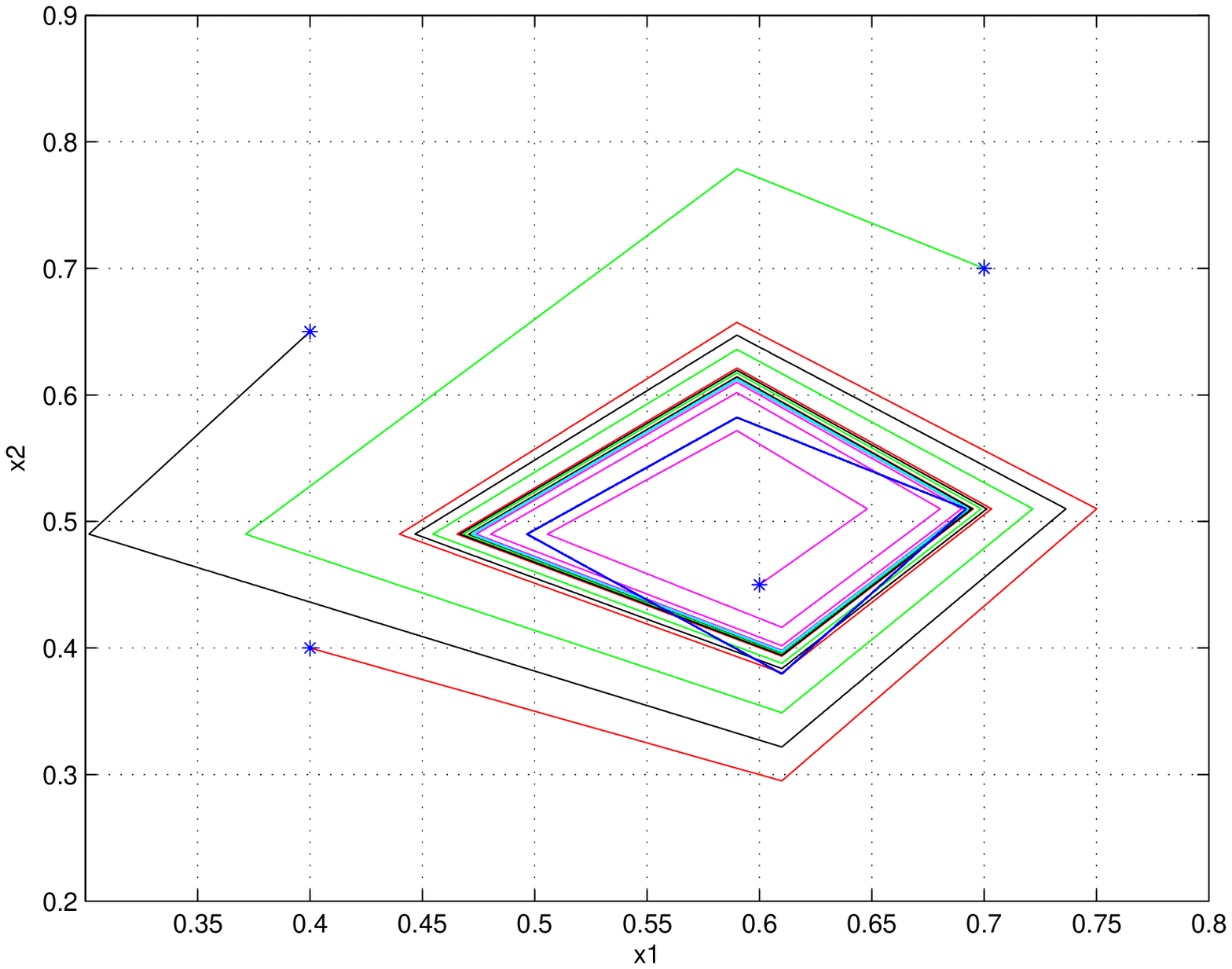}
{
\psfrag{x1}[][][0.9]{$x_1$}
\psfrag{x2}[][][0.9][-90]{$x_2$}
}
}
\caption{\label{fig:limit}\emph{Solutions approaching the set $S$ with different initial conditions of $z$ and fixed parameters $\theta_1=0.6$, $\theta_2=0.5$, $\gamma_1=1$, $\gamma_2=1$, $h_1=0.01$, $h_2=0.01$. }}
\end{figure}

\begin{figure}[H]
\subfigure[$k_1=0.8$, $k_2=0.8$]{\label{fig:limitcyc1}
\psfragfig*[width=0.49\textwidth]{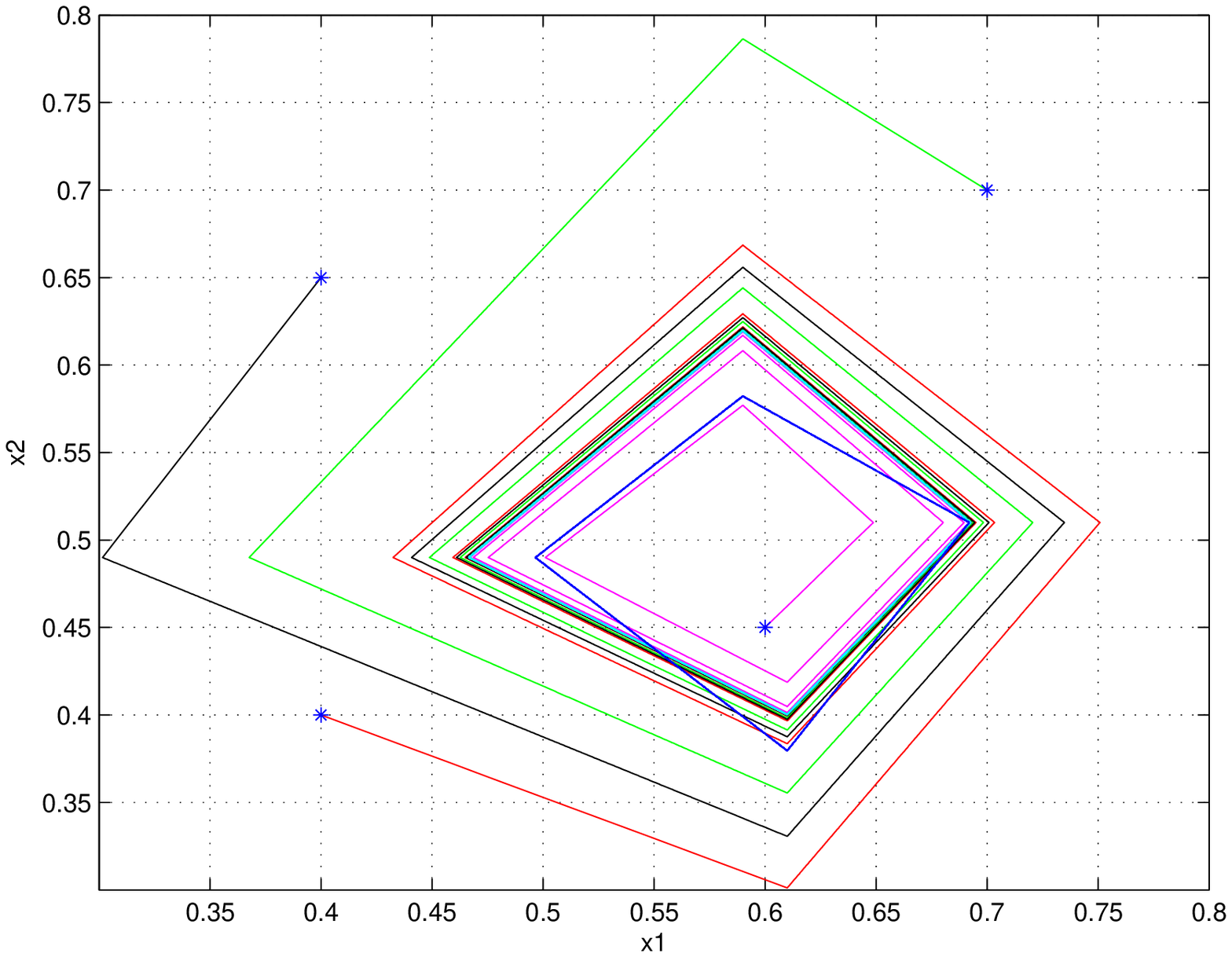}
{
\psfrag{x2}[][][0.9][-90]{$x_2$}
\psfrag{x1}[][][0.9]{$x_1$}
}
}
\subfigure[$\gamma_1=0.9$, $\gamma_2=0.9$]{\label{fig:limitcyc2}
\psfragfig*[width=0.49\textwidth]{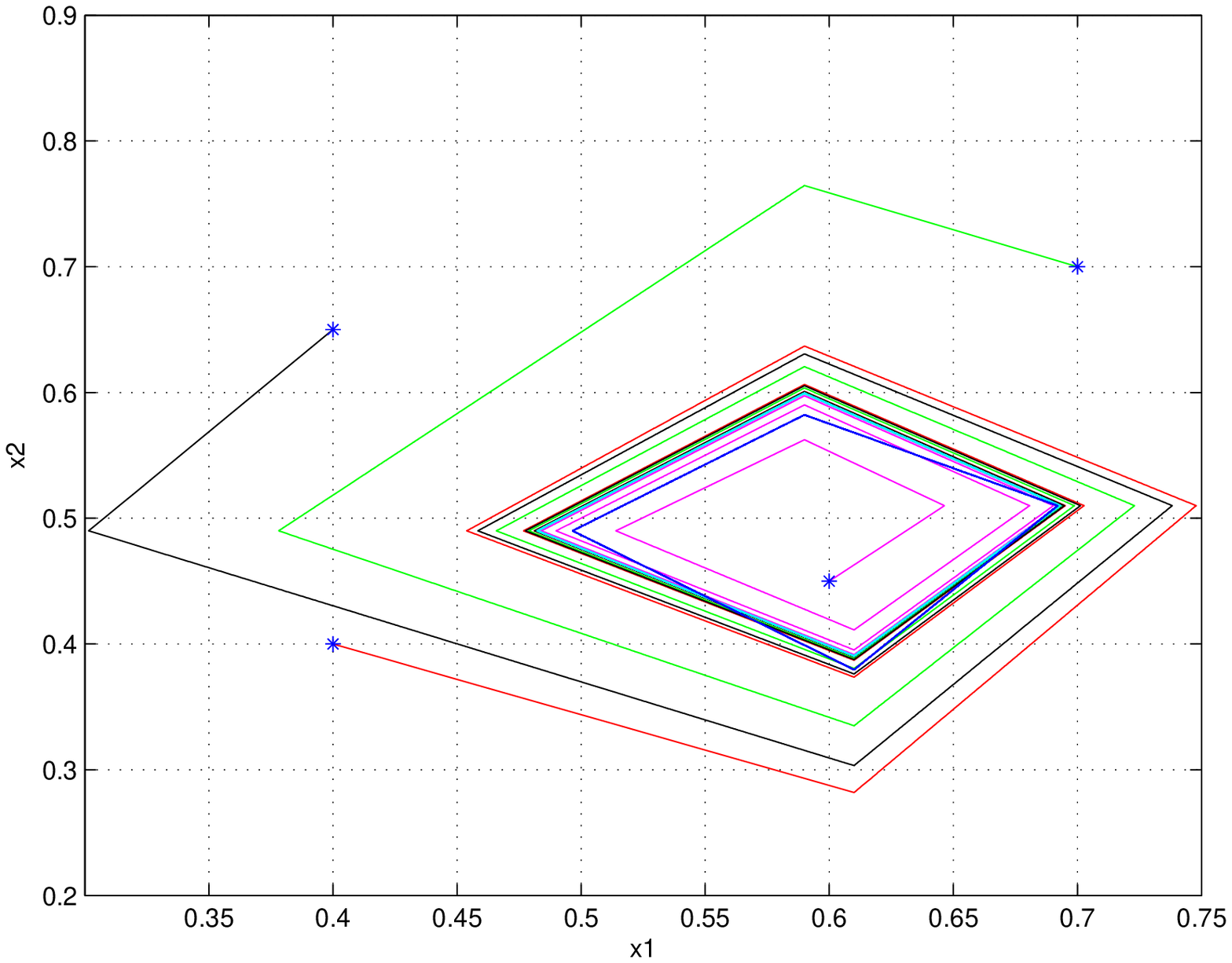}
{
\psfrag{x1}[][][0.9]{$x_1$}
\psfrag{x2}[][][0.9][-90]{$x_2$}
}
}
\subfigure[$\gamma_1=1.1$, $\gamma_2=1.1$]{\label{fig:limitcyc3}
\psfragfig*[width=0.49\textwidth]{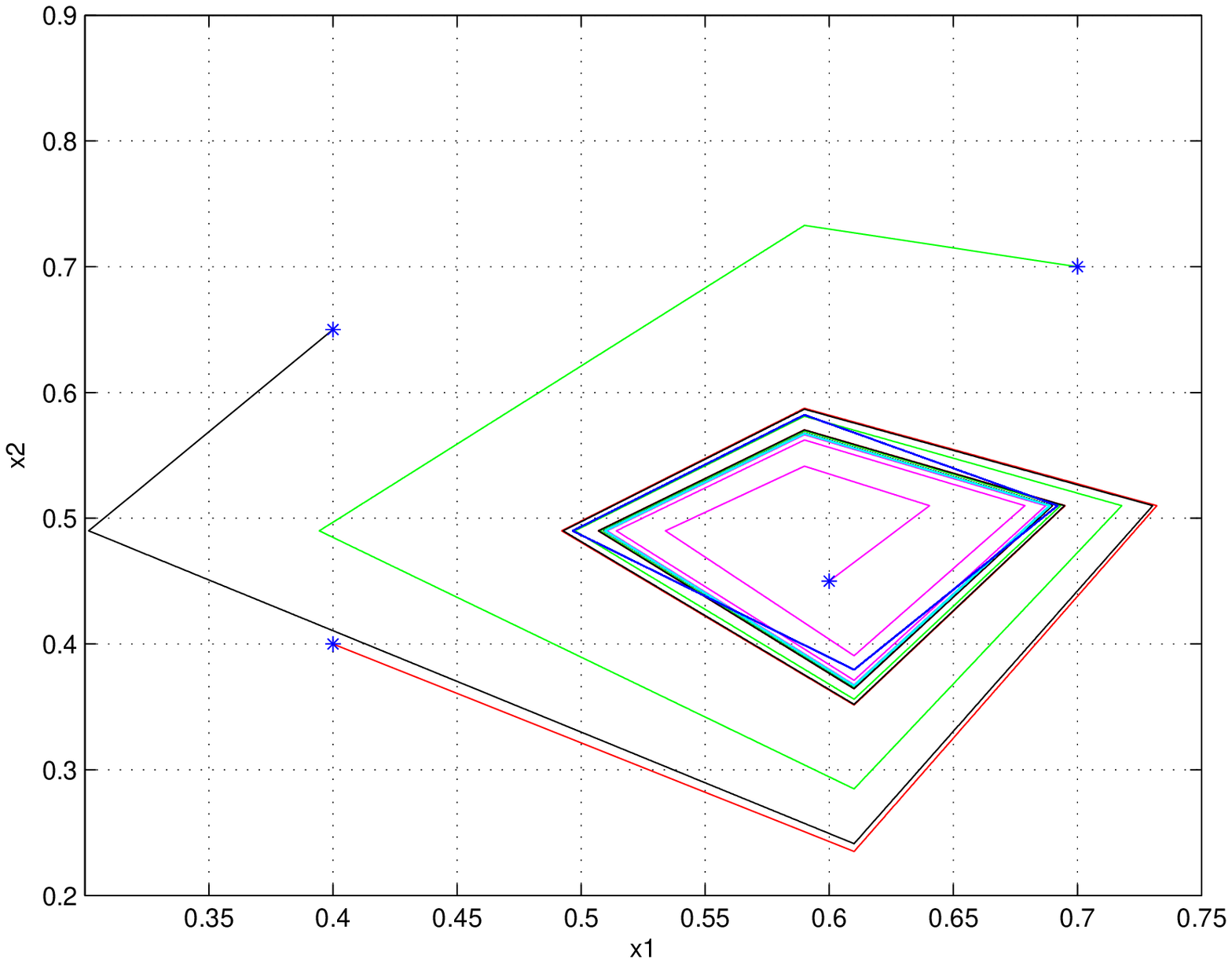}
{
\psfrag{x2}[][][0.9][-90]{$x_2$}
\psfrag{x1}[][][0.9]{$x_1$}
}
}
\subfigure[$\gamma_1=1.2$, $\gamma_2=1.2$]{\label{fig:limitcyc4}
\psfragfig*[width=0.49\textwidth]{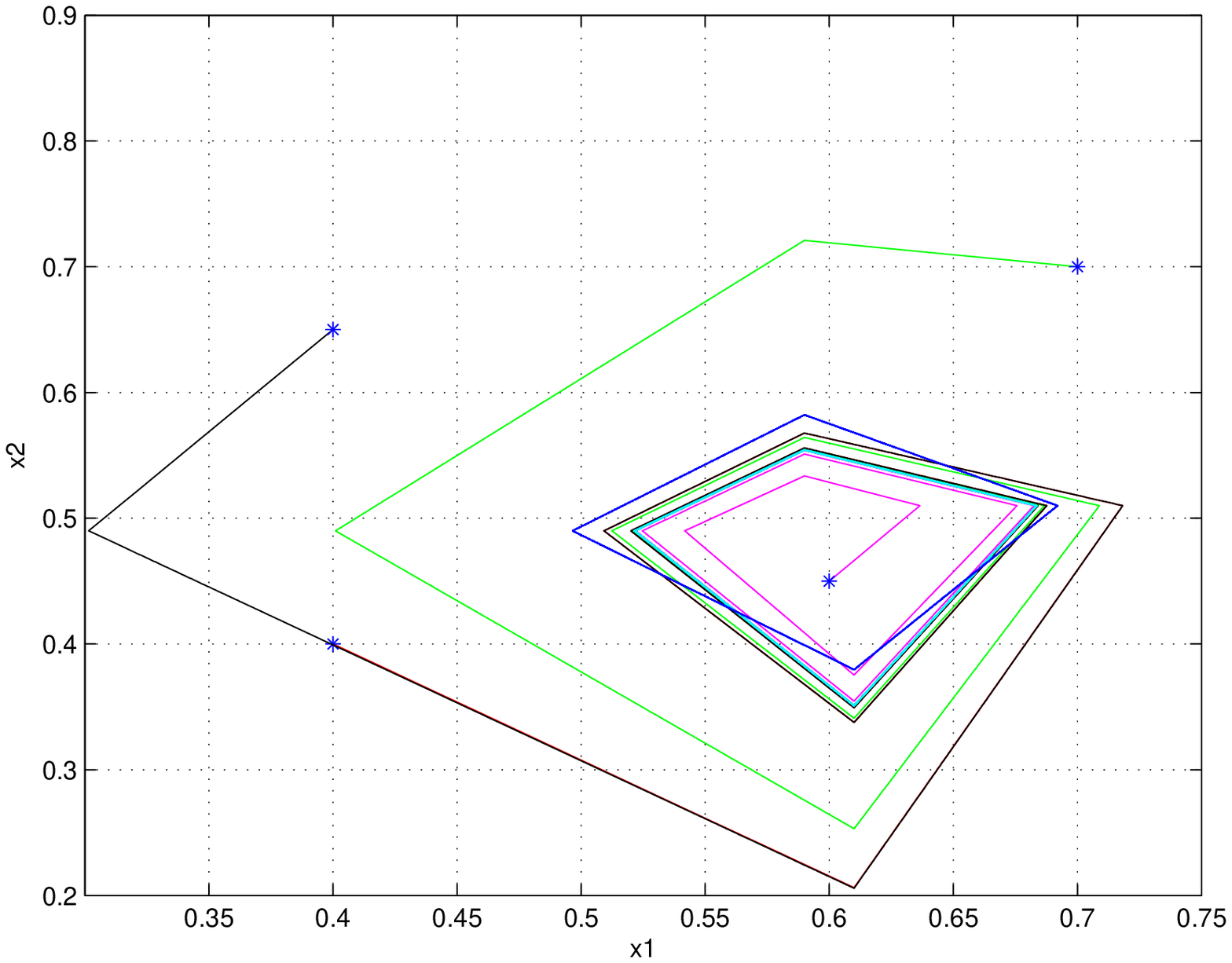}
{
\psfrag{x2}[][][0.9][-90]{$x_2$}
\psfrag{x1}[][][0.9]{$x_1$}
}
}
\caption{\label{fig:limitcyc}\emph{Solutions approaching the set $S$ with different initial conditions of $z$ and fixed parameters $\theta_1=0.6$, $\theta_2=0.5$, $k_1=1$, $k_2=1$, $h_1=0.01$, $h_2=0.01$. }}
\end{figure}
}

Finally, Figure~\ref{fig:limitcurve} shows the case when $\gamma_1\neq\gamma_2$. In this case, the trajectories approach the limit cycle given in \eqref{eqn:S}.  
\IfConf{
\begin{figure}[H]
\begin{center}
\psfrag{x2}[][][0.9]{$x_1$}
\psfrag{x1}[][][0.9][-90]{\!\!$x_2$}
\includegraphics[width=0.25\textwidth]{limitcurve}\end{center}
\caption{\label{fig:limitcurve}\emph{Solutions approaching the set $S$ with different initial conditions of $x_i$ and fixed parameters. Values of parameters: $\theta_1=0.6,\theta_2=0.5,\gamma_1=5, \gamma_2=1,k_1=5, k_2=1, h_1=0.01, h_1=0.01$. The blue line is the set $S$. 
The symbols $*$ denote the initial points.}}
\end{figure}
}
{
\begin{figure}[H]
\begin{center}
\psfragfig*[width=0.49\textwidth]{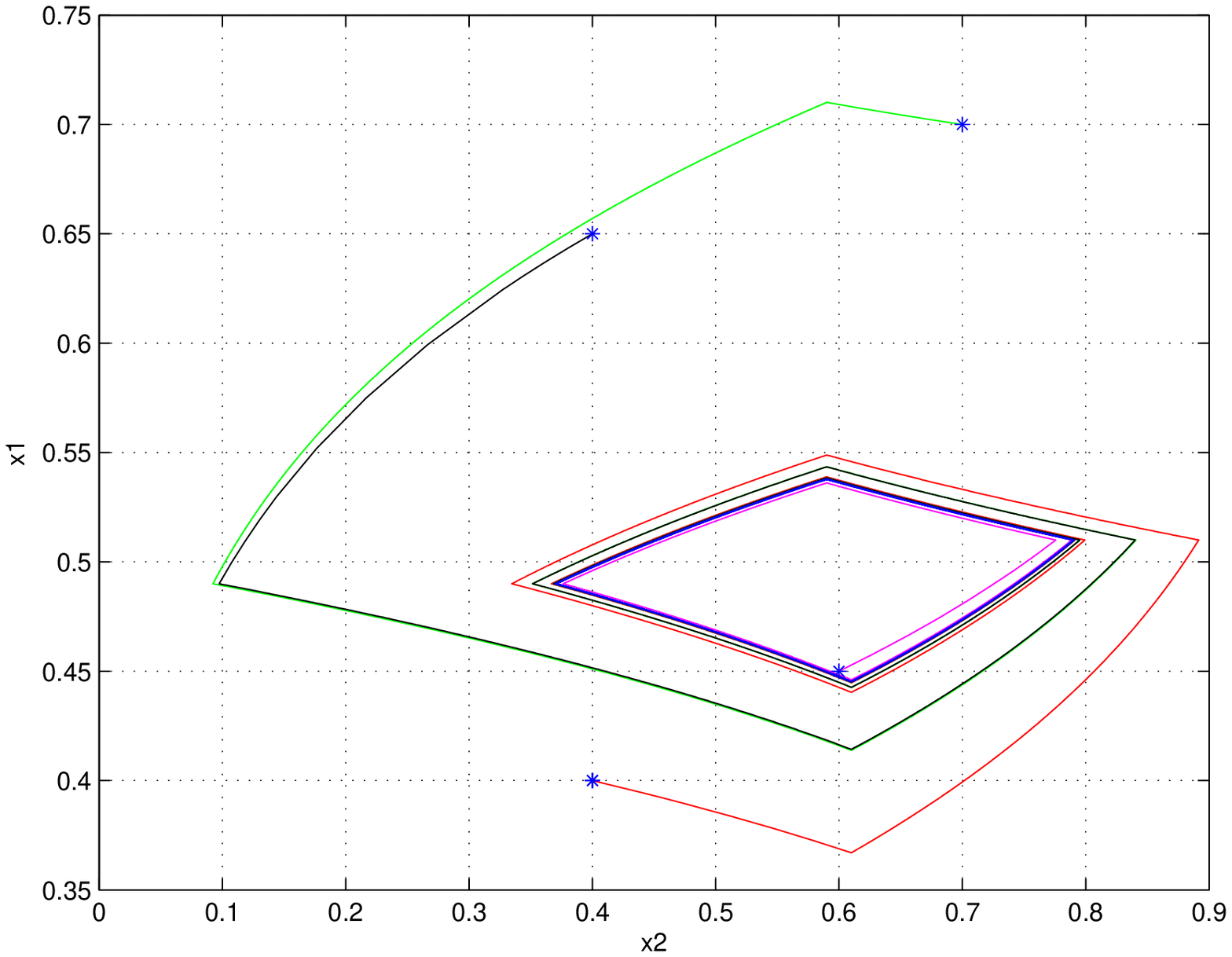}
{
\psfrag{x2}[][][0.9]{$x_1$}
\psfrag{x1}[][][0.9][-90]{\!\!$x_2$}
}
\end{center}
\caption{\label{fig:limitcurve}\emph{Solutions approaching the set $S$ with different initial conditions of $x_i$ and fixed parameters. Values of parameters: $\theta_1=0.6,\theta_2=0.5,\gamma_1=5, \gamma_2=1,k_1=5, k_2=1, h_1=0.01, h_1=0.01$. The blue line is the set $S$. 
The symbol $*$ denotes the initial point.}}
\end{figure}
}

\section{Conclusion}

In this paper, a mathematical model of a genetic regulatory network has been developed under the formalism of hybrid dynamical systems. The model presented in this paper permits a quantitative analysis of the cellular protein dynamics under the influence of protein concentration thresholds and initial conditions. 
\IfConf{ The analysis of the hybrid model with two genes determines conditions guaranteeing the existence of solutions, the equilibria of the system, and the stability properties of the equilibria (and its robustness).}{ The analysis of the hybrid model with two genes determines conditions guaranteeing the existence of solutions, the equilibria of the system, stability properties of the equilibria and its robustness.} In particular, we have revealed conditions on the parameters that, when hysteresis is present, the interaction between the concentrations of two proteins leads to oscillatory behavior.  Such a behavior is impossible in a two-gene network without hysteresis.  \NotForConf{The obtained results are an important initial step in the analysis of genetic regulatory networks using hybrid systems theory, which we believe
has great potential for the understanding of the complex mechanisms in such networks,
in particular, when treated as (larger than two) interconnections of hybrid systems.}

\bibliographystyle{elsarticle-num}
\bibliography{bibliography}

\begin{thebibliography}{10}
\expandafter\ifx\csname url\endcsname\relax
  \def\url#1{\texttt{#1}}\fi
\expandafter\ifx\csname urlprefix\endcsname\relax\def\urlprefix{URL }\fi
\expandafter\ifx\csname href\endcsname\relax
  \def\href#1#2{#2} \def\path#1{#1}\fi

\bibitem{de2002modeling}
H.~de~Jong, Modeling and simulation of genetic regulatory systems: a literature
  review, Journal of Computational Biology 9~(1) (2002) 67--103.

\bibitem{99}
M.~Chaves, E.~D. Sontag, R.~Albert, Methods of robustness analysis for boolean
  models of gene control networks, Journal of Theoretical Biology 235 (2005)
  431--449.

\bibitem{98}
L.~Glass, L.~Kauffman, The logical analysis of continuous, non-linear
  biochemical control networks, J. Theor. Biol. 39~(1) (1973) 103--129.

\bibitem{97}
E.~H. Snoussi, Qualitative dynamics of piecewise-linear differential equations:
  a discrete mapping approach, Dynamics and Stability of Systems 4~(3--4)
  (1989) 565--583.

\bibitem{96}
J.~L. Gouz$\acute{e}$, T.~Sari, A class of piecewise linear differential
  equations arising in biological models, Dynam. Syst. 17~(4) (2002) 299--316.

\bibitem{95}
H.~de~Jong, J.~L. Gouz$\acute{e}$, C.~Hernandez, M.~Page, T.~Sari,
  J.~Geiselmann, Hybrid modeling and simulation of genetic regulatory networks:
  A qualitative approach, Hybrid Systems: Computation and Control (2003)
  267--282.

\bibitem{94}
P.~Lincoln, A.~Tiwari, Symbolic systems biology: Hybrid modeling and analysis
  of biological networks, Hybrid Systems: Computation and Control 2293 (2004)
  660--672.

\bibitem{93}
R.~Ghosh, C.~J. Tomlin, Lateral inhibition through delta-notch signaling: A
  piecewise-affine hybrid model, Hybrid Systems: Computation and Control 2034
  (2001) 232--246.

\bibitem{NoelVincent.2012}
V.~Noel, D.~Grigoriev, S.~Vakulenko, O.~Radulescu, Hybrid models of the cell
  cycle molecular machinery, First International Workshop on Hybrid Systems and
  Biology 23 (2012) 88--105.

\bibitem{92}
S.~Huang, Gene expression profiling, genetic networks, and cellular states: an
  integrating concept for tumorigenesis and drug discovery, Journal of
  Molecular Medicine 77 (1999) 469--480.

\bibitem{91}
J.~Das, M.~Ho, J.~Zikherman, C.~Govern, M.~Yang, A.~Weiss, A.~K. Chakraborty,
  J.~P. Roose, Digital signaling and hysteresis characterize ras activation in
  lymphoid cells, Cells 136 (2009) 337--351.

\bibitem{90}
B.~P. Kramer, M.~Fussenegger, Hysteresis in a synthetic mammalian gene network,
  Proceedings of the National Academy of Sciences (USA) 102 (2005) 9517--9522.

\bibitem{89}
J.~Hu, K.~R. Qin, C.~Xiang, T.~H. Lee, Modeling of hysteresis in a mammalian
  gene regulatory network, 9th Annual International Conference on Computational
  Systems Bioinformatics 9 (2010) 50--55.

\bibitem{88}
L.~Qiao, R.~B. Nachbar, I.~G, Bistability and oscillations in the huang-ferrell
  model of mapk signalling, PLoS Computational Biology 3~(9) (2007) 1819--1826.

\bibitem{87}
A.~Kuznetsov, M.~Kaern, N.~Kopell, Synchrony in a population of
  hysteresis-based genetic oscillators, SIAM Journal of Applied Mathematics
  65~(2) (2004) 392--425.

\bibitem{86}
Z.~Han, L.~Yang, W.~R. Maclellan, J.~N. Weiss, Z.~Qu, Hysteresis and cell cycle
  transitions: How crucial is it?, Biophysical Journal 88 (2005) 1626--1634.

\bibitem{85}
J.~Kim, T.~G. Kim, S.~H. Jung, J.~R. Kim, T.~Park, P.~H. Harrison, K.~H. Cho,
  Evolutionary design principles of modules that control cellular
  differentiation: consequences for hysteresis and multistationarity,
  Bioinformatics 24~(13) (2008) 1516--1522.

\bibitem{84}
R.~Goebel, R.~G. Sanfelice, A.~R. Teel, Hybrid dynamical systems, IEEE Control
  Systems Magazine (2009) 28--93.

\bibitem{82}
T.~Mestl, E.~Plahte, S.~W. Omholt, A mathematical framework for describing and
  analysing gene regulatory networks, Journal of Theoretical Biology 176 (1995)
  291--300.

\bibitem{83}
R.~Goebel, R.~G. Sanfelice, A.~R. Teel, Hybrid Dynamical Systems: Modeling,
  Stability, and Robustness, Princeton University Press, 2012.

\bibitem{SanfeliceGoebelTeel05}
R.~G. Sanfelice, R.~Goebel, A.~R. Teel, Invariance principles for hybrid
  systems with connections to detectability and asymptotic stability, IEEE
  Transactions on Automatic Control 52~(12) (2007) 2282--2297.

\bibitem{80}
R.~G. Sanfelice, D.~A. Copp, P.~Nanez, A toolbox for simulation of hybrid
  systems in {M}atlab/{S}imulink: {H}ybrid {E}quations ({HyEQ}) {T}oolbox,
  2013, pp. 101--106.

\end{thebibliography}

\NotForConf{
\ExtendedVersion{
\newpage
\appendix

\section{ }

\subsection{Proof of Proposition~\ref{eqn:EP}}
\label{app:EPProof}

We consider the first three cases in Table \ref{tab:EqPoints}. Since for every point in $D$, the jump map $G$ changes the value of at least one of the logic variables, we just need to consider the case when $z^*\in C$ 
to determine isolated equilibrium points $z^*$. 
The continuous state $x$ of the system satisfies
$$\dot{x}=\left[\begin{array}{cc}
k_1(1-q_2)-\gamma_1x_1\\
 k_2q_1-\gamma_2x_2\\
\end{array}\right].
$$

To compute equilibrium points, let $ F(z^*)=0$. Then,
$$\left[\begin{array}{ll}
k_1(1-q_2^*)-\gamma_1x_1^*\\
k_2q_1^*-\gamma_2x_2^*\\
\end{array}\right]=0$$
and solving for $x^*$ leads to
$$x^*=\left[\begin{array}{cc}\frac{k_1(1-q_2^*)}{\gamma_1}\\
\frac{k_2q_1^*}{\gamma_2}\end{array}\right].$$
According to the possible values of $q^*$, all the possibilities of $x^*$ are listed in Table \ref{tab:XqConditions}. These define vectors $z_a^*$, $z_b^*$, $z_c^*$, and $z_d^*$, which are to be checked if they satisfy $z^*\in C$.

\begin{table}[h!]
\caption{\label{tab:XqConditions} Values of $x^*$ based on different combinations of the values of $q^*$.}
 \begin{center}
 \begin{tabular}{|c|c|c|}
\hline
&$x^*$& $q^*$\\\hline
$z_a^*$&$(\frac{k_1}{\gamma_1}, \frac{k_2}{\gamma_2})$ & $q_1^*=1$, $q_2^*=0$ \\\hline
$z_b^*$& $(0, 0)$&$q_1^*=0$, $q_2^*=1$  \\\hline
$z_c^*$&$(0, \frac{k_2}{\gamma_2})$&$q_1^*=1$, $q_2^*=1$  \\\hline
$z_d^*$& $(\frac{k_1}{\gamma_1}, 0)$&$q_1^*=0$, $q_2^*=0$ \\
\hline
 \end{tabular}
\end{center}
\end{table}

 Similar to $C$, the jump set $D$ can also be written as
$D =\bigcup^4_{i=1}D_i$, where
$D_1:=\{z\in \Z: q_1=0, q_2=0, x_1=\theta_1+h_1, x_2\leq\theta_2+h_2\}\cup\{z\in \Z: q_1=1, q_2=0, x_1=\theta_1-h_1, x_2\leq\theta_2+h_2\}$,
$D_2:=\{z\in \Z: q_1=1, q_2=1, x_1=\theta_1-h_1, x_2\geq\theta_2-h_2\}\cup\{z\in \Z: q_1=0, q_2=1, x_1=\theta_1+h_1, x_2\geq\theta_2-h_2\}$,
$D_3:=\{z\in \Z: q_1=1, q_2=0, x_1\geq\theta_1-h_1, x_2=\theta_2+h_2\}\cup\{z\in \Z: q_1=1, q_2=1, x_1\geq\theta_1-h_1, x_2=\theta_2-h_2\}$,
$D_4:=\{z\in \Z: q_1=0, q_2=1, x_1\leq\theta_1+h_1, x_2=\theta_2-h_2\}\cup\{z\in \Z: q_1=0, q_2=0, x_1\leq\theta_1+h_1, x_2=\theta_2+h_2\}$.
Now, we find the value of $x_1^*, x_2^*$, $q_1^*, q_2^*$ such that $z^* \in C.$

\begin{itemize}
\item Case $\rmnum{1}$: Consider parameters such that $\theta_1+h_1
<\frac{k_1}{\gamma_1}<\theta_1^{\max}$, $0<\frac{k_2}{\gamma_2}<\theta_2+h_2.$ Then, it can be checked that\\
$$z_a^*\in C, z_b^*\notin C, z_c^*\notin C, z_d^*\notin C.$$
Then, $z^*_a$ is an equilibrium point.
\item Case $\rmnum{2 }$: Consider parameters such that  $0<\frac{k_1}{\gamma_1}<\theta_1-h_1$, $0<\frac{k_2}{\gamma_2}<\theta_2+h_2.$ Then, it can be checked that\\
$$z_a^*\notin C, z_b^*\notin C, z_c^*\notin C, z_d^*\in C.$$
Then, $z_d^*$ is an equilibrium point.
\item Case $\rmnum{3}$: Consider parameters such that $\theta_1+h_1<\frac{k_1}{\gamma_1}<\theta_1^{\max}$, $\theta_2+h_2<\frac{k_2}{\gamma_2}<\theta_2^{\max}.$ Then, it can be checked that\\
$$z_a^*\notin C, z_b^*\notin C, z_c^*\notin C, z_d^*\notin C.$$
\item Case $\rmnum{4}$: Consider parameters such that $0<\frac{k_1}{\gamma_1}<\theta_1-h_1$, $\theta_2+h_2<\frac{k_2}{\gamma_2}<\theta_2^{\max}.$ Then, it can be checked that\\
$$z_a^*\notin C, z_b^*\notin C, z_c^*\notin C, z_d^*\in C.$$
Then, $z_d^*$ is an equilibrium point.
\item Case $\rmnum{5}$: Consider parameters such that $\theta_1-h_1<\frac{k_1}{\gamma_1}<\theta_1+h_1$, $0<\frac{k_2}{\gamma_2}<\theta_2+h_2.$ Then, it can be checked that\\
$$z_a^*\in C, z_b^*\notin C, z_c^*\notin C, z_d^*\in C$$
Then, $z_a^*$ and $z_d^*$ are equilibrium points.
\item Case $\rmnum{6}$: Consider parameters such that $\theta_1-h_1<\frac{k_1}{\gamma_1}<\theta_1+h_1$, $\theta_2+h_2<\frac{k_2}{\gamma_2}<\theta_2^{max}.$ Then, it can be checked that\\
$$z_a^*\notin C, z_b^*\notin C, z_c^*\notin C, z_d^*\in C$$
Then, $z_d^*$ is an equilibrium point.
\end{itemize}

From the properties above, only $z_a^*$ and $z_d^*$ are candidate isolate equilibrium points. We show that the first four cases in Table \ref{tab:EqPoints} are equilibrium points of the entire system with $z_1^*=z_a^*$ and $z_2^*=z_d^*$. For case 1, which is when $\theta_1+h_1<\frac{k_1}{\gamma_1}<\theta_1^{\max}, 0<\frac{k_2}{\gamma_2}<\theta_2+h_2$, we pick any other $z' \neq z_1^*$. We have that if $z'\in C$ then $F(z')\neq0$, so $z_1^*$ is the isolated equilibrium point of the system. For case 2, when $0<\frac{k_1}{\gamma_1}<\theta_1-h_1$, we pick any other $z'\neq z^*_2$. If $z'\in C$, then $F(z')\neq0$, so $z^*_2$ is the isolated equilibrium point of the system.  For case 3, when $\theta_1-h_1<\frac{k_1}{\gamma_1}<\theta_1+h_1$, $0<\frac{k_2}{\gamma_2}<\theta_2+h_2$, we pick any other $z'\neq z^*_i$ for each $i\in\{1, 2\}$. If $z'\in C$, then $F(z')\neq0$, so $z^{*}_1$ or $z^{*}_2$ are the equilibrium points of the entire system. For case 4, when $\theta_1-h_1<\frac{k_1}{\gamma_1}<\theta_1+h_1$, $\theta_2+h_2<\frac{k_2}{\gamma_2}<\theta_2^{max}$ we pick any other $z'\neq z^*_2$. If $z'\in C$, then $F(z')\neq0$, so $z^*_2$ is the isolated equilibrium point of the system.

We now show that $S$ in \eqref{eqn:S} is an equilibrium set
for case 5 of the parameters in Table \ref{tab:EqPoints},
namely
$\frac{k_1}{\gamma_1}\in (\theta_1+h_1,\theta_1^{\max})$, $\frac{k_2}{\gamma_2} \in (\theta_2+h_2,\theta_2^{\max})$. 
To this end, we establish that for parameters in such range,
the logic variables evolve according to the state transition graph in Figure \ref{fig:T}.
This is due to the system not having an isolated equilibrium point in 
the flow set $C$; see  Case {\it iii} in the proof of Proposition~\ref{eqn:EP}.
Let $Q:=\{(0, 0), (0, 1), (1, 0), (1, 1)\}$.
The $x$ component of the vector field $F$ 
satisfies the following properties on the boundary of $C$.
\begin{figure}[h]
\begin{center}
\psfrag{C1}[][][0.9]{$C_1$}
\psfrag{C2}[][][0.9]{$C_2$}
\psfrag{C3}[][][0.9]{$C_3$}
\psfrag{C4}[][][0.9]{$C_4$}
\psfrag{x1}[][][0.9]{\ \ $x_1$}
\psfrag{x2}[][][0.9]{$x_2$}
\psfrag{q10}[][][0.9]{$q_1\!\!=\!0$}
\psfrag{q21}[][][0.9]{$q_2\!\!=\!1$}
\psfrag{q212}[][][0.9]{\hspace{-0.5in}$q_1\!\!=\!1$\ \ \ \ $q_2\!\!=\!1$}
\psfrag{q11}[][][0.9]{$q_1\!\!=\!1$}
\psfrag{q20}[][][0.9]{$q_2\!\!=\!0$}
\psfrag{t1h1z}[][][0.9]{$\theta_1$}
\psfrag{t1-h1}[][][0.75]{$\theta_1\!\!-\!h_1$}
\psfrag{t1h1}[][][0.75]{$\theta_1\!\!+\!h_1$}
\psfrag{t1m}[][][0.75]{$\theta_1^{\max}$}
\psfrag{t2}[][][0.75]{$\theta_2$}
\psfrag{t2-h2}[][][0.75]{\!\!\!\!\!\!\!$\theta_2-h_2$}
\psfrag{t2h2}[][][0.75]{\!\!\!\!\!\!$\theta_2+h_2$}
\psfrag{t2m}[][][0.75]{$\theta_2^{\max}$}
\includegraphics[width=0.85\columnwidth]{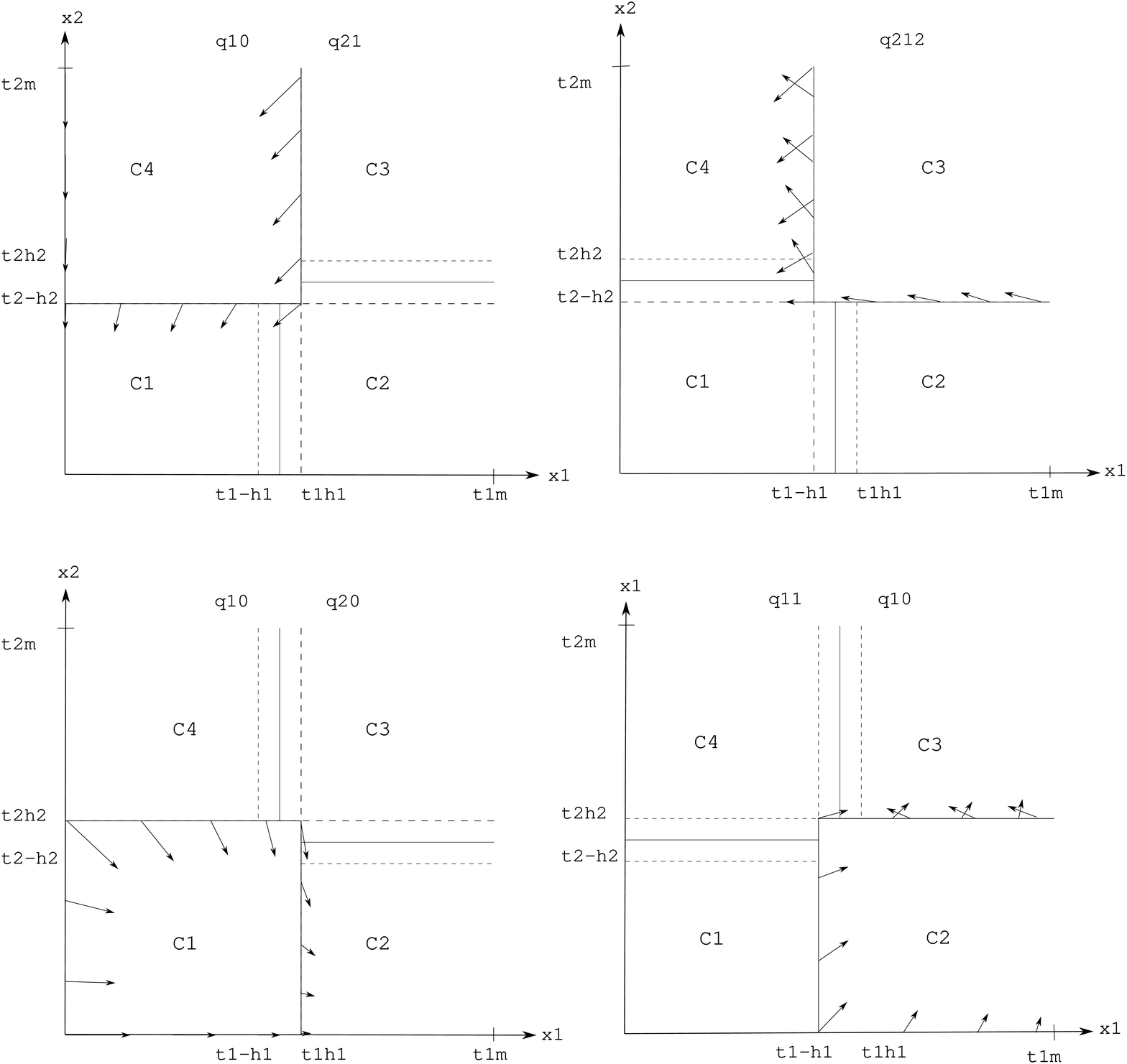}
\end{center}
\caption{\emph{Vector fields on the boundaries of $C$ to case 5 of  Table~\ref{tab:EqPoints}.}\label{fig:Vectorfieldofc}}
\end{figure}

\begin{enumerate}
\item
Points on the boundary of $C_1$:

\begin{itemize}
\item
When $x\in\{x: x_1=0, 0\leq x_2\leq\theta_2+h_2\},$ the first two components of $F$ define the vector $f_1(x):=\left[\begin{array}{cc}k_1\\-\gamma_2x_2\end{array}\right]$. Since $k_1>0$, $\gamma_2>0$, then $f_1(x)$ points inside of $C_1.$ 
\item
When $x\in\{x: 0\leq x_1\leq\theta_1+h_1, x_2=0\},$ $f_1(x):=\left[\begin{array}{cc}k_1-\gamma_1x_1\\0\end{array}\right]$. Since $\theta_1+h_1<\frac{k_1}{\gamma_1}<\theta^{\max}_{1}$, then $f_1(x)$ is tangent to the boundary of $C_1.$ 
\item
When $x\in\{x: 0\leq x_1\leq\theta_1+h_1, x_2=\theta_2+h_2\},$ $f_1(x):=\left[\begin{array}{cc}k_1-\gamma_1x_1\\-\gamma_2(\theta_2+h_2)\end{array}\right]$. Since $\theta_1+h_1<\frac{k_1}{\gamma_1}<\theta^{\max}$,$\gamma_2>0$, then $f_1(x)$ points inside of $C_1.$ 
\item
When $x\in\{x: x_1=\theta_1+h_1, 0\leq x_2\leq\theta_2+h_2\},$ $f_1(x):=\left[\begin{array}{cc}k_1-\gamma_1(\theta_1+h_1)\\-\gamma_2x_2\end{array}\right]$. Since $\theta_1+h_1<\frac{k_1}{\gamma_1}<\theta^{\max}$,$\gamma_2>0$, then $f_1(x)$ points outside of $C_1.$ 
\end{itemize}
Then, since there is no isolated equilibrium point in $C_1$, trajectories starting in $C_1$ are such that the $x$ component flow towards $\{x_1=\theta_1+h_1, 0\leq x_2\leq\theta_2+h_2\}$.
\item

Points on the boundary of $C_2$:

\begin{itemize}
\item
When $x\in\{x: x_1=\theta_1-h_1, 0\leq x_2\leq\theta_2+h_2\},$ 
$f_2(x):=\left[\begin{array}{cc}k_1-\gamma_1(\theta_1-h_1)\\k_2-\gamma_2x_2\end{array}\right]$. Since $\theta_1+h_1<\frac{k_1}{\gamma_1}<\theta^{\max}_{1}$, $\theta_2+h_2<\frac{k_2}{\gamma_2}<\theta^{\max}_{2}$, then $f_2(x)$ points inside of $C_2.$ 
\item
When $x\in\{x: x_1\geq\theta_1-h_1, x_2=0\},$ $f_2(x):=\left[\begin{array}{cc}k_1-\gamma_1x_1\\k_2\end{array}\right]$. Since $\theta_1+h_1<\frac{k_1}{\gamma_1}<\theta^{\max}$,$\theta_2+h_2<\frac{k_2}{\gamma_2}<\theta_2^{\max}$, then $f_2(x)$ points inside of $C_2.$ 
\item
When $x\in\{x: x_1\geq\theta_1-h_1, x_2=\theta_2+h_2\},$ $f_2(x):=\left[\begin{array}{cc}k_1-\gamma_1x_1\\k_2-\gamma_2(\theta_2+h_2)\end{array}\right]$. Since $\theta_1+h_1<\frac{k_1}{\gamma_1}<\theta^{\max}_{1}$, $\theta_2+h_2<\frac{k_2}{\gamma_2}<\theta_2^{\max}$, then $f_2(x)$ points outside of $C_2.$ 
\end{itemize}
Then, since there is no isolated equilibrium point in $C_2$, trajectories starting in $C_2$ are such that the $x$ component flow towards $\{x_1\geq\theta_1-h_1,  x_2=\theta_2+h_2\}$.
\item
Points on the boundary of $C_3$:
\begin{itemize}
\item
When $x\in\{x: x_1\geq\theta_1-h_1, x_2=\theta_2-h_2\},$ $f_3(x):=\left[\begin{array}{cc}-\gamma_1x_1\\k_2-\gamma_2(\theta_2-h_2)\end{array}\right]$. Since $\theta_1+h_1<\frac{k_1}{\gamma_1}<\theta^{\max}_{1}$, $\theta_2+h_2<\frac{k_2}{\gamma_2}<\theta_2^{\max}$, then $f_3(x)$ points inside of $C_3$.
\item
When $x\in\{x: x_1=\theta_1-h_1, x_2\geq\theta_2-h_2\},$ $f_3(x):=\left[\begin{array}{cc}-\gamma_1(\theta_1-h_1)\\k_2-\gamma_2x_2\end{array}\right]$. Since $\theta_1+h_1<\frac{k_1}{\gamma_1}<\theta^{\max}_{1}$, $\theta_2+h_2<\frac{k_2}{\gamma_2}<\theta_2^{\max}$, then $f_3(x)$ points outside of $C_3$.
\end{itemize}
Then, since there is no isolated equilibrium point in $C_3$, the trajectories starting in $C_3$ have $x$ component that flow towards $\{x_1=\theta_1-h_1,  x_2\geq\theta_2-h_2\}$.
\item
Points on the boundary of $C_4$:
\begin{itemize}
\item
When $x\in\{x: x_1=\theta_1+h_1, x_2\geq\theta_2-h_2\},$ $f_4(x):=\left[\begin{array}{cc}-\gamma_1(\theta_1+h_1)\\-\gamma_2x_2\end{array}\right]$. Since $\gamma_1>0$, $\gamma_2>0$, then $f_4(x)$ points inside of $C_4.$ 
\item
When $x\in\{x: x_1=0, x_2\geq\theta_2-h_2\},$ $f_4(x):=\left[\begin{array}{cc}0\\-\gamma_2x_2\end{array}\right]$. Since $\gamma_2>0$, then $f_4(x)$ is tangent to the boundary of $C_4.$ 
\item
When $x\in\{x: 0\leq x_1\leq\theta_1+h_1, x_2=\theta_2-h_2\},$ $f_4(x):=\left[\begin{array}{cc}-\gamma_1x_1\\-\gamma_2(\theta_2-h_2)\end{array}\right]$. Since $\gamma_1>0,$$\gamma_2>0$, then $f_4(x)$ points outside of the $C_4.$ 
\end{itemize}
Then, since there is no isolated equilibrium point in $C_4$, the trajectories starting in $C_4$ have $x$ component that flow towards $\{0\leq x_1\leq\theta_1+h_1,  x_2=\theta_2-h_2\}$.
\end{enumerate}

Combining the above arguments,  Figure \ref{fig:T} shows the transition sequence of $q\in Q$
for case 5 in Table \ref{tab:EqPoints}.

Now, we compute the value of the trajectories as they transition 
according to the said sequence.

The differential equation for the $x$ components of the continuous dynamics of $\cal H$ can be evaluated for each possible value of $q$ and written as
    \begin{equation}
\label{eqn:Sol}
\dot{x}=K(q)-\Gamma x, 
\end{equation}
where $\Gamma=\left[\begin{array}{cc}\gamma_1&0\\0&\gamma_2\end{array}\right]$ and $K:Q\rightarrow\mathbb{R}^{2\times 1}$
is given by
$$K(q)=\left\{\begin{array}{cccc}\left[\begin{array}{cc}k_1\\0\end{array}\right]\quad \mbox{ if}\quad q=(0, 0),\\\left[\begin{array}{cc}0\\0\end{array}\right]\quad \mbox{ if}\quad q=(0, 1),\\\left[\begin{array}{cc}k_1\\k_2\end{array}\right]\quad \mbox{ if}\quad q=(1, 0),\\\left[\begin{array}{cc}0\\k_2\end{array}\right]\quad \mbox{ if}\quad q=(1, 1). \end{array}\right.$$ 
Restricted to $C$, \eqref{eqn:Sol} is a linear time-invariant system. For any initial condition $z(0, 0) = [x(0,0)^\top\ q(0,0)^\top]^\top \in C$, 
the unique solution to \eqref{eqn:Sol} for each $t \geq 0$, up to the first jump, is given by
\begin{equation}
\label{eqn:X}
x(t,0)=u(q(0,0))+\exp(-\Gamma t)(x(0, 0)-u(q(0,0))),
\end{equation}
where $u(q)=\Gamma^{-1}K(q)$ for each $q \in Q$.

\begin{figure}[h]
\begin{center}
\psfrag{a}[][][0.9]{$01$}
\psfrag{b}[][][0.9]{$00$}
\psfrag{c}[][][0.9]{$10$}
\psfrag{d}[][][0.9]{$11$}
\includegraphics[width=0.25\columnwidth]{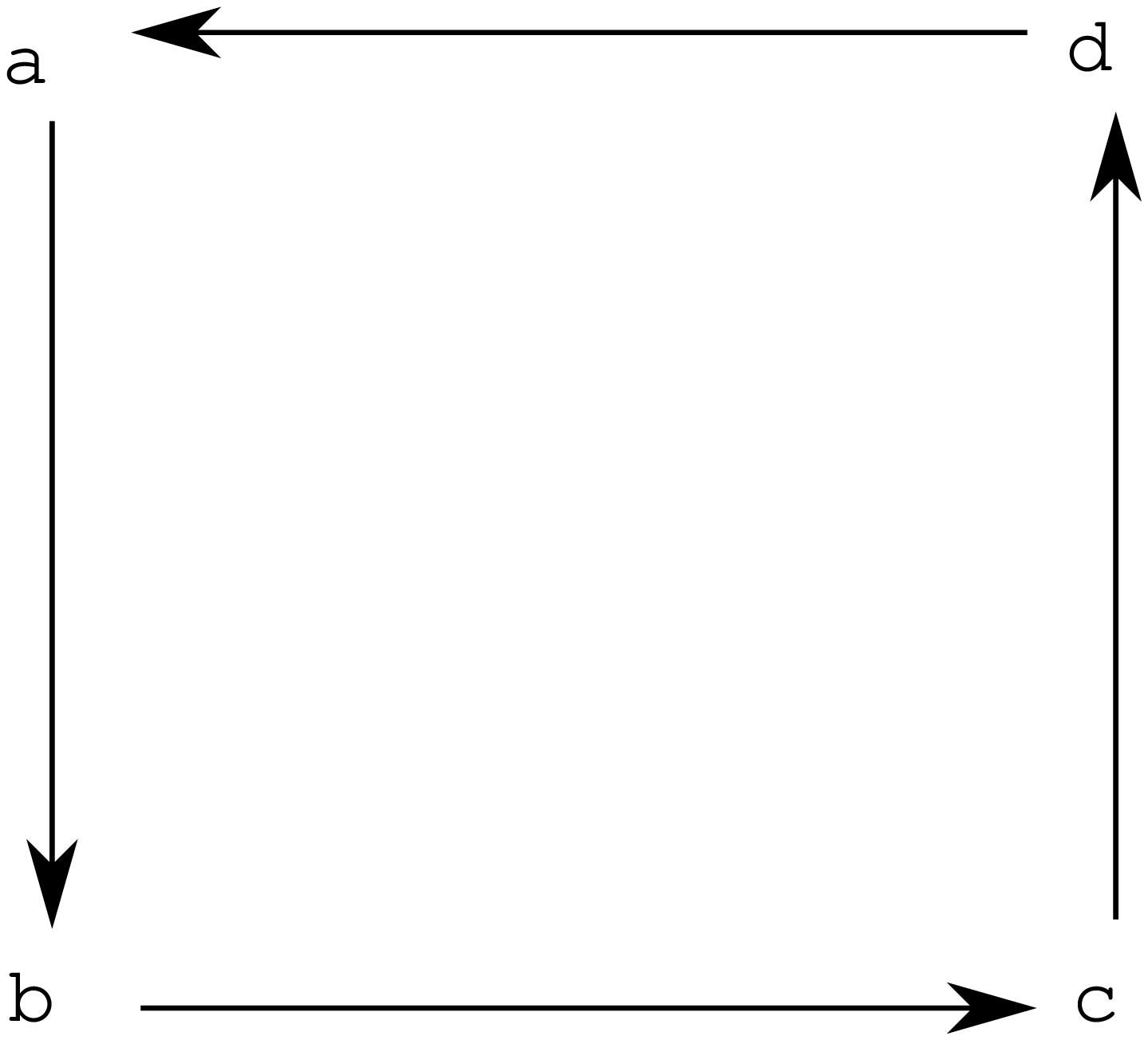}
\end{center}
\caption{\label{fig:T}\emph{Transition graph of $q\in Q$.}}
\end{figure}

For the initial value of $q$ given by $q^0=\left[\begin{array}{c}0\\0\end{array}\right]$ and the initial value of $x$ given by 
$$x(0, 0)=p_0=\left[\begin{array}{cc}p_0(1)\\\theta_2-h_2\end{array}\right],$$
the solution to \eqref{eqn:Sol} is given by
$$x(t, 0)=\left[\begin{array}{ll}
u_1-\left(u_1-x_1(0, 0)\right)\exp(-\gamma_1t)\\
x_2(0, 0)\exp(-\gamma_2t)\end{array}\right]$$ 
where $u_1=\frac{k_1}{\gamma_1}.$ Since $K((0,0))=\left[\begin{array}{cc}k_1\\0\end{array}\right]$ implies that $u(q)=\left[\begin{array}{cc}\frac{k_1}{\gamma_1}\\0\end{array}\right].$ Note that $x_2(t, 0)$ decreases to zero in $C_1$ and that $x_1(t, 0)$ is increasing in $C_1,$ reaching the threshold value $x_1=\theta_1+h_1$ at $t_1'=\ln\left[{\frac{u_1-x_1(0, 0)}{u_1-(\theta_1+h_1)}}\right]^{\frac{1}{\gamma_1}}$. A jump of $q$ to $\left[\begin{array}{cc}1\\0\end{array}\right]$ occurs at $t=t_1, j=0.$

After the jump, the initial value of $x$ is $p_1$, where 
\begin{equation}
p_1=\left[\begin{array}{cc}
\theta_1+h_1\\
p_0(2)\left(\frac{u_1-(\theta_1+h_1)}{u_1-p_0(1)}\right)^{\frac{\gamma_2}{\gamma_1}}\end{array}\right].
\end{equation}
Proceeding similarly as when the initial state was $p_0,$ we obtain the following expressions for $p_2$, $p_3$, and $p_4:$ 
\begin{equation}
\label{eqn:q2q4}
\begin{array}{lll}
p_2=\left[\begin{array}{cc}u_1-(u_1-p_1(1))\left(\frac{u_2-p_2(2)}{u_2-p_1(2)}\right)^{\frac{\gamma_1}{\gamma_2}}\\\theta_2+h_2\end{array}\right],\\
p_3=\left[\begin{array}{cc}\theta_1-h_1\\
u_2-(u_2-p_2(2))\left(\frac{p_3(1)}{p_2(1)}\right)^{\frac{\gamma_2}{\gamma_1}}\end{array}\right], \ \
p_4=\left[\begin{array}{cc}p_3(1)\left(\frac{p_4(2)}{p_3(2)}\right)^{\frac{\gamma_1}{\gamma_2}}\\\theta_2-h_2\end{array}\right],\end{array}
\end{equation}
where $u_2=\frac{k_2}{\gamma_2}.$ 
Also, similarly, we have the expression for $t_2', t_3', t_4':$
$$\begin{array}{lll}t_2'=\ln\left[{\frac{u_2-p_1(2)}{u_2-(\theta_2+h_2)}}\right]^{\frac{1}{\gamma_2}},\quad t_3'=\ln\left[{\frac{p_2(1)}{\theta_1-h_1}}\right]^{\frac{1}{\gamma_1}}, \quad t_4'=\ln\left[{\frac{p_3(2)}{\theta_2-h_2}}\right]^{\frac{1}{\gamma_2}}.\end{array}$$
Then, the period of the limit cycle is given by
$T=t_1'+t_2'+t_3'+t_4'.$
Note that $t_1=t_1',$ $t_2=t_1'+t_2',$ $t_3=t_1'+t_2'+t_3'$ and $t_4=t_1'+t_2'+t_3'+t_4',$ 
where $t_1, t_2, t_3$ and $t_4$ define the jump times $(t_1, 0),$ $(t_2, 1),$ $(t_3, 2),$ and $(t_4, 3).$

Now, we define the map $\rho: [0, \theta_1^{\max}]\rightarrow\mathbb{R}$ as $$\rho(r)=\rho_4\circ \rho_3\circ \rho_2\circ \rho_1(r),$$ where$$\begin{array}{llll}
\rho_1(r)=(\theta_2-h_2)\left(\frac{u_1-(\theta_1+h_1)}{u_1-r}\right)^{\frac{\gamma_2}{\gamma_1}},\\
\rho_2(r)=u_1-(u_1-(\theta_1+h_1))\left(\frac{u_2-(\theta_2+h_2)}{u_2-r}\right)^{\frac{\gamma_1}{\gamma_2}},\\
\rho_3(r)=u_2-(u_2-(\theta_2+h_2))\left(\frac{\theta_1-h_1}{r}\right)^{\frac{\gamma_2}{\gamma_1}},\\
\rho_4(r)=(\theta_1-h_1)\left(\frac{\theta_2-h_2}{r}\right)^{\frac{\gamma_1}{\gamma_2}}.\end{array}$$
Then, $r$ such that
$$\rho(r)=r$$
defines $p_0(1)$. 

Then, combining the above expressions, 
we obtain \eqref{eqn:p0}-\eqref{eqn:p3}.
Finally, using \eqref{eqn:X}, the set $S$ is constructed by combining 
the $x$ components of the (unique) solutions between these points. 
Since each piece of the $x$ component 
corresponding to a constant value of $q$ is a solution to a linear system, 
this set of points has the property that, from every point in it, the only existing solution
from that point stays in the set, i.e., 
the set is strongly forward invariant.

\subsection{Proof of Proposition~\ref{eqn:Existence}}
\label{app:ExistenceProof}

To verify the sufficient conditions for the existence of nontrivial solutions from an initial point in $C\cup D,$ it is enough to show that $F(z)\in T_C(z)$ for every $z\in C\setminus D$ 
in the boundary of $C$ (the {\bf (VC)} condition holds for every point in the interior of $C$.)

Next, we consider each possible case. 

\begin{enumerate}
\item 
 Let $z\in C_1\setminus D_1 = \{z: q_1=0, q_2=0, 0\leq x_1<\theta_1+h_1 , 0\leq x_2\leq\theta_2+h_2\}.$ Let
\begin{eqnarray*}
T^1_{C_{1}}(z) & =& \{w\in\mathbb{R}^2: w_1\geq 0, w_2\leq 0\},\\
T^2_{C_1}(z) &=&\{w\in\mathbb{R}^2: w_1\geq 0\},\\
T^3_{C_{1}}(z) &=& \{ w\in\mathbb{R}^2:  w_1\geq 0, w_2\geq 0\},\\
T^4_{C_1}(z) &=& \{w\in\mathbb{R}^2:  w_2\geq 0\},\\
T^5_{C_{1}}(z) &=&\{ w\in\mathbb{R}^2: w_1\leq 0, w_2\geq 0\},\\
T^6_{C_{1}}(z) &=&\{ w\in\mathbb{R}^2: w_1\leq 0\},\\
T^7_{C_{1}}(z) &=&\{ w\in\mathbb{R}^2: w_1\leq 0, w_2\leq 0\},\\
T^8_{C_{1}}(z) &=&\{ w\in\mathbb{R}^2: w_2\leq 0\}.
\end{eqnarray*}
Then, the tangent cone of $C_1$
at points $z = (x,q)$ is given as follows:
\begin{itemize}
\item For $x\in\{x: x_1=0, x_2=\theta_2+h_2\},$ $T_{C_{1}}(z)=T^1_{C_{1}}(z)$,\\
\item For $x\in\{x: x_1=0, 0<x_2<\theta_2+h_2\},$ $T_{C_{1}}(z)=T^2_{C_{1}}(z)$.\\
\item For $x\in\{x: x_1=0, x_2=0\},$ $T_{C_{1}}(z)=T^3_{C_{1}}(z)$.\\
\item For $x\in\{0< x_1<\theta_1+h_1, x_2=0\},$ $T_{C_{1}}(z)=T^4_{C_{1}}(z)$,\\
\item For $x\in\{x: x_1=\theta_1+h_1, x_2=0\},$ $T_{C_{1}}(z)=T^5_{C_{1}}(z)$.\\
\item For $x\in\{x_1=\theta_1+h_1, 0< x_2<\theta_2+h_2\},$ $T_{C_{1}}(z)=T^6_{C_{1}}(z)$,\\
\item For $x\in\{x_1=\theta_1+h_1, x_2=\theta_2+h_2\},$ $T_{C_{1}}(z)=T^7_{C_{1}}(z)$,\\
\item For $x\in\{0< x_1<\theta_1+h_1, x_2=\theta_2+h_2\},$ $T_{C_{1}}(z)=T^8_{C_{1}}(z)$.\\
\end{itemize}

\begin{figure}[h]
\begin{center}
\psfrag{C1}[][][0.9]{$C_1$}
\psfrag{C2}[][][0.9]{$C_2$}
\psfrag{C3}[][][0.9]{$C_3$}
\psfrag{C4}[][][0.9]{$C_4$}
\psfrag{q1}[][][0.9]{$q_1=0$}
\psfrag{q2}[][][0.9]{$q_2=0$}
\psfrag{(a)}[][][0.9]{$(a)$}
\psfrag{1}[][][0.9]{$1$}
\psfrag{2}[][][0.9]{$2$}
\psfrag{3}[][][0.9]{$3$}
\psfrag{4}[][][0.9]{$4$}
\psfrag{5}[][][0.9]{$5$}
\psfrag{6}[][][0.9]{$6$}
\psfrag{7}[][][0.9]{$7$}
\psfrag{8}[][][0.9]{$8$}
\psfrag{x1}[][][0.9]{$x_1$}
\psfrag{x2}[][][0.9]{$x_2$}
\psfrag{t1m}[][][0.9]{$\theta_1-h_1$}
\psfrag{t1p}[][][0.9]{$\theta_1+h_1$}
\psfrag{t1max}[][][0.9]{$\theta_1^{\max}$}
\psfrag{t2m}[][][0.9]{$\theta_2-h_2$}
\psfrag{t2p}[][][0.9]{$\theta_2+h_2$}
\psfrag{t2max}[][][0.9]{$\theta_2^{\max}$}
\psfrag{(a)}[][][0.9]{}
\includegraphics[width=0.55\columnwidth]{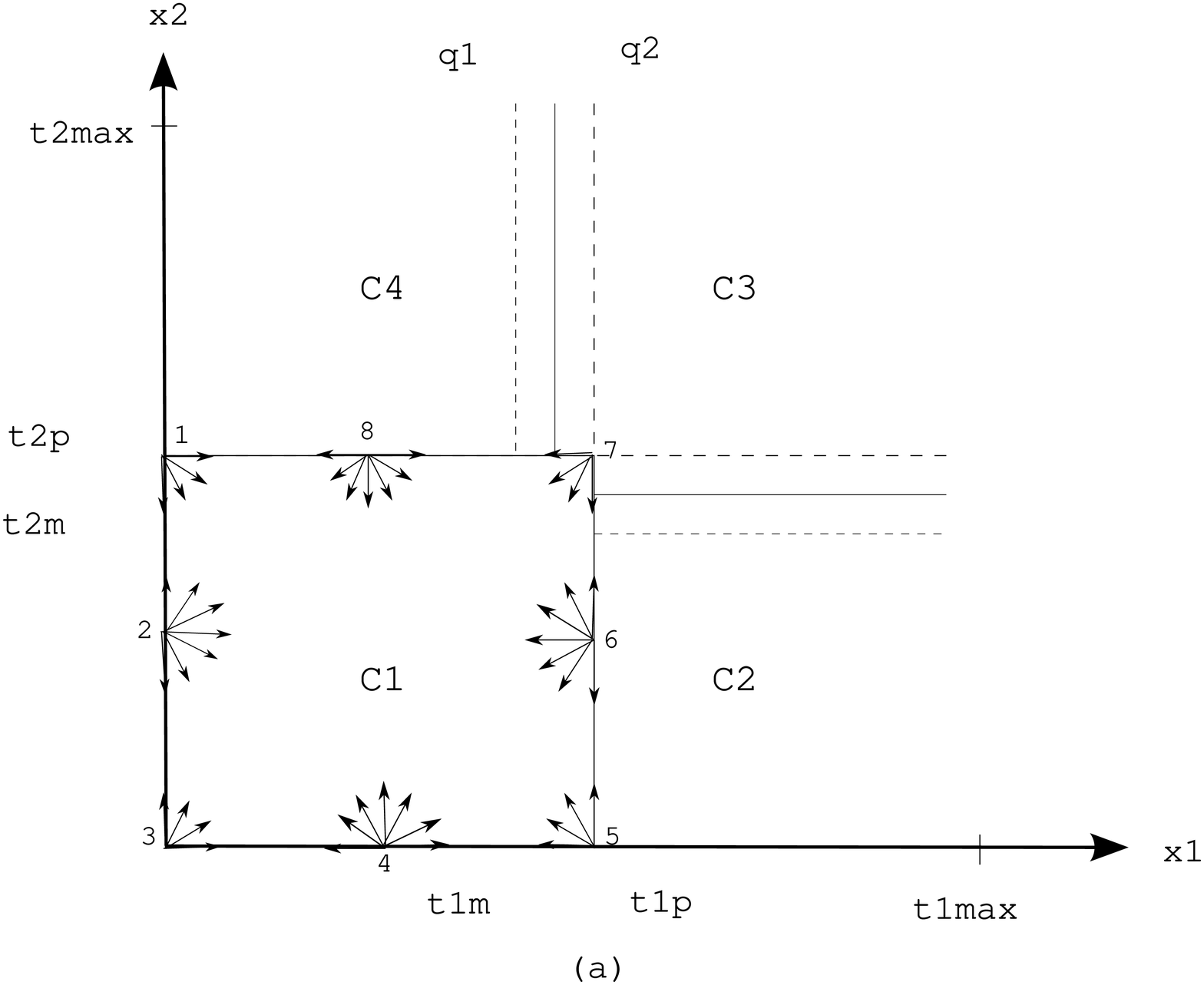}
\end{center}
\caption{\emph{Tangent cones on the boundaries of $C_1$.}\label{fig:TangentConeC_1}}
\end{figure}
Now, we check the vector field $F$ on the boundary of $C_1$ 
away from $D_1$. 

\begin{itemize}
\item
When $x\in\{x: x_1=0, 0\leq x_2\leq\theta_2+h_2\},$ $f_1(x):=\left[\begin{array}{cc}k_1\\-\gamma_2x_2\end{array}\right]$. Since $k_1>0$, $f_1(x)$ points inside of $C_1$.
\item
When $x\in\{x: 0\leq x_1\leq\theta_1+h_1, x_2=0\},$ $f_1(x):=\left[\begin{array}{cc}k_1-\gamma_1x_1\\0\end{array}\right]$. Then,
$f_1(x)$ is tangent to the boundary of $C_1.$ 
\end{itemize}
Then, $F(z)\in T_{C_1}$ holds, implying that {\bf (VC)} holds at each point $z\in C_1\setminus D_1$. 
\item
Let $z\in C_2\setminus D_2 = \{z: q_1=1, q_2=0, x_1 \geq \theta_1-h_1, 0\leq x_2<\theta_2+h_2\}.$ Let
\begin{eqnarray*}
T^1_{C_{2}}(z) & =& \{w\in\mathbb{R}^2:  w_2\geq 0\},\\
T^2_{C_2}(z) &=&\{w\in\mathbb{R}^2: w_1\geq 0, w_2\geq 0\},\\
T^3_{C_{2}}(z) &=& \{ w\in\mathbb{R}^2:  w_1\geq 0\},\\
T^4_{C_2}(z) &=& \{w\in\mathbb{R}^2: w_1\leq 0, w_2\geq 0\},\\
T^5_{C_{2}}(z) &=&\{ w\in\mathbb{R}^2:  w_2\leq 0\}.
\end{eqnarray*}
Then, the tangent cone of $C_2$ is given by as follows:
\begin{itemize}
\item For $x\in\{x: x_1>\theta_1-h_1, x_2=0\},$ $T_{C_{2}}(z)=T^1_{C_{2}}(z)$,\\
\item For $x\in\{x: x_1=\theta_1-h_1, x_2=0\},$ $T_{C_{2}}(z)=T^2_{C_{2}}(z)$.\\
\item For $x\in\{x: x_1=\theta_1-h_1, 0\leq x_2\leq\theta_2+h_2\},$ $T_{C_{2}}(z)=T^3_{C_{2}}(z)$.\\
\item For $x\in\{x_1=\theta_1-h_1, x_2=\theta_2+h_2\},$ $T_{C_{2}}(z)=T^4_{C_{2}}(z)$,\\
\item For $x\in\{x: x_1>\theta_1-h_1, x_2=\theta_2+h_2\},$ $T_{C_{2}}(z)=T^5_{C_{2}}(z)$.\\
\end{itemize}

Figure~\ref{fig:TangentConeC2} depicts the tangent cones on the boundaries of $C_2$.

\begin{figure}[h]
\begin{center}
\psfrag{q1}[][][0.9]{$q_1=1$}
\psfrag{q2}[][][0.9]{$q_2=0$}
\psfrag{C2}[][][0.9]{$C_2$}
\psfrag{x1}[][][0.9]{$x_1$}
\psfrag{x2}[][][0.9]{$x_2$}
\psfrag{t1m}[][][0.9]{$\theta_1\!\!-\!h_1$}
\psfrag{t1p}[][][0.9]{$\theta_1\!\!+\!h_1$}
\psfrag{t1max}[][][0.9]{$\theta_1^{\max}$}
\psfrag{t2m}[][][0.9]{$\theta_2-h_2$}
\psfrag{t2p}[][][0.9]{$\theta_2+h_2$}
\psfrag{t2max}[][][0.9]{$\theta_2^{\max}$}
\psfrag{1}[][][0.9]{$1$}
\psfrag{2}[][][0.9]{$2$}
\psfrag{3}[][][0.9]{$3$}
\psfrag{4}[][][0.9]{$4$}
\psfrag{5}[][][0.9]{$5$}

\includegraphics[width=0.55\columnwidth]{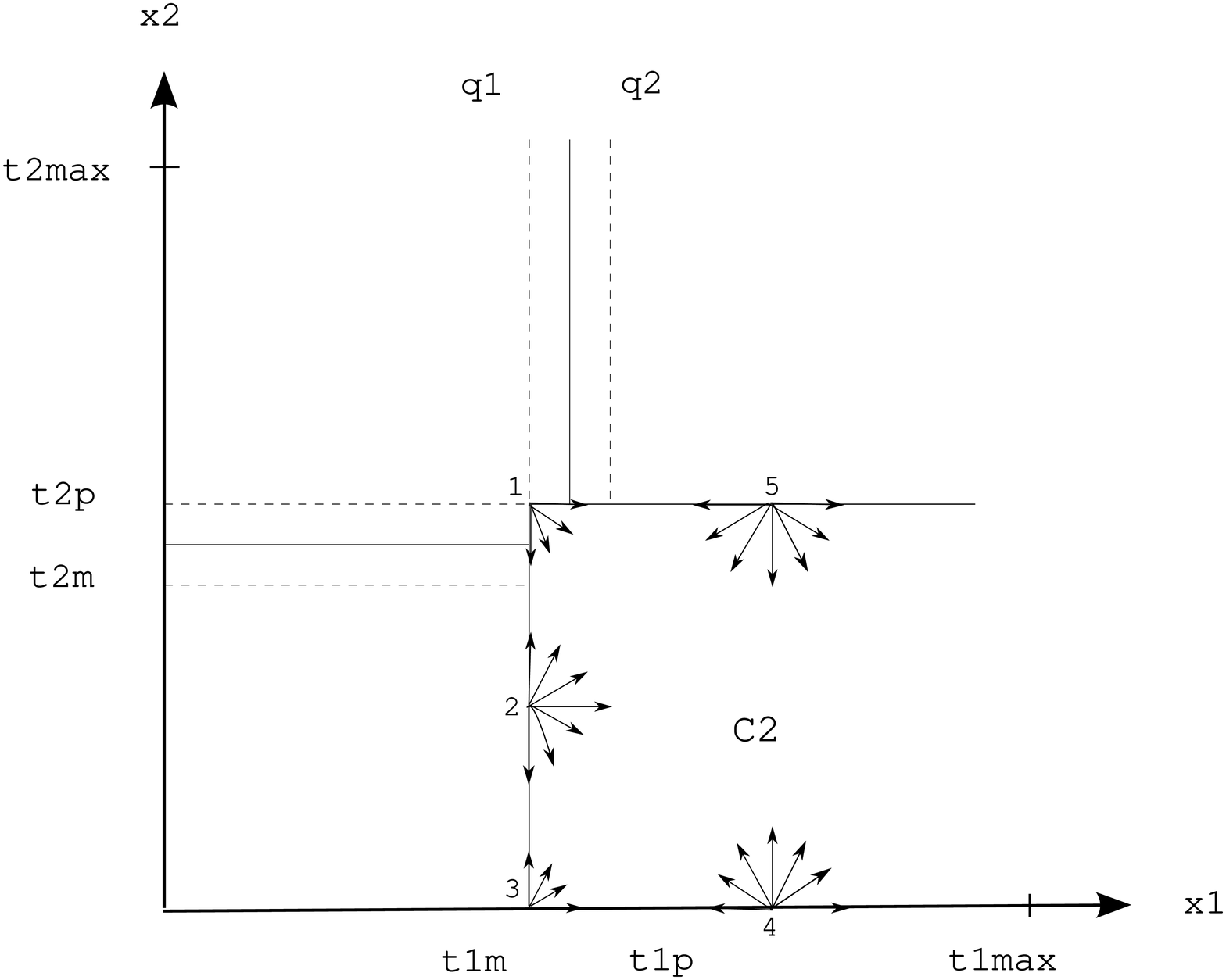}
\end{center}
\caption{\emph{The tangent cone of $C_{2}$ when $\theta_1+h_1<\frac{k_1}{\gamma_1}<\theta^{\max}_{ 1 }$,$0<\frac{k_2}{\gamma_2}<\theta_2+h_2$.}\label{fig:TangentConeC2}}
\end{figure}

Now, we check the vector field $F$ on the boundary of $C_2$ away from $D_2$.
\begin{itemize}
\item
When $x\in\{x: x_1\geq\theta_1-h_1, x_2=0\},$ $f_2(x):=\left[\begin{array}{cc}k_1-\gamma_1x_1\\k_2\end{array}\right]$. 
Since $k_2 > 0$,  we have that $f_2(x)$ points inside of $C_2$.
\end{itemize}
Then, $F(z)\in T_{C_2}(z)$ holds, implying that {\bf (VC)} holds at each point $z\in C_2\setminus D_2$.
\item
Let $z\in C_3\setminus D_3= \{z: q_1=1, q_2=1, x_1>\theta_1-h_1 ,  x_2>\theta_2-h_2\}$. 

Since there are no points in the boundary of $C_3$ that are not in $D_3$, {\bf (VC)} holds for free.

\item
 Let $z\in C_4\setminus D_4 = \{z: q_1=0, q_2=1, 0\leq x_1\leq\theta_1+h_1 ,  x_2>\theta_2-h_2\}.$ Let
\begin{eqnarray*}
T^1_{C_{4}}(z) & =& \{w\in\mathbb{R}^2: w_1\leq 0\},\\
T^2_{C_4}(z) &=&\{w\in\mathbb{R}^2: w_1\leq 0, w_2\geq 0\},\\
T^3_{C_{4}}(z) &=& \{ w\in\mathbb{R}^2: w_2\geq 0\},\\
T^4_{C_4}(z) &=& \{w\in\mathbb{R}^2:  w_1\geq 0, w_2\geq 0\},\\
T^5_{C_{4}}(z) &=&\{ w\in\mathbb{R}^2: w_1\geq 0\}.
\end{eqnarray*}
Then, the tangent cone of $C_4$ is given by as follows:
\begin{itemize}
\item For $x\in\{x: x_1=\theta_1+h_1, x_2>\theta_2-h_2\},$ $T_{C_{4}}(z)=T^1_{C_{4}}(z)$,\\
\item For $x\in\{x: x_1=\theta_1+h_1, x_2=\theta_2-h_2\},$ $T_{C_{4}}(z)=T^2_{C_{4}}(z)$.\\
\item For $x\in\{x: 0<x_1<\theta_1+h_1, x_2=\theta_2-h_2\},$ $T_{C_{4}}(z)=T^3_{C_{4}}(z)$.\\
\item For $x\in\{x_1=0, x_2=\theta_2-h_2\},$ $T_{C_{4}}(z)=T^4_{C_{4}}(z)$,\\
\item For $x\in\{x: x_1=0, x_2>\theta_2-h_2\},$ $T_{C_{4}}(z)=T^5_{C_{4}}(z)$.\\
\end{itemize}

\begin{figure}[h]
\begin{center}
\psfrag{C4}[][][0.9]{$C_4$}
\psfrag{q1}[][][0.9]{$q_1\!\!=\!0$}
\psfrag{q2}[][][0.9]{$q_2\!\!=\!1$}
\psfrag{(a)}[][][0.9]{$(b)$}
\psfrag{1}[][][0.9]{$1$}
\psfrag{2}[][][0.9]{$2$}
\psfrag{3}[][][0.9]{$3$}
\psfrag{4}[][][0.9]{$4$}
\psfrag{5}[][][0.9]{$5$}
\psfrag{x1}[][][0.9]{$x_1$}
\psfrag{x2}[][][0.9]{$x_2$}
\psfrag{t1m}[][][0.9]{$\theta_1\!-\!h_1$}
\psfrag{t1p}[][][0.9]{$\theta_1\!+\!h_1$}
\psfrag{t1max}[][][0.9]{$\theta_1^{\max}$}
\psfrag{t2m}[][][0.9]{$\theta_2\!-\!h_2$}
\psfrag{t2p}[][][0.9]{$\theta_2\!+\!h_2$}
\psfrag{t2max}[][][0.9]{$\theta_2^{\max}$}
\psfrag{(b)}[][][0.9]{}
\includegraphics[width=0.55\columnwidth]{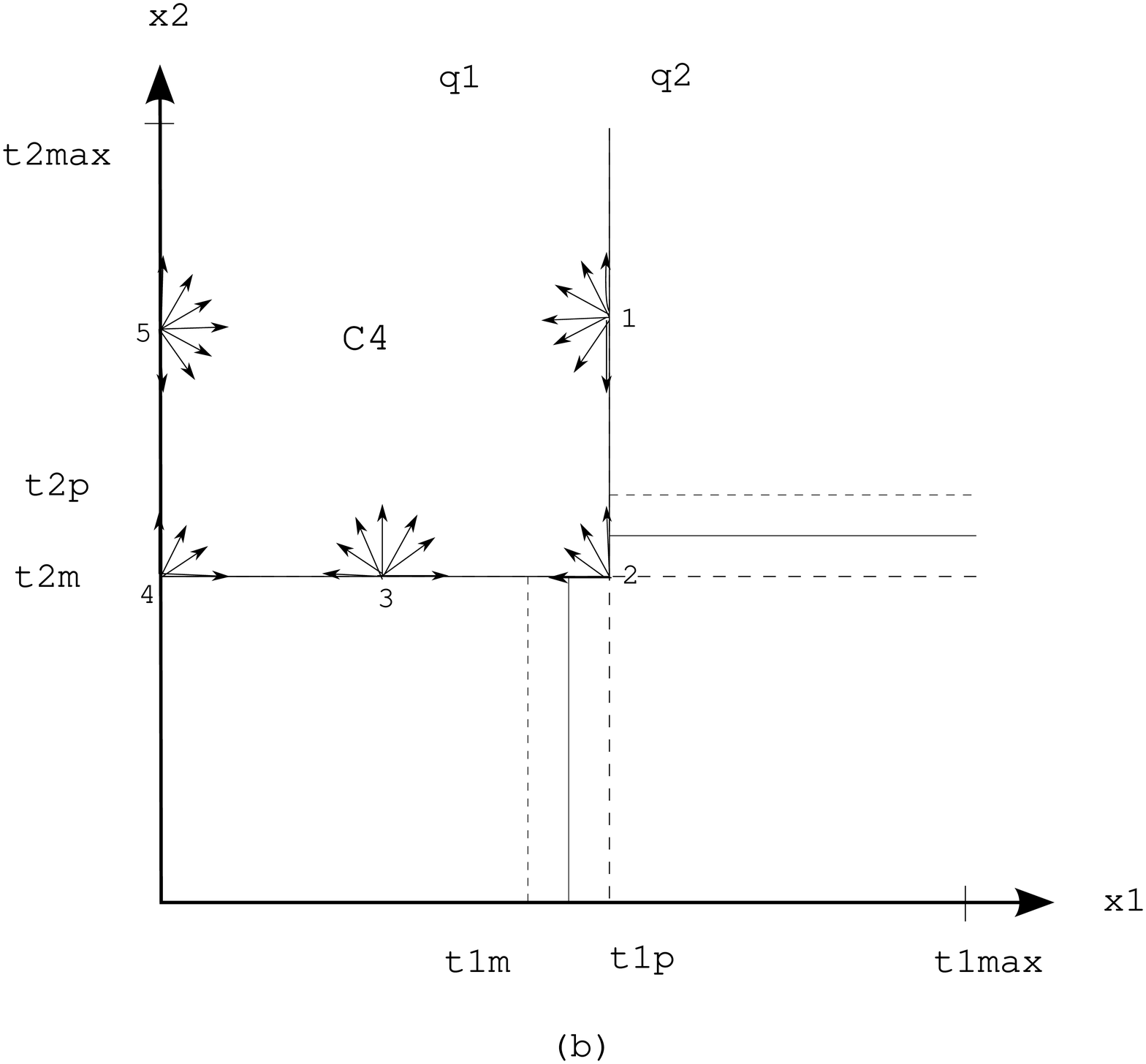}
\end{center}
\caption{\emph{Tangent cones on the boundaries of $C_4$.}\label{fig:TangentConeC_4}}
\end{figure}

Now, we check the vector field $F$ on the boundary of $C_4$ away from $D_4$. 
\begin{itemize}
\item
When $x\in\{x: x_1=0, x_2\geq\theta_2-h_2\},$ $f_4(x):=\left[\begin{array}{cc}0\\-\gamma_2x_2\end{array}\right]$, and $f_4(x)$ is tangent to the boundary of $C_4.$
Then, $F(z)\in T_{C_4}(z)$ holds at each point $z\in C_4\setminus D_4$.
\end{itemize}
\end{enumerate}

Combining the above arguments, for each case in Table \ref{tab:EqPoints},
{\bf (VC)} holds and nontrivial solutions to $\cal H$ in \eqref{eqn:H2} exist.

Since $f_i$ is linear, every solution to $\dot{z}=F(z)$ subject to $z\in C$ does not escape to infinity by flowing. Then, condition 2) below (VC) does not hold. Since $G(D)\subset C\cup D$, then condition 3) therein does not hold either.

Finally, every solution is not Zeno, since at most after the second jump,
every solution needs to flow (linearly) from the 
value after the jump, which is given by $G$,
for at least
$2\min\{h_1,h_2\}$ in the $x_1$ or in the $x_2$ direction (for certain initial conditions, 
e.g., $z(0,0) = [\theta_1 - h_1, \theta_2 - h_2, 1,1]$,
solutions jump twice consecutively, and after that, flow for the said amount).

\subsection{Proof of Proposition~\ref{eqn:Stable1}}
\label{app:Stable1Proof}

Note that when $\theta_1+h_1<\frac{k_1}{\gamma_1}<\theta_1^{max}$,$0<\frac{k_2}{\gamma_2}<\theta_2+h_2$,   $q_1=1$, $q_2=0,$
\\
$$\begin{array}{lll}
0=\dot{x}_1 =k_1-\gamma_1x_1\quad\Rightarrow\quad k_1-\gamma_1x_1=0\quad \Rightarrow \quad x_1^*=\frac{k_1}{\gamma_1}\\
0=\dot{x}_2 = k_2-\gamma_2x_2\quad\Rightarrow\quad k_2-\gamma_2x_2=0\quad \Rightarrow \quad x_2^*=\frac{k_2}{\gamma_2},
\end{array}$$
from where we have $z^*_1 = [{x^*}^\top\ 1\ 0]^\top\in C_2$. 
Now, change to $e$ coordinates given by\\
$$e_1 = x_1-x_1^*, \quad e_2 =x_2-x_2^*.$$
We have that
\begin{eqnarray*}
\dot{e}_1 & = & \dot{x}_1-\dot{x}_1^*= k_1-\gamma_1x_1-0=k_1-\gamma_1(e_1+x_1^*)\\
& = & k_1-\gamma_1e_1-k_1=-\gamma_1e_1,\\
\dot{e}_2 & = & \dot{x}_2-\dot{x}_2^*=k_2-\gamma_2x_2-0=k_2-\gamma_2(e_2+x_2^*)\\
& = & k_2-\gamma_2e_2-k_2=-\gamma_2e_2.
\end{eqnarray*}
Then, the $x$ component of $\dot{z}=F(z)$ leads to 
$$
\dot{e}=\left[
\begin{array}{cccc}
-\gamma_1 &0\\
0& -\gamma_2.
\end{array}\right]e.
$$
Then, since $\gamma_1$ and $\gamma_2$ are positive, 
we have that $z^*_1$ is (exponentially) stable -- this property 
can be easily certified with the Lyapunov function
$V(e)=e^\top e$.

Now we check the vector fields of boundaries of $C_2$ (see Figure \ref{fig:c2forwardinvariant}).

\begin{figure}[h]
\begin{center}
\psfrag{C1}[][][0.9]{$C_4$}
\psfrag{C2}[][][0.9]{$C_4$}
\psfrag{C3}[][][0.9]{$C_4$}
\psfrag{C4}[][][0.9]{$C_4$}
\psfrag{q1}[][][0.9]{$q_1\!\!=\!0$}
\psfrag{q2}[][][0.9]{$q_2\!\!=\!1$}
\psfrag{(a)}[][][0.9]{$(b)$}
\psfrag{1}[][][0.9]{$1$}
\psfrag{2}[][][0.9]{$2$}
\psfrag{3}[][][0.9]{$3$}
\psfrag{4}[][][0.9]{$4$}
\psfrag{5}[][][0.9]{$5$}
\psfrag{x1}[][][0.9]{$x_1$}
\psfrag{x2}[][][0.9]{$x_2$}
\psfrag{t1m}[][][0.9]{$\theta_1\!-\!h_1$}
\psfrag{t1p}[][][0.9]{$\theta_1\!+\!h_1$}
\psfrag{t1max}[][][0.9]{$\theta_1^{\max}$}
\psfrag{t2m}[][][0.9]{$\theta_2\!-\!h_2$}
\psfrag{t2p}[][][0.9]{$\theta_2\!+\!h_2$}
\psfrag{t2max}[][][0.9]{$\theta_2^{\max}$}
\includegraphics[width=0.55\columnwidth]{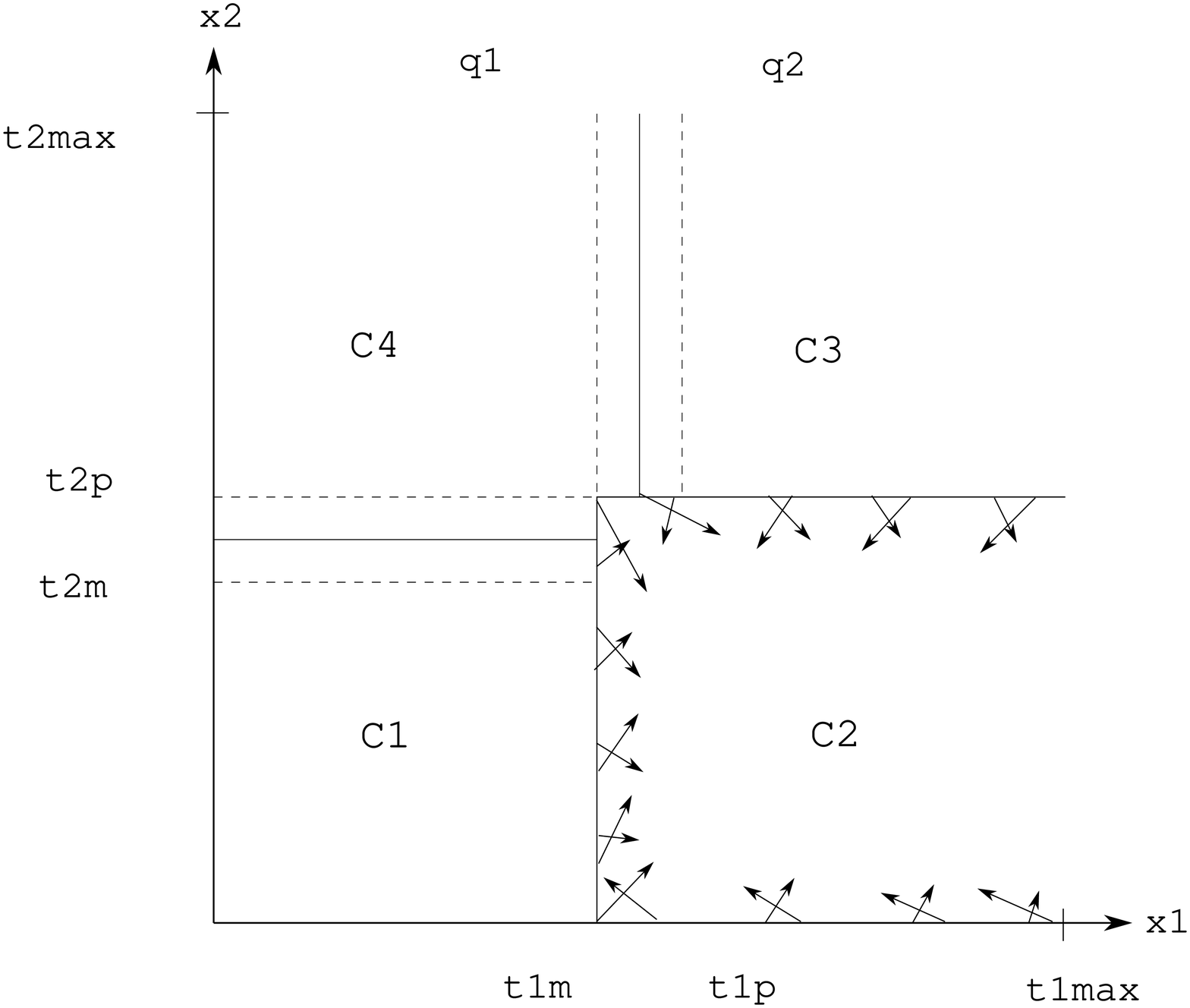}
\end{center}
\caption{\emph{The vector fields on the boundaries of $C_2$ when $\theta_1+h_1<\frac{k_1}{\gamma_1}<\theta_1^{max}$, $0<\frac{k_2}{\gamma_2}<\theta_2+h_2$.}\label{fig:c2forwardinvariant}}
\end{figure}

\begin{itemize}
\item
When $x\in\{x: x_1=\theta_1-h_1, 0\leq x_2\leq\theta_2+h_2\}$, $f_{2}(x)=\left[\begin{array}{cc}k_1-\gamma_1(\theta_1-h_1)\\k_2-\gamma_2x_2\end{array}\right]$. Since $\theta_1+h_1<\frac{k_1}{\gamma_1}<\theta_1^{max}$, $0<\frac{k_2}{\gamma_2}<\theta_2+h_2$, we have that $f_2(x)$ points inside of $C_2$. 
\item
When $x\in\{x: x_1\geq\theta_1-h_1, x_2=\theta_2+h_2\}$, $f_2(x)=\left[\begin{array}{cc}k_1-\gamma_1x_1\\k_2-\gamma_2(\theta_2+h_2)\end{array}\right]$. Since $\theta_1+h_1<\frac{k_1}{\gamma_1}<\theta_1^{max}$, $0<\frac{k_2}{\gamma_2}<\theta_2+h_2$, we have that $f_2(x)$ points inside of $C_2$. 
\item
When $x\in\{x: x_1\geq\theta_1-h_1, x_2=0\}$, $f_2(x)=\left[\begin{array}{cc}k_1-\gamma_1x_1\\k_2\end{array}\right]$. Since $\theta_1+h_1<\frac{k_1}{\gamma_1}<\theta_1^{max}$, $k_2>0$, we have that $f_2(x)$ points inside of $C_2$. 
\end{itemize}
Then, once a trajectory enters or starts in $C_2,$ it will stay and never leave $C_2$, i.e., 
the set $C_2$ is forward invariant.
As a consequence,
since the equilibrium point $z_1^*$ belongs to $C_2$, 
every trajectory starting from or reaching $C_2$ converges to $z_1^*$.

Global asymptotic stability follows since for every initial condition $z(0,0)\in(C\cup D)\backslash C_2$, solutions reach $C_2$ in finite time.  
To establish this property, we check the vector field of the system on the boundary of each set $C_i$, for each $i\in\{1, 3, 4\}.$
\begin{figure}[h!]
\centering
\psfrag{C1}[][][0.9]{$C_1$}
\psfrag{C2}[][][0.9]{$C_2$}
\psfrag{C3}[][][0.9]{$C_3$}
\psfrag{C4}[][][0.9]{$C_4$}
\psfrag{q1}[][][0.9]{$q_1=0$}
\psfrag{q2}[][][0.9]{$q_2=0$}
\psfrag{q3}[][][0.9]{$q_2=1$}
\psfrag{(a)}[][][0.9]{$(a)$}
\psfrag{t1m}[][][0.9]{$\theta_1\!\!-\!h_1$}
\psfrag{t1p}[][][0.9]{$\theta_1\!\!+\!h_1$}
\psfrag{t1max}[][][0.9]{$\theta_1^{\max}$}
\psfrag{x1}[][][0.9]{$x_1$}
\psfrag{x2}[][][0.9]{$x_2$}
\psfrag{(b)}[][][0.9]{$(b)$}
\psfrag{t2m}[][][0.8]{$\theta_2-h_2$}
\psfrag{t2p}[][][0.9]{$\theta_2+h_2$}
\psfrag{t2max}[][][0.9]{$\theta_2^{\max}$}
\includegraphics[width=0.85\columnwidth]{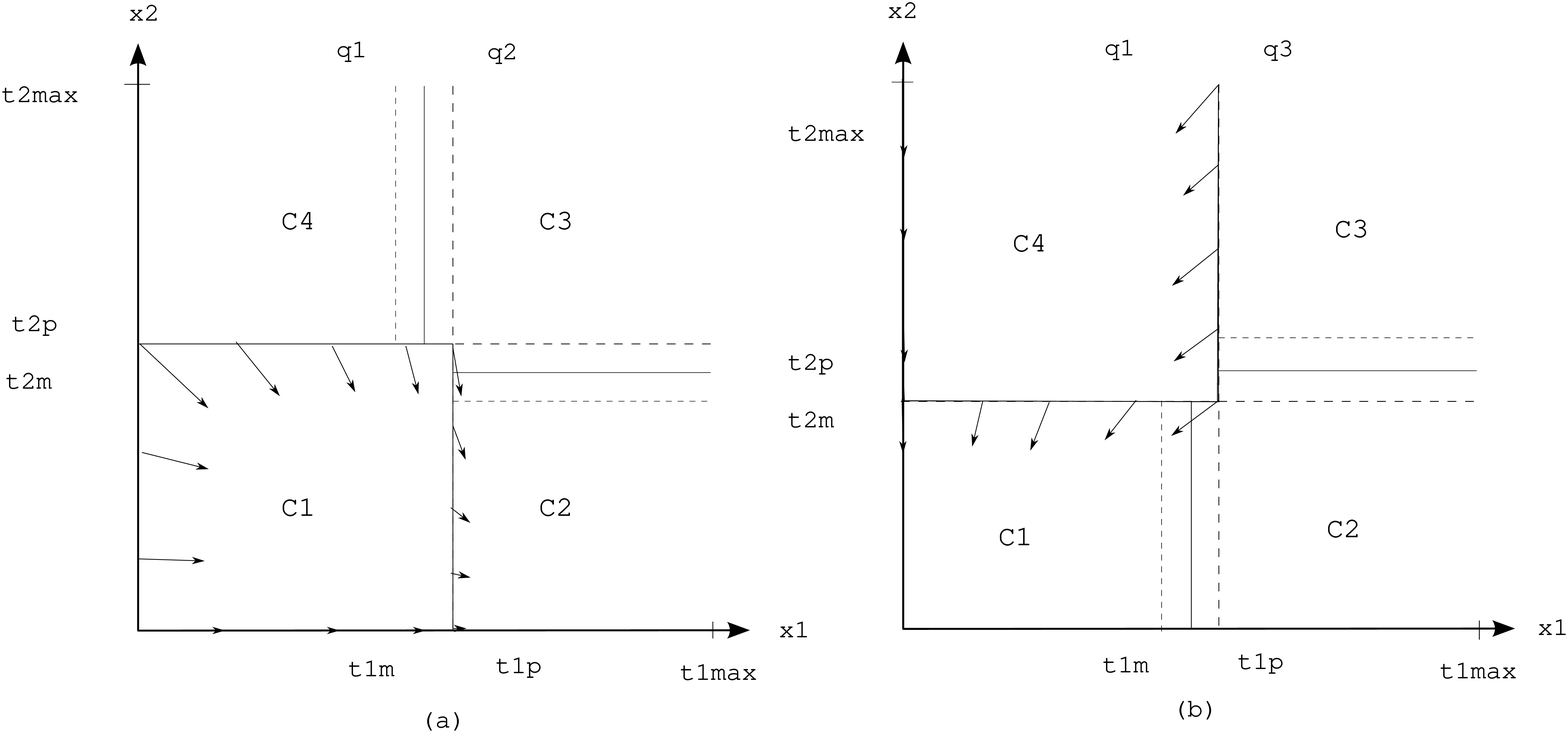}
\caption{\label{fig:V14}\emph{Vector fields at the boundaries of $C_{1}$, $C_4$, when $\theta_1+h_1<\frac{k_1}{\gamma_1}<\theta_1^{max}$,$0<\frac{k_2}{\gamma_2}<\theta_2+h_2.$ (a) The vector field at the boundaries of $C_1,$ (b) The vector field of the boundaries of $C_4.$}}
\end{figure}

\begin{figure}[h]
\centering
\psfrag{C1}[][][0.9]{$C_1$}
\psfrag{C2}[][][0.9]{$C_2$}
\psfrag{C3}[][][0.9]{$C_3$}
\psfrag{C4}[][][0.9]{$C_4$}
\psfrag{q1}[][][0.9]{$q_1=1$}
\psfrag{q2}[][][0.9]{$q_2=1$}
\psfrag{(a)}[][][0.9]{$(a)$}
\psfrag{t1m}[][][0.9]{$\theta_1\!\!-\!h_1$}
\psfrag{t1p}[][][0.9]{$\theta_1\!\!+\!h_1$}
\psfrag{t1max}[][][0.9]{$\theta_1^{\max}$}
\psfrag{x1}[][][0.9]{$x_1$}
\psfrag{x2}[][][0.9]{$x_2$}
\psfrag{(b)}[][][0.9]{$(b)$}
\psfrag{t2m}[][][0.9]{$\theta_2-h_2$}
\psfrag{t2p}[][][0.9]{$\theta_2+h_2$}
\psfrag{t2max}[][][0.9]{$\theta_2^{\max}$}
\includegraphics[width=0.85\columnwidth]{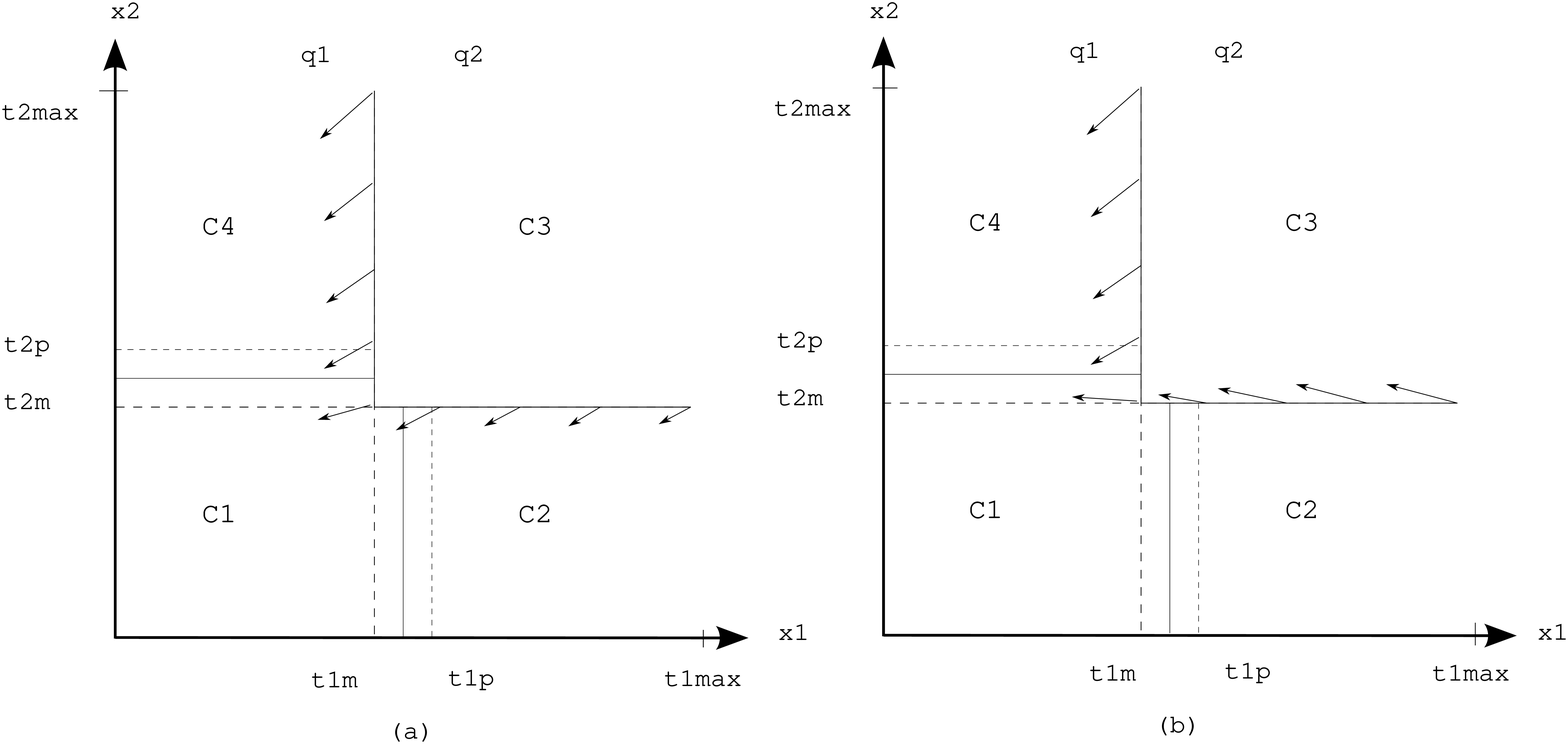}
\caption{\label{fig:V3}\emph{The vector field at the boundaries of $C_{3}$, when $\theta_1+h_1<\frac{k_1}{\gamma_1}<\theta_1^{max}$,$\frac{k_2}{\gamma_2}<\theta_2+h_2.$ (a) Case $\frac{k_2}{\gamma_2}<\theta_2-h_2$, (b) Case $\theta_2-h_2<\frac{k_2}{\gamma_2}<\theta_2+h_2$  }}
\end{figure}

\begin{itemize}
\item
For initial points $z(0, 0)\in C_1$ (see Figure \ref{fig:V14}(a)), when $x\in\{x: 0\leq x_1<\theta_1+h_1, x_2=\theta_2+h_2\}$, $f_1(x)=\left[\begin{array}{cc}k_1-\gamma_1x_1\\-\gamma_2(\theta_2+h_2)\end{array}\right]$. Since $\theta_1+h_1<\frac{k_1}{\gamma_1}<\theta_1^{max}$, $0<\frac{k_2}{\gamma_1}<\theta_2+h_2$ we have that $f_1(x)$ points inside $C_1$.
When $x\in\{x: x_1=\theta_1+h_1, 0\leq x_2\leq\theta_2+h_2\}$, $f_1(x)=\left[\begin{array}{cc}k_1-\gamma_1(\theta_1+h_1)\\-\gamma_2x_2\end{array}\right]$. Since $\frac{k_1}{\gamma_1}>\theta_1+h_1$, we have that $f_1(x)$ points outside $C_1$.
Thus, for every point $z\in C_1$, $x_1$ reaches $\theta_1+h_1$ since there is no equilibrium point in $C_1$ for this range of parameters. Then, a jump occurs. After the jump, the solution belongs to $C_2$.
\item
For each initial point $z(0, 0)\in C_4$ (see Figure \ref{fig:V14}(b)), 
when $x\in\{x: 0\leq x_1<\theta_1+h_1, x_2=\theta_2-h_2\}$, $f_4(x)=\left[\begin{array}{cc}-\gamma_1x_1\\-\gamma_2(\theta_2-h_2)\end{array}\right]$. Then, we have that $f_4(x)$ points outside $C_4$ and every solution leaves $C_4$ by jumping into $C_1$ when $x_1$ reaches $\theta_1-h_1$, from where it will enter $C_2$ in finite time. When $x\in\{x: x_1=\theta_1+h_1, x_2>\theta_2-h_2\}$, $f_4(x)=\left[\begin{array}{cc}-\gamma_1(\theta_1+h_1)\\-\gamma_2x_2\end{array}\right]$. Then, we have that $f_4(x)$ points inside $C_4$. Then, for every initial condition $z(0, 0)\in C_4$, solutions will reach $C_2$ in finite time.
\item
For every initial point $z(0, 0)\in C_3$ (see Figure \ref{fig:V3}(a)),
if $0<\frac{k_2}{\gamma_2}<\theta_2-h_2,$ when $x\in\{x: x_1\geq\theta_1-h_1, x_2=\theta_2-h_2\},$  
$f_3(x)=\left[\begin{array}{cc}-\gamma_1x_1\\k_2-\gamma_2(\theta_2-h_2)\end{array}\right]$. We have that $f_3(x)$ points outside of $C_3$. When $x\in\{x_1=\theta_1-h_1, x_2\geq\theta_2-h_2\},$ we have $f_3(x)=\left[\begin{array}{cc}-\gamma_1(\theta_1-h_1)\\k_2-\gamma_2x_2\end{array}\right],$ which points outside $C_3.$ If $\theta_2-h_2<\frac{k_2}{\gamma_2}<\theta_2+h_2$ (see Figure \ref{fig:V3}(b)), when $x\in\{x: \theta_1-h_1<x_1<\theta_1^{max}, x_2=\theta_2-h_2\},$ $f_3(x)=\left[\begin{array}{cc}-\gamma_1x_1\\k_2-\gamma_2(\theta_2-h_2)\end{array}\right]$ points inside $C_3$. When $x\in\{x: x_1=\theta_1-h_1, x_2\geq\theta_2-h_2\},$ $f_3(x)=\left[\begin{array}{cc}-\gamma_1(\theta_1-h_1)\\k_2-\gamma_2x_2\end{array}\right]$ points outside $C_3$. Then, from $z(0, 0)\in C_3,$ solutions will leave $C_3$ and jump into $C_4$or $C_2$. Using the arguments above, solutions will enter inside of $C_2$ in finite time.
\end{itemize}
From the arguments above, when $\theta_1+h_1<\frac{k_1}{\gamma_1}<\theta_1^{max}$, $0< \frac{k_2}{\gamma_2}<\theta_2+h_2$, the equilibrium point $z^*_1$ is globally asymptotically stable.

\begin{figure}[h]
\centering
\psfrag{C1}[][][0.9]{$C_1$}
\psfrag{C2}[][][0.9]{$C_2$}
\psfrag{C3}[][][0.9]{$C_3$}
\psfrag{C4}[][][0.9]{$C_4$}
\psfrag{q1}[][][0.9]{$q_1=1$}
\psfrag{q2}[][][0.9]{$q_2=1$}
\psfrag{(a)}[][][0.9]{$(a)$}
\psfrag{t1m}[][][0.9]{$\theta_1\!\!-\!h_1$}
\psfrag{t1p}[][][0.9]{$\theta_1\!\!+\!h_1$}
\psfrag{t1max}[][][0.9]{$\theta_1^{\max}$}
\psfrag{x1}[][][0.9]{$x_1$}
\psfrag{x2}[][][0.9]{$x_2$}
\psfrag{(b)}[][][0.9]{$(b)$}
\psfrag{t2m}[][][0.9]{$\theta_2-h_2$}
\psfrag{t2p}[][][0.9]{$\theta_2+h_2$}
\psfrag{t2max}[][][0.9]{$\theta_2^{\max}$}
\includegraphics[width=0.85\columnwidth]{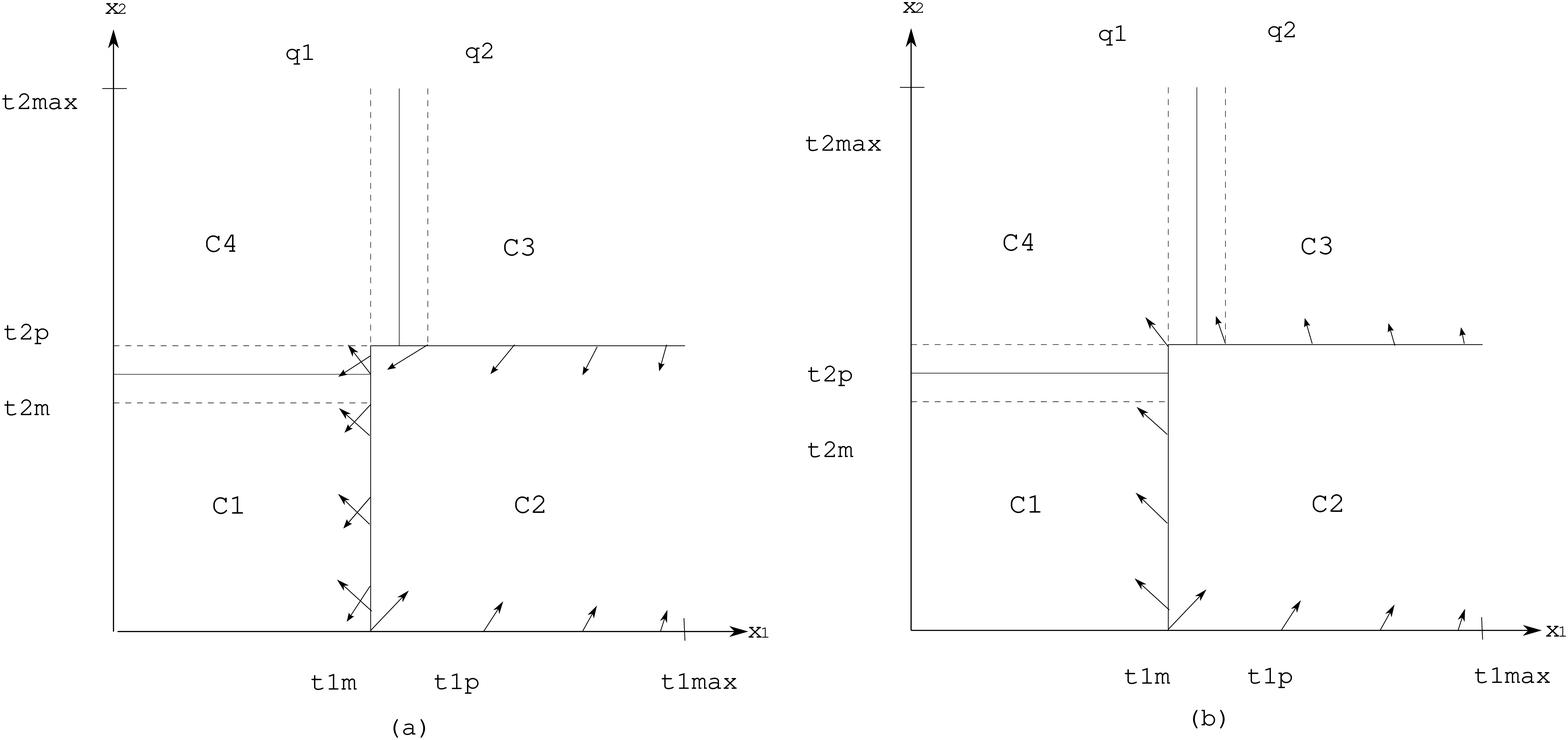}
\caption{\label{fig:C2Z2}\emph{The vector field at the boundaries of $C_{2}$, when $0<\frac{k_1}{\gamma_1}<\theta_1+h_1$. (a) Case $\frac{k_2}{\gamma_2}<\theta_2+h_2$, (b) Case $\frac{k_2}{\gamma_2}>\theta_2+h_2$. } }
\end{figure}

For case 2 in Table \ref{tab:EqPoints}, it can be proven that $z^*_2$ is globally asymptotically stable. 

Note that when $q_1=0$, $q_2=0$, $0<\frac{k_1}{\gamma_1}<\theta_1-h_1$, 
\\
$$\begin{array}{lll}
0=\dot{x}_1 =k_1-\gamma_1x_1\quad\Rightarrow\quad k_1-\gamma_1x_1=0\quad \Rightarrow \quad x_1^*=\frac{k_1}{\gamma_1}\\
0=\dot{x}_2 =\gamma_2x_2\quad\Rightarrow\quad -\gamma_2x_2=0\quad \Rightarrow \quad x_2^*=0,
\end{array}$$
from where we have $z^*_2 = [{x^*}^\top\ 0\ 0]^\top\in C_1.$ Now, change to $e$ coordinates given by\\
$$e_1 = x_1-x_1^*, \quad e_2 =x_2-x_2^*.$$
We have that
\begin{eqnarray*}
\dot{e}_1 & = & \dot{x}_1-\dot{x}_1^*= k_1-\gamma_1x_1-0=k_1-\gamma_1(e_1+x_1^*)\\
& = & k_1-\gamma_1e_1-k_1=-\gamma_1e_1,\\
\dot{e}_2 & = & \dot{x}_2=k_2-\gamma_2x_2=k_2-\gamma_2(e_2+x_2^*)\\
& = & k_2-\gamma_2e_2-k_2=-\gamma_2e_2.
\end{eqnarray*}
Then, the $x$ component of $\dot{z}=F(z)$ becomes 
$$
\dot{e}=\left[
\begin{array}{cccc}
-\gamma_1 &0\\
0& -\gamma_2.
\end{array}\right]e.
$$
Then, since $\gamma_1$ and $\gamma_2$ are positive, we have that $z^*_2$ is stable (this property can be certified with the Lyapunov function $V(e)=e^\top e$).

Now we check the vector fields on the boundaries of $C_1$ (see Figure \ref{fig:c1forwardinvariant}).

\begin{figure}[h]
\centering
\psfrag{C1}[][][0.9]{$C_1$}
\psfrag{C2}[][][0.9]{$C_2$}
\psfrag{C3}[][][0.9]{$C_3$}
\psfrag{C4}[][][0.9]{$C_4$}
\psfrag{q1}[][][0.9]{$q_1=1$}
\psfrag{q2}[][][0.9]{$q_2=1$}
\psfrag{(a)}[][][0.9]{$(a)$}
\psfrag{t1m}[][][0.9]{$\theta_1\!\!-\!h_1$}
\psfrag{t1p}[][][0.9]{$\theta_1\!\!+\!h_1$}
\psfrag{t1max}[][][0.9]{$\theta_1^{\max}$}
\psfrag{x1}[][][0.9]{$x_1$}
\psfrag{x2}[][][0.9]{$x_2$}
\psfrag{(b)}[][][0.9]{$(b)$}
\psfrag{t2m}[][][0.9]{$\theta_2-h_2$}
\psfrag{t2p}[][][0.9]{$\theta_2+h_2$}
\psfrag{t2max}[][][0.9]{$\theta_2^{\max}$}
\includegraphics[width=0.55\columnwidth]{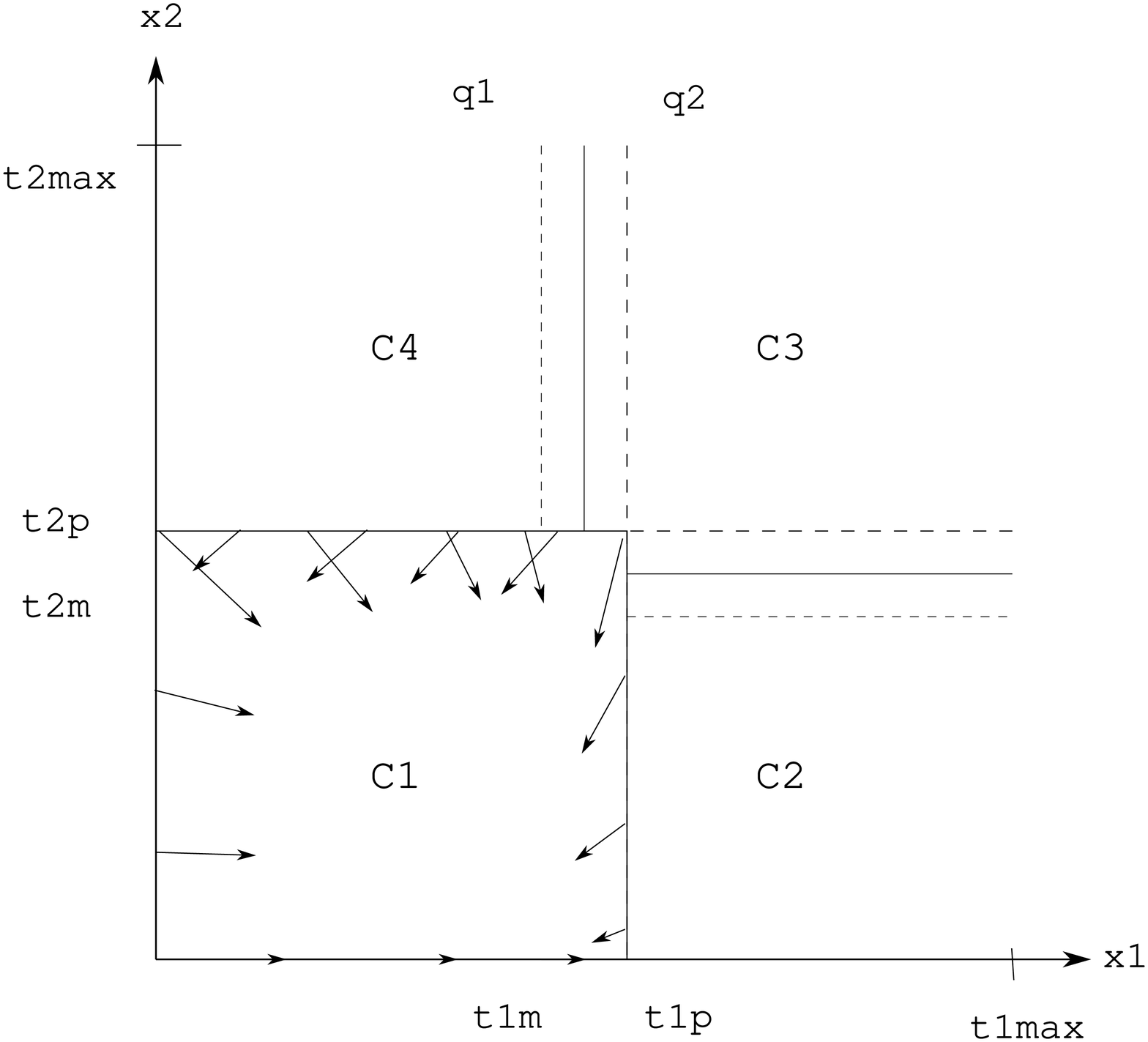}
\caption{\label{fig:c1forwardinvariant}\emph{The vector field at the boundaries of $C_{1}$, when $0<\frac{k_1}{\gamma_1}<\theta_1-h_1$.}}
\end{figure}

\begin{itemize}
\item
When $x\in\{x: x_1=\theta_1+h_1, 0\leq x_2\leq\theta_2+h_2\}$, $f_1(x)=\left[\begin{array}{cc}k_1-\gamma_1(\theta_1+h_1)\\-\gamma_2x_2\end{array}\right]$. Since $0<\frac{k_1}{\gamma_1}<\theta_1-h_1$, we have that $f_1(x)$ points inside $C_1$.
\item
When $x\in\{x: 0\leq x_1\leq\theta_1+h_1, x_2=\theta_2+h_2\}$, $f_1(x)=\left[\begin{array}{cc}k_1-\gamma_1x_1\\-\gamma_2(\theta_2+h_2)\end{array}\right]$. Since $0<\frac{k_1}{\gamma_1}<\theta_1-h_1$, we have that $f_1(x)$ points inside $C_1$.
\item
When $x\in\{x: 0\leq x_1\leq\theta_1+h_1, x_2=0\}$, $f_1(x)=\left[\begin{array}{cc} k_1-\gamma_1x_1\\0\end{array}\right]$. $f_1(x)$ is tangent to the boundary of $C_1$.
\item
When $x\in\{x: x_1=0, 0\leq x_2\leq\theta_2+h_2\}$, $f_1(x)=\left[\begin{array}{cc}k_1\\-\gamma_2x_2\end{array}\right]$. Since $k_1>0$, we have that $f_1(x)$ points inside $C_1$.
\end{itemize}
Then, every trajectory that enters or starts from $C_1$  will stay or never leave $C_1$.
Then, the set $C_1$ is forward invariant. 

Since the equilibrium point $z_2^*$ belongs to $C_1$, 
every trajectory that reaches or starts from $C_1$ converges to $z_2^*$.

Global asymptotic stability follows since for every initial condition $z(0,0)\in(C\cup D)\backslash C_1$, solutions reach $C_1$ in finite time.  
To establish this property, we check the vector field of the system on the boundary of each set $C_i$, for each $i\in\{2, 3, 4\}.$
\begin{itemize}
\item
For initial points $z(0, 0)\in C_2$:
\begin{itemize}
\item if $\frac{k_2}{\gamma_2}<\theta_2+h_2$ (see Figure \ref{fig:C2Z2}(a)), when $x\in\{x: x_1=\theta_1-h_1, 0\leq x_2\leq\theta_2+h_2\}$, $f_ 2(x)=\left[\begin{array}{cc}k_1-\gamma_1(\theta_1-h_1)\\k_2-\gamma_2x_2\end{array}\right]$. Since $0<\frac{k_1}{\gamma_1}<\theta_1-h_1$, we have that $f_2(x)$ points outside of $C_2$. When $x\in\{x: x_1>\theta_1-h_1, x_2=\theta_2+h_2\}$, $f_2(x)=\left[\begin{array}{cc}k_1-\gamma_1x_1\\k_2-\gamma_2(\theta_2+h_2)\end{array}\right]$. 
Since $0<\frac{k_1}{\gamma_1}<\theta_1-h_1$, we have that $f_2(x)$ points inside of $C_2$.

Since there is no equilibrium point in $C_2$ for this range of parameters
and the dynamics of $x$ are linear,
$x_1$ reaches $\theta_1-h_1$ for every point $z\in C_2$. Then, a jump occurs. After the jump, the solution belongs to $C_1$. 
\item If $\frac{k_2}{\gamma_2}>\theta_2+h_2$ (see Figure \ref{fig:C2Z2}(b)), when $x\in\{x: x_1=\theta_1-h_1, 0\leq x_2\leq\theta_2+h_2\}$, $f_{2}(x)=\left[\begin{array}{cc}k_1-\gamma_1(\theta_1-h_1)\\k_2-\gamma_2x_2\end{array}\right]$. Since $0<\frac{k_1}{\gamma_1}<\theta_1-h_1$, we have that $f_{2}(x)$ points outside of $C_2$. When $x\in\{x: x_1\geq\theta_1-h_1, x_2=\theta_2+h_2\}$, $f_1(x)=\left[\begin{array}{cc}k_1-\gamma_1x_1\\k_2-\gamma_2(\theta_2+h_2)\end{array}\right]$. Since $0<\frac{k_1}{\gamma_1}<\theta_1-h_1$, we have that $f_2(x)$ points outside $C_2$. 
\end{itemize}
Then, from $z(0, 0)\in C_2,$ solutions will leave $C_2$ and jump into $C_1$ or $C_3$.
\item
For every initial point $z(0, 0)\in C_3$:
\begin{itemize}
\item
if $\frac{k_2}{\gamma_2}<\theta_2-h_2,$ when $x\in\{x: x_1\geq\theta_1-h_1, x_2=\theta_2-h_2\}$  (see similar case shown in Figure \ref{fig:V3}(a)), 
$f_3(x)=\left[\begin{array}{cc}-\gamma_1x_1\\k_2-\gamma_2(\theta_2-h_2)\end{array}\right]$. We have that $f_3(x)$ points outside of $C_3$. When $x\in\{x: x_1=\theta_1-h_1, x_2\geq\theta_2-h_2\},$ we have $f_3(x)=\left[\begin{array}{cc}-\gamma_1(\theta_1-h_1)\\k_2-\gamma_2x_2\end{array}\right],$ which points outside of $C_3.$ 
\item
If $\theta_2-h_2<\frac{k_2}{\gamma_2}<\theta_2+h_2$ (see similar case shown in  Figure \ref{fig:V3}(b)), when $x\in\{x: x_1=\theta_1-h_1, x_2\geq\theta_2-h_2\},$ $f_3(x)=\left[\begin{array}{cc}-\gamma_1(\theta_1-h_1)\\k_2-\gamma_2x_2\end{array}\right]$ points outside of $C_3$.
When $x\in\{x: x_1>\theta_1-h_1, x_2=\theta_2-h_2\}$, 
$f_3(x)=\left[\begin{array}{cc}-\gamma_1x_1\\k_2-\gamma_2(\theta_2-h_2)\end{array}\right]$. We have that $f_3(x)$ points inside of $C_3$. 
\end{itemize}
Then, from $z(0, 0)\in C_3,$ solutions will leave $C_3$ and jump into $C_4$ or $C_2$. 
\item
For each initial point $z(0, 0)\in C_4$ (see similar case shown in Figure \ref{fig:V14}(b)), 
when $x\in\{x: 0\leq x_1<\theta_1+h_1, x_2=\theta_2-h_2\}$, $f_4(x)=\left[\begin{array}{cc}-\gamma_1x_1\\-\gamma_2(\theta_2-h_2)\end{array}\right]$. Then, we have that $f_4(x)$ points outside $C_4$ and every solution leaves $C_4$ by jumping into $C_1$. When $x\in\{x: x_1=\theta_1+h_1, x_2>\theta_2-h_2\}$, $f_4(x)=\left[\begin{array}{cc}-\gamma_1(\theta_1+h_1)\\-\gamma_2x_2\end{array}\right]$. Then, we have that $f_4(x)$ points inside $C_4$.
Then, for every initial condition $z(0, 0)\in C_4$, solutions will reach $C_1$ in finite time.
\end{itemize}

From the above analysis, we have that: 
1) from $C_2$ trajectories go to either $C_1$ or $C_3$;
2) from $C_3$ trajectories go to either $C_1$ or $C_4$;
3) from $C_4$ trajectories go to $C_1$.

Then, trajectories eventually enter $C_1$, which, using the arguments above, implies that the equilibrium point $z^*_2$ is globally asymptotically stable.

Similarly, for case 4 in Table \ref{tab:EqPoints}, it can be proven that $z^*_2$ is globally asymptotically stable. Since stability of $z^*_2$ was proven in the proof of case 2 in Table \ref{tab:EqPoints}, now we check the vector fields on the boundaries of $C_1$  when $\theta_1-h_1<\frac{k_1}{\gamma_1}<\theta_1+h_1$, $\theta_2+h_2<\frac{k_2}{\gamma_2}<\theta_2^{max}$ (see Figure \ref{fig:c1forwardinvariant1}).

\begin{figure}[h]
\centering
\psfrag{C1}[][][0.9]{$C_1$}
\psfrag{C2}[][][0.9]{$C_2$}
\psfrag{C3}[][][0.9]{$C_3$}
\psfrag{C4}[][][0.9]{$C_4$}
\psfrag{q10}[][][0.9]{$q_1=0$}
\psfrag{q20}[][][0.9]{$q_2=0$}
\psfrag{t1-h1}[][][0.9]{$\theta_1\!\!-\!h_1$}
\psfrag{t1h1}[][][0.9]{$\theta_1\!\!+\!h_1$}
\psfrag{t1max}[][][0.9]{$\theta_1^{\max}$}
\psfrag{x1}[][][0.9]{$x_1$}
\psfrag{x2}[][][0.9]{$x_2$}
\psfrag{t2-h2}[][][0.9]{$\!\!\!\!\theta_2-h_2$}
\psfrag{t2h2}[][][0.9]{$\!\!\theta_2+h_2$}
\psfrag{t2max}[][][0.9]{$\theta_2^{\max}$}
\includegraphics[width=0.55\columnwidth]{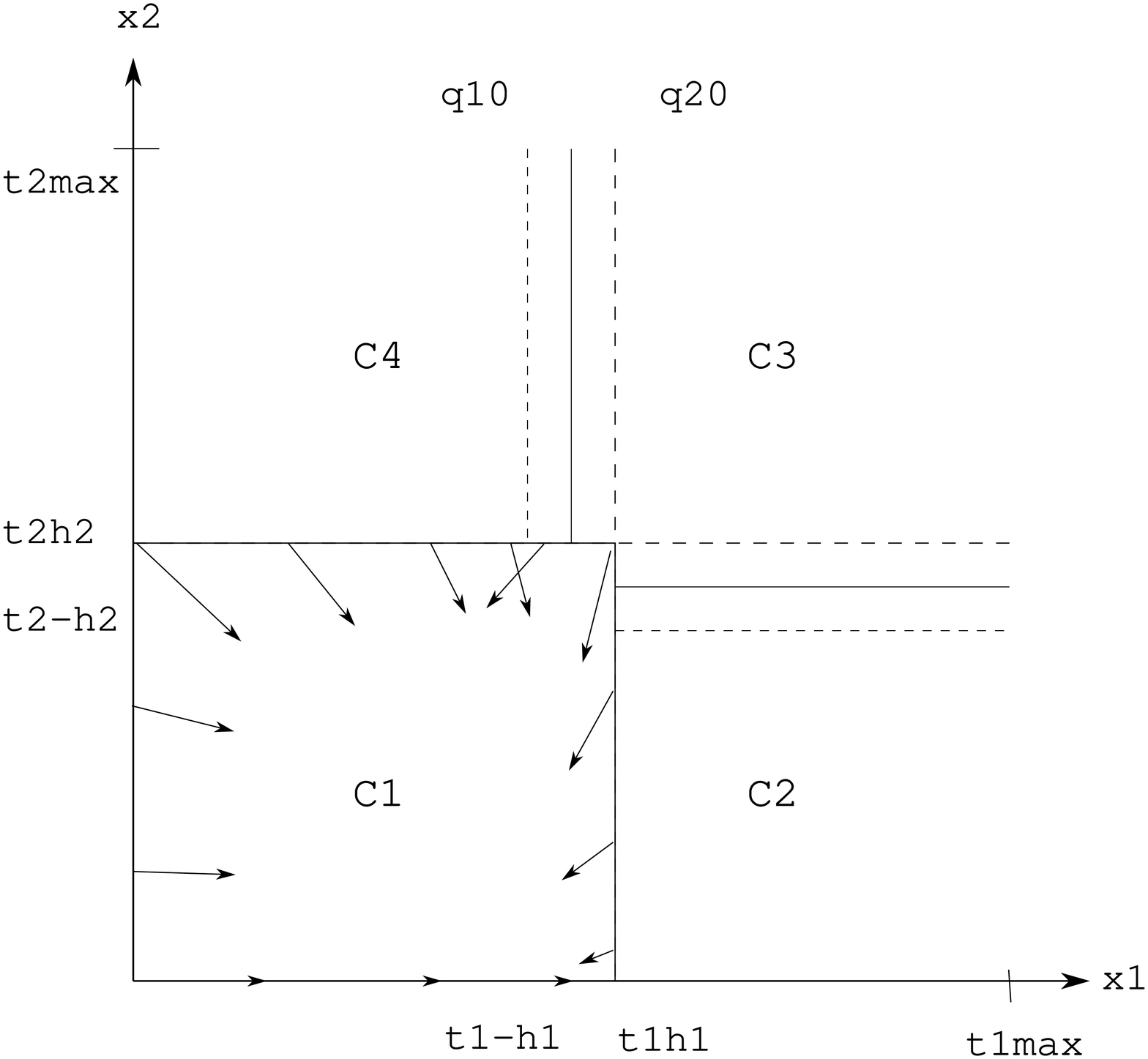}
\caption{\label{fig:c1forwardinvariant1}\emph{The vector field at the boundaries of $C_{1}$, when $\theta_1-h_1<\frac{k_1}{\gamma_1}<\theta_1+h_1$, $\theta_2+h_2<\frac{k_2}{\gamma_2}<\theta_2^{max}$.}}
\end{figure}

\begin{itemize}
\item
When $x\in\{x: x_1=\theta_1+h_1, 0\leq x_2\leq\theta_2+h_2\}$, $f_1(x)=\left[\begin{array}{cc}k_1-\gamma_1(\theta_1+h_1)\\-\gamma_2x_2\end{array}\right]$. Since $\theta_1-h_1<\frac{k_1}{\gamma_1}<\theta_1+h_1$, we have that $f_1(x)$ points inside $C_1$.
\item
When $x\in\{x: 0\leq x_1\leq\theta_1+h_1, x_2=\theta_2+h_2\}$, $f_1(x)=\left[\begin{array}{cc}k_1-\gamma_1x_1\\-\gamma_2(\theta_2+h_2)\end{array}\right]$. Since $\theta_1-h_1<\frac{k_1}{\gamma_1}<\theta_1+h_1$, $\theta_2+h_2<\frac{k_2}{\gamma_2}<\theta_2^{max}$, we have that $f_1(x)$ points inside $C_1$.
\item
When $x\in\{x: 0\leq x_1\leq\theta_1+h_1, x_2=0\}$, $f_1(x)=\left[\begin{array}{cc} k_1-\gamma_1x_1\\0\end{array}\right]$. $f_1(x)$ is tangent to the boundary of $C_1$.
\item
When $x\in\{x: x_1=0, 0\leq x_2\leq\theta_2+h_2\}$, $f_1(x)=\left[\begin{array}{cc}k_1\\-\gamma_2x_2\end{array}\right]$. Since $k_1>0$, we have that $f_1(x)$ points inside $C_1$.
\end{itemize}

Then, every trajectory that enters or starts from $C_1$  will stay or never leave $C_1$.
Then, the set $C_1$ is forward invariant. 

Since the equilibrium point $z_2^*$ belongs to $C_1$, 
every trajectory that reaches or starts from $C_1$ converges to $z_2^*$.

\begin{figure}[h]
\centering
\psfrag{C1}[][][0.9]{$C_1$}
\psfrag{C2}[][][0.9]{$C_2$}
\psfrag{C3}[][][0.9]{$C_3$}
\psfrag{C4}[][][0.9]{$C_4$}
\psfrag{q11}[][][0.9]{$q_1=1$}
\psfrag{q20}[][][0.9]{$q_2=0$}
\psfrag{t1-h1}[][][0.9]{$\theta_1\!\!-\!h_1$}
\psfrag{t1h1}[][][0.9]{$\theta_1\!\!+\!h_1$}
\psfrag{t1m}[][][0.9]{$\theta_1^{\max}$}
\psfrag{x1}[][][0.9]{$x_1$}
\psfrag{x2}[][][0.9]{$x_2$}
\psfrag{t2-h2}[][][0.9]{$\!\!\!\!\!\theta_2-h_2$}
\psfrag{t2h2}[][][0.9]{$\!\!\!\!\theta_2+h_2$}
\psfrag{t2max}[][][0.9]{$\theta_2^{\max}$}
\includegraphics[width=0.55\columnwidth]{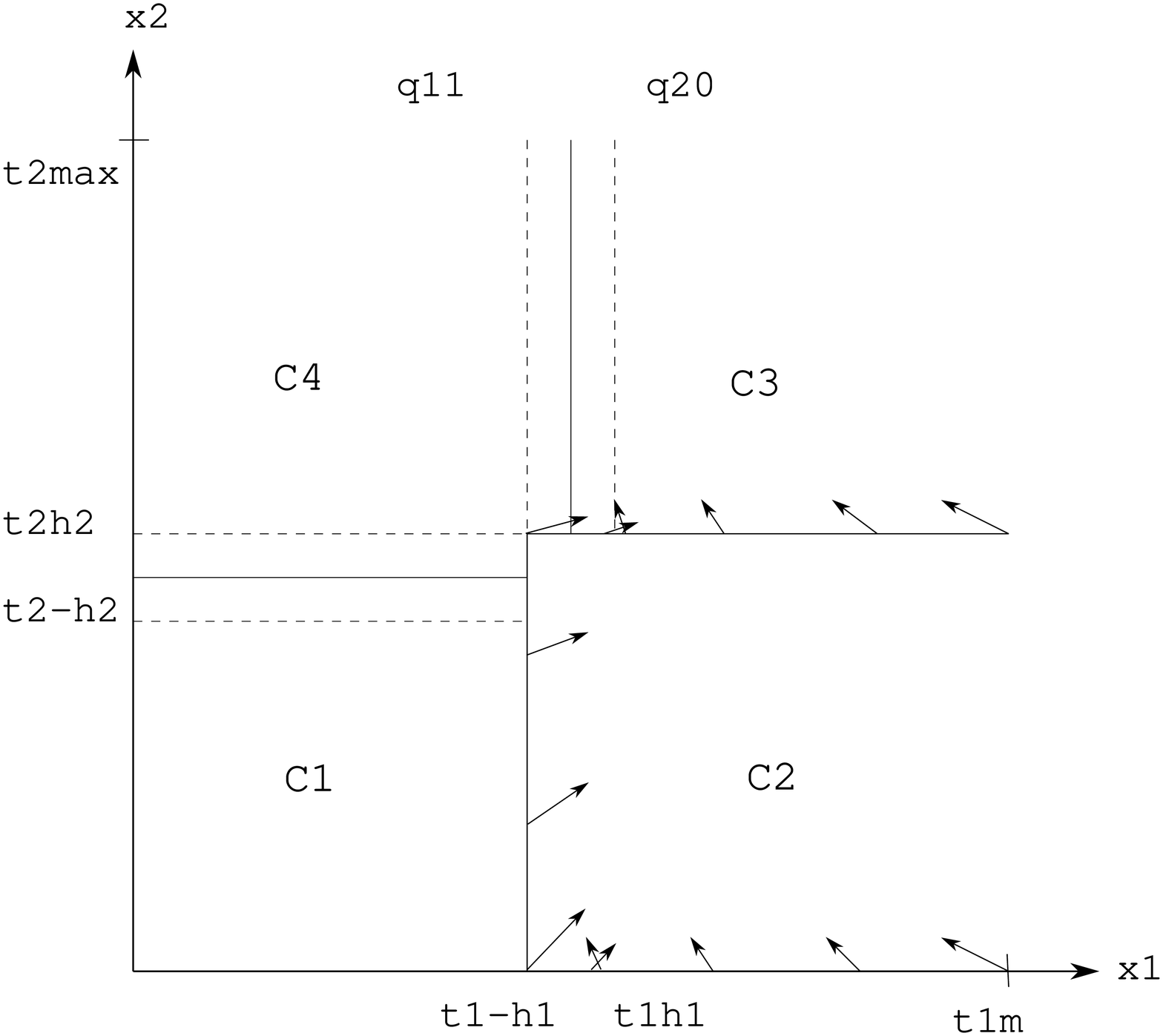}
\caption{\label{fig:c2case4}\emph{The vector field at the boundaries of $C_{2}$, when $\theta_1-h_1<\frac{k_1}{\gamma_1}<\theta_1+h_1$, $\theta_2+h_2<\frac{k_2}{\gamma_2}<\theta_2^{max}$.}}
\end{figure}

\begin{figure}[h]
\centering
\psfrag{C1}[][][0.9]{$C_1$}
\psfrag{C2}[][][0.9]{$C_2$}
\psfrag{C3}[][][0.9]{$C_3$}
\psfrag{C4}[][][0.9]{$C_4$}
\psfrag{q11}[][][0.9]{$q_1=1$}
\psfrag{q21}[][][0.9]{$q_2=1$}
\psfrag{t1-h1}[][][0.9]{$\theta_1\!\!-\!h_1$}
\psfrag{t1h1}[][][0.9]{$\theta_1\!\!+\!h_1$}
\psfrag{t1max}[][][0.9]{$\theta_1^{\max}$}
\psfrag{x1}[][][0.9]{$x_1$}
\psfrag{x2}[][][0.9]{$x_2$}
\psfrag{t2-h2}[][][0.9]{$\!\!\!\!\!\theta_2-h_2$}
\psfrag{t2h2}[][][0.9]{$\!\!\!\theta_2+h_2$}
\psfrag{t2max}[][][0.9]{$\theta_2^{\max}$}
\includegraphics[width=0.55\columnwidth]{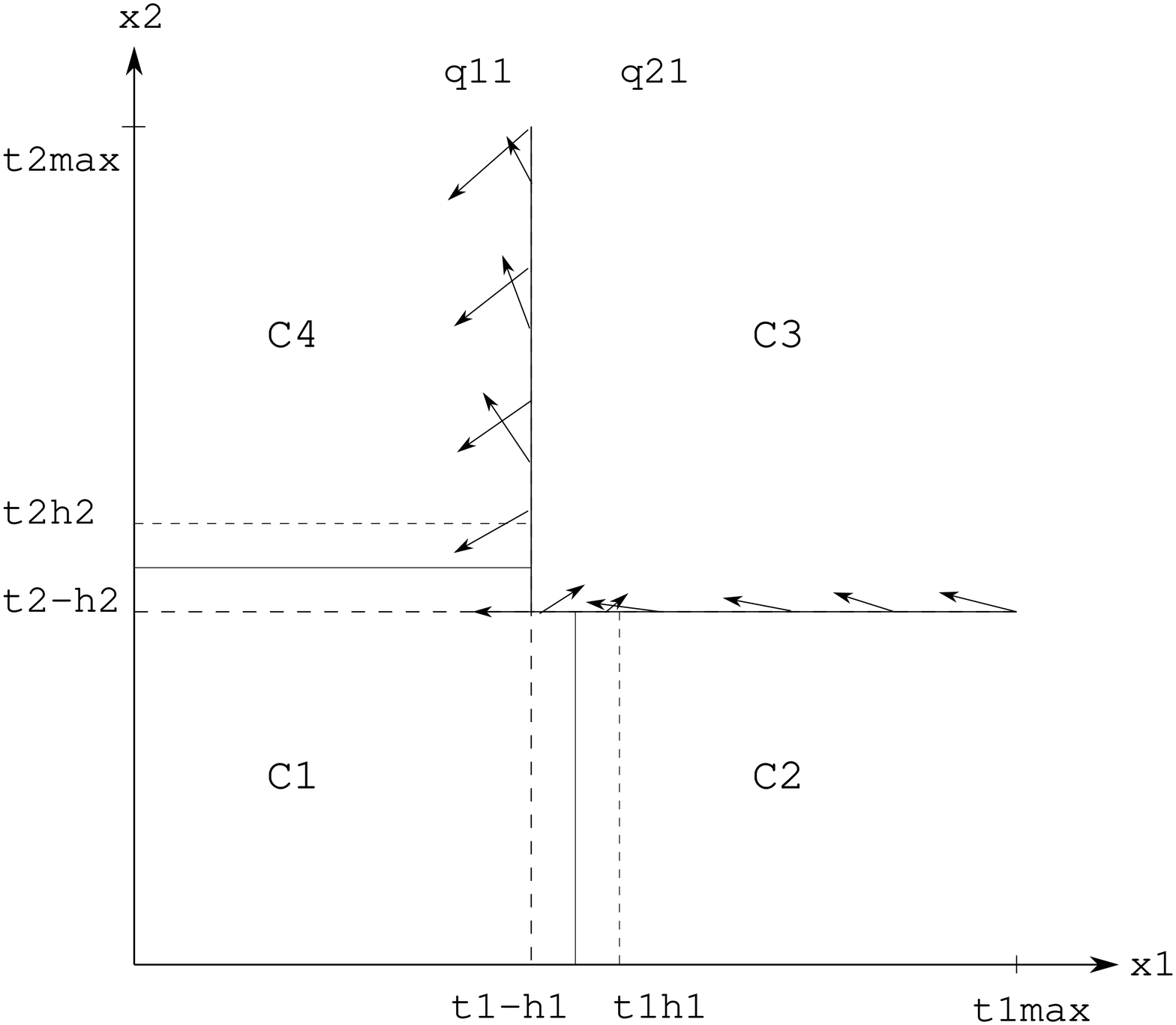}
\caption{\label{fig:c3case4}\emph{The vector field at the boundaries of $C_{3}$, when $\theta_1-h_1<\frac{k_1}{\gamma_1}<\theta_1+h_1$, $\theta_2+h_2<\frac{k_2}{\gamma_2}<\theta_2^{max}$.}}
\end{figure}

Global asymptotic stability follows since for every initial condition $z(0,0)\in(C\cup D)\backslash C_1$, solutions reach $C_1$ in finite time.  
To establish this property, we check the vector field of the system on the boundary of each set $C_i$, for each $i\in\{2, 3, 4\}.$
\begin{itemize}
\item
For initial points $z(0, 0)\in C_2$:
When $x\in\{x: x_1=\theta_1-h_1, 0\leq x_2<\theta_2+h_2\}$(see Figure \ref{fig:c2case4}), $f_ 2(x)=\left[\begin{array}{cc}k_1-\gamma_1(\theta_1-h_1)\\k_2-\gamma_2x_2\end{array}\right]$. Since $\theta_1-h_1<\frac{k_1}{\gamma_1}<\theta_1+h_1$, we have that $f_2(x)$ points inside of $C_2$. When $x\in\{x: x_1\geq\theta_1-h_1, x_2=\theta_2+h_2\}$, $f_2(x)=\left[\begin{array}{cc}k_1-\gamma_1x_1\\k_2-\gamma_2(\theta_2+h_2)\end{array}\right]$. 
Since $\theta_1-h_1<\frac{k_1}{\gamma_1}<\theta_1+h_1$, $\theta_2+h_2<\frac{k_2}{\gamma_2}<\theta_2^{max}$, we have that $f_2(x)$ points outside of $C_2$.

Since there is no equilibrium point in $C_2$ for this range of parameters
and the dynamics of $x$ are linear,
$x_2$ reaches $\theta_2+h_2$ for every point $z\in C_2$.

Then, from $z(0, 0)\in C_2,$ solutions will leave $C_2$ and jump into  $C_3$.
\item
For every initial point $z(0, 0)\in C_3:$ (see Figure \ref{fig:c3case4}) when $x\in\{x: x_1=\theta_1-h_1, x_2\geq\theta_2-h_2\},$ since $\theta_1-h_1<\frac{k_1}{\gamma_1}<\theta_1+h_1$, $f_3(x)=\left[\begin{array}{cc}-\gamma_1(\theta_1-h_1)\\k_2-\gamma_2x_2\end{array}\right]$ points outside of $C_3$. 
When $x\in\{x: x_1>\theta_1-h_1, x_2=\theta_2-h_2\}$, 
$f_3(x)=\left[\begin{array}{cc}-\gamma_1x_1\\k_2-\gamma_2(\theta_2-h_2)\end{array}\right]$. Since $\theta_1-h_1<\frac{k_1}{\gamma_1}<\theta_1+h_1$, we have that $f_3(x)$ points inside of $C_3$.
Then, from $z(0, 0)\in C_3,$ solutions will leave $C_3$ and jump into $C_4$. 
\item
For each initial point $z(0, 0)\in C_4:$ (see similar case shown in Figure \ref{fig:V14}(b))
when $x\in\{x: 0\leq x_1<\theta_1+h_1, x_2=\theta_2-h_2\}$, since $\theta_1-h_1<\frac{k_1}{\gamma_1}<\theta_1+h_1$, $f_4(x)=\left[\begin{array}{cc}-\gamma_1x_1\\-\gamma_2(\theta_2-h_2)\end{array}\right]$. Then, we have that $f_4(x)$ points outside $C_4$ and every solution leaves $C_4$ by jumping into $C_1$. When $x\in\{x: x_1=\theta_1+h_1, x_2>\theta_2-h_2\}$, $f_4(x)=\left[\begin{array}{cc}-\gamma_1(\theta_1+h_1)\\-\gamma_2x_2\end{array}\right]$. Then, we have that $f_4(x)$ points inside $C_4$.
Then, for every initial condition $z(0, 0)\in C_4$, solutions will reach $C_1$ in finite time.
\end{itemize}

From the above analysis, we have that: 
1) from $C_2$ trajectories go to $C_3$;
2) from $C_3$ trajectories go to $C_4$;
3) from $C_4$ trajectories go to $C_1$.

Then, trajectories eventually enter $C_1$, which, using the arguments above, implies that the equilibrium point $z^*_2$ is globally asymptotically stable.

\subsection{Proof of Proposition~\ref{eqn:Stable2}}
\label{app:Stable2Proof}

For case 3, when $\theta_1-h_1<\frac{k_1}{\gamma_1}<\theta_1+h_1$, $\frac{k_2}{\gamma_2}<\theta_2+h_2$, $z^*_1$ and $z^*_2$ are located in the region noted as $C'$ in Figure \ref{fig:C'}.

\begin{figure}[h!]
\begin{center}
\psfrag{C1}[][][0.9]{$C_1$}
\psfrag{C2}[][][0.9]{$C_2$}
\psfrag{C3}[][][0.9]{$C_3$}
\psfrag{C4}[][][0.9]{$C_4$}
\psfrag{1C}[][][0.9]{$C'$}
\psfrag{t1m}[][][0.9]{$\theta_1\!\!-\!h_1$}
\psfrag{t1p}[][][0.9]{$\theta_1\!\!+\!h_1$}
\psfrag{t1max}[][][0.9]{$\theta_1^{\max}$}
\psfrag{t2m}[][][0.9]{$\theta_2-h_2$}
\psfrag{t2p}[][][0.9]{$\theta_2+h_2$}
\psfrag{t2max}[][][0.9]{$\theta_2^{\max}$}
\psfrag{x1}[][][0.9]{$x_1$}
\psfrag{x2}[][][0.9]{$x_2$}
\includegraphics[width=0.5\columnwidth]{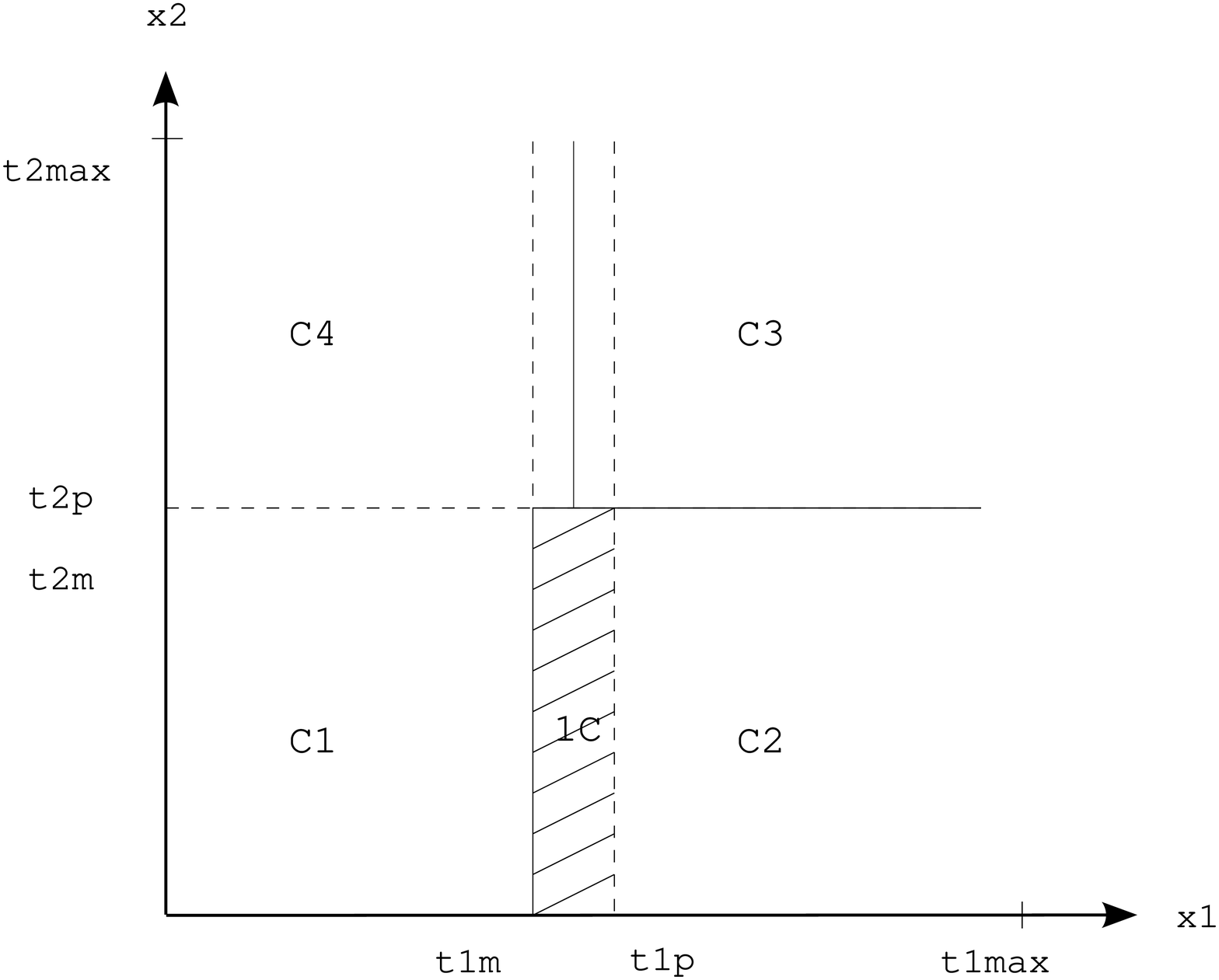}
\end{center}
\caption{\label{fig:C'}\emph{Region $C'$ given by the overlap between $C_1$ and $C_2$ when projected onto the $x$ component.}}
\end{figure}

\begin{figure}[h]
\centering
\psfrag{C1}[][][0.9]{$C_1$}
\psfrag{C2}[][][0.9]{$C_2$}
\psfrag{C3}[][][0.9]{$C_3$}
\psfrag{C4}[][][0.9]{$C_4$}
\psfrag{q1}[][][0.9]{$q_1=1$}
\psfrag{q2}[][][0.9]{$q_2=0$}
\psfrag{q3}[][][0.9]{$q_1=0$}
\psfrag{(a)}[][][0.9]{$(a)$}
\psfrag{t1m}[][][0.9]{$\theta_1\!\!-\!h_1$}
\psfrag{t1p}[][][0.9]{$\theta_1\!\!+\!h_1$}
\psfrag{t1max}[][][0.9]{$\theta_1^{\max}$}
\psfrag{x1}[][][0.9]{$x_1$}
\psfrag{x2}[][][0.9]{$x_2$}
\psfrag{(b)}[][][0.9]{$(b)$}
\psfrag{t2m}[][][0.9]{$\theta_2-h_2$}
\psfrag{t2p}[][][0.9]{$\theta_2+h_2$}
\psfrag{t2max}[][][0.9]{$\theta_2^{\max}$}
\includegraphics[width=0.95\columnwidth]{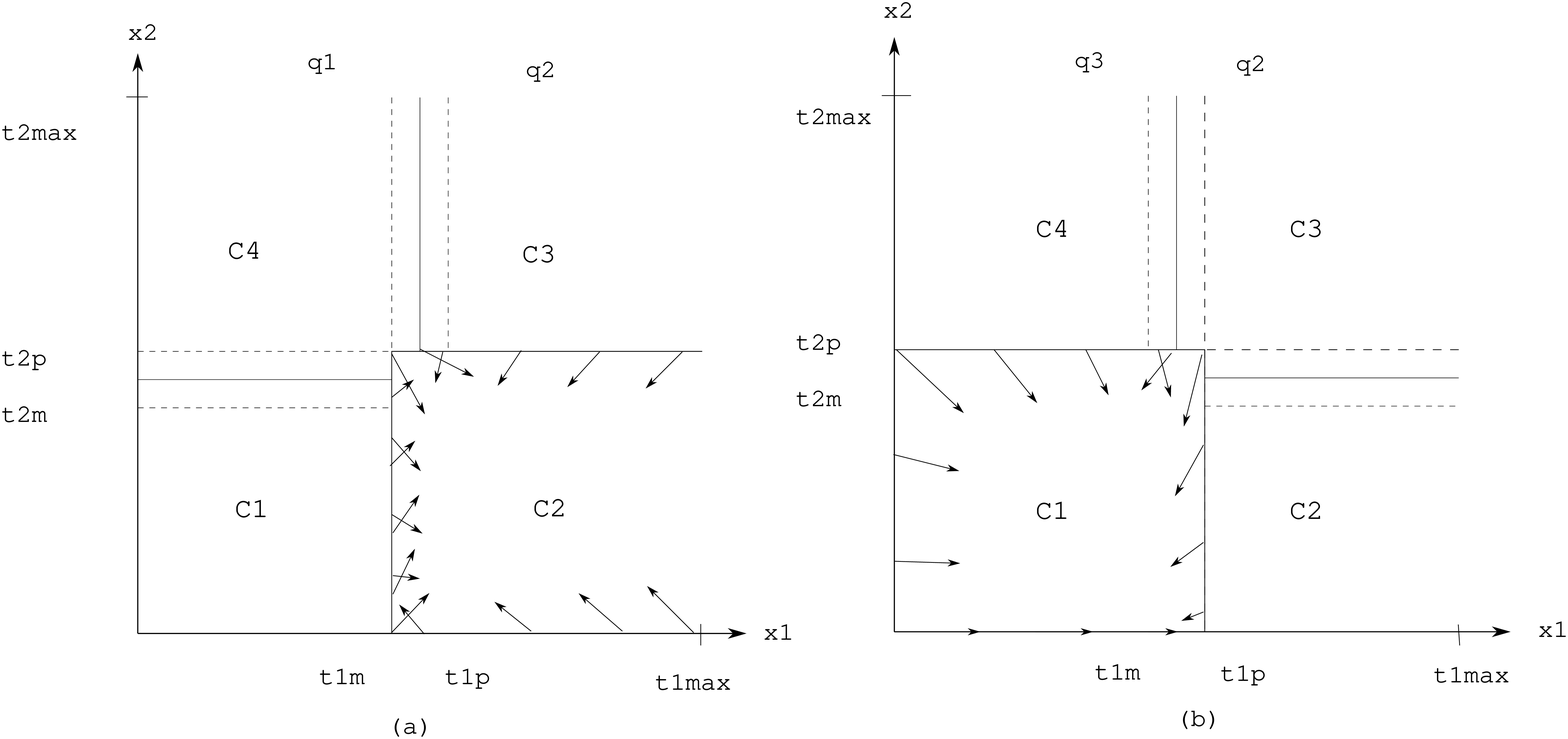}
\caption{\label{fig:c1c2case3}\emph{when $\theta_1-h_1<\frac{k_1}{\gamma_1}<\theta_1+h_1$,$0<\frac{k_2}{\gamma_2}<\theta_2+h_2$: (a) the vector field at the boundaries of $C_{2}$, (b) the vector field at the boundaries of $C_{1}$.}}
\end{figure}

For an initial point $z(0,0)\in C_2$,   we check the vector fields on the boundaries of $C_2$ (see Figure \ref{fig:c1c2case3}(a)).
\begin{itemize}
\item
 When $x\in\{x: x_1=\theta_1-h_1, 0\leq x_2\leq\theta_2+h_2\}$, $f_{2}(x)=\left[\begin{array}{cc}k_1-\gamma_1(\theta_1-h_1)\\k_2-\gamma_2x_2\end{array}\right]$. Since $\theta_1-h_1<\frac{k_1}{\gamma_1}<\theta_1+h_1$,$0<\frac{k_2}{\gamma_2}<\theta_2+h_2$, we have that $f_2(x)$ points inside of $C_2$. 
\item
When $x\in\{x: x_1\geq\theta_1-h_1, x_2=\theta_2+h_2\}$, $f_2(x)=\left[\begin{array}{cc}k_1-\gamma_1x_1\\k_2-\gamma_2(\theta_2+h_2)\end{array}\right]$. Since $\theta_1-h_1<\frac{k_1}{\gamma_1}<\theta_1+h_1$, $0<\frac{k_2}{\gamma_2}<\theta_2+h_2$, we have that $f_2(x)$ points inside of $C_2$. 
\item
When $x\in\{x: x_1\geq\theta_1-h_1, x_2=0\}$, $f_2(x)=\left[\begin{array}{cc}k_1-\gamma_1x_1\\k_2\end{array}\right]$. Since $\theta_1-h_1<\frac{k_1}{\gamma_1}<\theta_1+h_1$,$k_2>0$, we have that $f_2(x)$ points inside of $C_2$. 
\end{itemize}
Then, once a trajectory enters or starts from $C_2,$ it will stay or never leave $C_2$
Then, the set $C_2$ is forward invariant. Since the equilibrium point $z_1^*$ belongs to $C_2$, every trajectory reaching or starting from $C_2$ converges to $z_1^*$.

For $z(0, 0)\in C_1$, we check the vector fields on the boundaries of $C_1$ (see Figure \ref{fig:c1c2case3}(b)).
\begin{itemize}
\item
When $x\in\{x: x_1=\theta_1+h_1, 0\leq x_2\leq\theta_2+h_2\}$, $f_1(x)=\left[\begin{array}{cc}k_1-\gamma_1(\theta_1+h_1)\\-\gamma_2x_2\end{array}\right]$. Since $\theta_1-h_1<\frac{k_1}{\gamma_1}<\theta_1+h_1$, we have that $f_1(x)$ points inside $C_1$.
\item
When $x\in\{x: 0\leq x_1\leq\theta_1+h_1, x_2=\theta_2+h_2\}$, $f_1(x)=\left[\begin{array}{cc}k_1-\gamma_1x_1\\-\gamma_2(\theta_2+h_2)\end{array}\right]$. Since $\theta_1-h_1<\frac{k_1}{\gamma_1}<\theta_1+h_1$, we have that $f_1(x)$ points inside $C_1$.
\item
When $x\in\{x: 0\leq x_1\leq\theta_1+h_1, x_2=0\}$, $f_1(x)=\left[\begin{array}{cc}k_1-\gamma_1(\theta_1+h_1)\\0\end{array}\right]$. $f_1(x)$ is tangent to the boundary of $C_1$.
\item
When $x\in\{x: x_1=0, 0\leq x_2\leq\theta_2+h_2\}$, $f_1(x)=\left[\begin{array}{cc}k_1\\-\gamma_2x_2\end{array}\right]$. Since $k_1>0$, we have that $f_1(x)$ points inside $C_1$.
\end{itemize}
Then, once a trajectory enters or starts from $C_1,$ it will stay or never leave $C_1.$
Then, the set $C_1$ is forward invariant. Since the equilibrium point $z_2^*$ belongs to $C_1$, every trajectory reaching or starting from $C_1$ converges to $z_2^*$.

For initial points in $C_4$, the vector field of the boundary at $C_4$ is shown in Figure \ref{fig:V14} (b). When $x\in\{x: 0<x_1<\theta_1+h_1, x_2=\theta_2-h_2\}$, $f_4(x)=\left[\begin{array}{cc}-\gamma_1x_1\\-\gamma_2(\theta_2-h_2)\end{array}\right]$. Then, we have that $f_4(x)$ points outside $C_4$ and every solution leaves $C_4$ by jumping into $C_1$. Then, for every initial condition $z(0, 0)\in C_4$, solutions will reach $C_1$ in finite time. As $z^*_2\in C_1$, so the trajectory will stay in $C_1$ and converge to $z_2^*$.

If $z(0, 0)\in C_3$, the vector field at the boundary of $C_3$ is similar as that shown in Figure~\ref{fig:V3}.
\begin{enumerate}
\item
If the parameters are in the range of $\theta_1-h_1<\frac{k_1}{\gamma_1}<\theta_1+h_1, 0<\frac{k_2}{\gamma_2}<\theta_2-h_2$ (see similar case shown in Figure \ref{fig:V3}(a)), when $x\in\{x : \theta_1-h_1<x_1<\theta_1^{max}, x_2=\theta_2-h_2\},$ $f_3(x)=\left[\begin{array}{cc}-\gamma_1x_1\\k_2-\gamma_2(\theta_2-h_2)\end{array}\right]$. We have that $f_3(x)$ points outside of $C_3$. When $x\in\{x : x_1=\theta_1-h_1, x_2\geq\theta_2-h_2\},$ we have $f_3(x)=\left[\begin{array}{cc}-\gamma_1(\theta_1-h_1)\\k_2-\gamma_2x_2\end{array}\right],$ which points outside $C_3.$ Depending on which jump set the solution hits, two possibilities of the equilibrium points exist. 
\begin{itemize}
\item 
If the trajectory hits the set $x\in\{x: x_1=\theta_1-h_1, x_2>\theta_2-h_2\}$, which leads to an update of $q_1,$ the solution will jump into $C_4.$ For this case, the trajectory will reach $C_1$ in finite time and converge to $z^*_2.$
\item
If the trajectory hits the set $x\in\{x: x_1>\theta_1-h_1, x_2=\theta_2-h_2\},$ $q_2$ is updated to 0 from 1, the trajectory will enter $C_2,$ and converge to the equilibrium point $z^*_1$, which is inside of $C_2.$ 
\end{itemize}
\item
If the parameters are in the range of $\theta_1-h_1<\frac{k_1}{\gamma_1}<\theta_1+h_1, \theta_2-h_2<\frac{k_2}{\gamma_2}<\theta_2+h_2$ (see similar case shown in Figure \ref{fig:V3}(b)), when $x\in\{x: x_1>\theta_1+h_1, x_2=\theta_2-h_2\},$ $f_3(x)=\left[\begin{array}{cc}-\gamma_1x_1\\k_2-\gamma_2(\theta_2-h_2)\end{array}\right]$. We have that $f_3(x)$ points inside of $C_3$. When $x\in\{x: x_1=\theta_1-h_1, x_2\geq\theta_2-h_2\},$ we have $f_3(x)=\left[\begin{array}{cc}-\gamma_1(\theta_1-h_1)\\k_2-\gamma_2x_2\end{array}\right],$ which points outside of $C_3.$ Thus, when the trajectory hits the set $\{x: x_1=\theta_1-h_1, x_2>\theta_2-h_2\}$, a jump will occur, the solution will jump into $C_4$ and finally enter $C_1$ and converge to $z^*_2.$
\end{enumerate}

}
}

\end{document}